\newcommand{\Cov}{\mathrm{cov}}
\DeclareMathOperator{\trace}{tr}
\DeclareMathOperator{\diag}{diag}
\newtheorem{theorem}{Theorem}
\newtheorem{lemma}{Lemma}
\newtheorem{corollary}{Corollary}
\newtheorem{remark}{Remark}
\newtheorem{problem}{Problem}
\newtheorem{assumption}{Assumption}
\newtheorem{proposition}{Proposition}
\newcommand{\T}{^{\mbox{\tiny T}}}
\newcommand{\sr}{\stackrel}
\newcommand{\rar}{\rightarrow}
\newcommand{\tri}{\sr{\triangle}{=}}
\newcommand{\be}{\begin{equation}}
\newcommand{\ee}{\end{equation}}
\newcommand{\bea}{\begin{eqnarray}}
\newcommand{\eea}{\end{eqnarray}}
\newcommand{\bes}{\begin{eqnarray*}}
\newcommand{\ees}{\end{eqnarray*}}
\newcommand{\beae}{\begin{IEEEeqnarray}{rCl}}
\newcommand{\eeae}{\end{IEEEeqnarray}}
\newcommand{\bi}{\begin{itemize}}
\newcommand{\ei}{\end{itemize}}
\newcommand{\ben}{\begin{enumerate}}
\newcommand{\een}{\end{enumerate}}
\newcommand{\bp}{\begin{problem}}
\newcommand{\ep}{\end{problem}}
\newcommand{\hso}{\hspace{.1in}}
\newcommand{\hst}{\hspace{.2in}}
\newcommand{\noi}{\noindent}
\begin{document}
\title{Structural Properties of Optimal Test Channels for Distributed Source Coding with Decoder Side Information  for  Multivariate Gaussian Sources with Square-Error Fidelity}

\author{
\IEEEauthorblockN{
  Michail Gkagkos\IEEEauthorrefmark{1}
   and
     Charalambos D. Charalambous\IEEEauthorrefmark{2}\\
   \IEEEauthorblockA{
    \IEEEauthorrefmark {1}Department of Electrical and Computer Engineering\\
     Texas A\&M University, College Station, Texas \\
    Email: gkagkos@tamu.edu}\\
   \IEEEauthorblockA{
     \IEEEauthorrefmark{2}Department of Electrical and Computer Engineering,\\
      University of Cyprus, Nicosia, Cyprus\\
    Email:  chadcha@ucy.ac.cy}
}}

\maketitle

\begin{abstract}
This paper focuses on the structural properties of test channels, of   Wyner's \cite{wyner1978}   operational information rate distortion function (RDF),  $\overline{R}(\Delta_X)$,  of   a tuple of multivariate correlated,  jointly independent and identically distributed  Gaussian random variables (RVs),  $\{X_t, Y_t\}_{t=1}^\infty$, $X_t: \Omega  \rar {\mathbb R}^{n_x}$, $Y_t: \Omega \rar {\mathbb R}^{n_y}$, with average mean-square error at the decoder, $\frac{1}{n} {\bf E}\sum_{t=1}^n||X_t - \widehat{X}_t||^2\leq \Delta_X$,    when $\{Y_t\}_{t=1}^\infty$ is the side information available to the decoder only.   We construct optimal test channel realizations, which achieve the informational  RDF, $\overline{R}(\Delta_X) \tri\inf_{{\cal M}(\Delta_X)} I(X;Z|Y)$, where ${\cal M}(\Delta_X)$  is the set of auxiliary RVs $Z$  such that,  ${\bf P}_{Z|X,Y}={\bf P}_{Z|X}$,  $\widehat{X}=f(Y,Z)$, and ${\bf E}\{||X-\widehat{X}||^2\}\leq \Delta_X$. We show the fundamental structural properties: (1) Optimal test channel realizations that achieve the RDF, $\overline{R}(\Delta_X)$,  satisfy conditional independence, 
\begin{align}
 {\bf P}_{X|\widehat{X}, Y, Z}={\bf P}_{X|\widehat{X},Y}={\bf P}_{X|\widehat{X}}, \hst  {\bf E}\Big\{X\Big|\widehat{X}, Y, Z\Big\}={\bf E}\Big\{X\Big|\widehat{X}\Big\}=\widehat{X}
\end{align} 
and     (2)  similarly for the conditional RDF, ${R}_{X|Y}(\Delta_X) \tri \inf_{{\bf P}_{\widehat{X}|X,Y}:{\bf E}\{||X-\widehat{X}||^2\} \leq \Delta_X}  I(X; \widehat{X}|Y)$, when $\{Y_t\}_{t=1}^\infty$ is  available to both the encoder and  decoder, and the equality $\overline{R}(\Delta_X)={R}_{X|Y}(\Delta_X)$.

This paper also shows  that  the optimal test channel realization of  the RDF of  
the distributed remote source coding problem,  
\cite[Theorem~4]{tian-chen2009} and \cite[Theorem~3A]{Zahedi-Ostegraard-2014}  (a  noisy version of Wyner's $\overline{R}(\Delta_X)$), when specialized to Wyner's RDF $\overline{R}(\Delta_X)$,  do not generate  Wyner's value of  this   RDF of scalar RVs. 

\end{abstract}

\section{Introduction, Problem Statement and Main Results}
\subsection{Wyner \cite{wyner1978} and Wyner and Ziv \cite{wyner-ziv1976} Lossy Compression Problem and Generalizations}
Wyner and Ziv \cite{wyner-ziv1976} derived an operational information definition for the lossy compression problem of Fig.~\ref{fg:blockdiagram} with respect to a single-letter fidelity of reconstruction, when the  joint  sequence of random variables (RVs) $\{(X_{t}, Y_{t}): t=1,2, \dots\}$ takes  values in  sets  of finite cardinality, $\{\cal{X},\cal{Y}\}$, and it is   generated independently according to the joint probability distribution function ${\bf P}_{X, Y}$.  Wyner \cite{wyner1978} generalized \cite{wyner-ziv1976}  to RVs $\{(X_{t}, Y_{t}): t=1,2, \dots\}$ that takes values in  abstract alphabet spaces $\{\cal{X},\cal{Y}\}$, and hence   include continuous-valued RVs. 
\begin{figure}
\centering
  \includegraphics[width=0.75\textwidth]{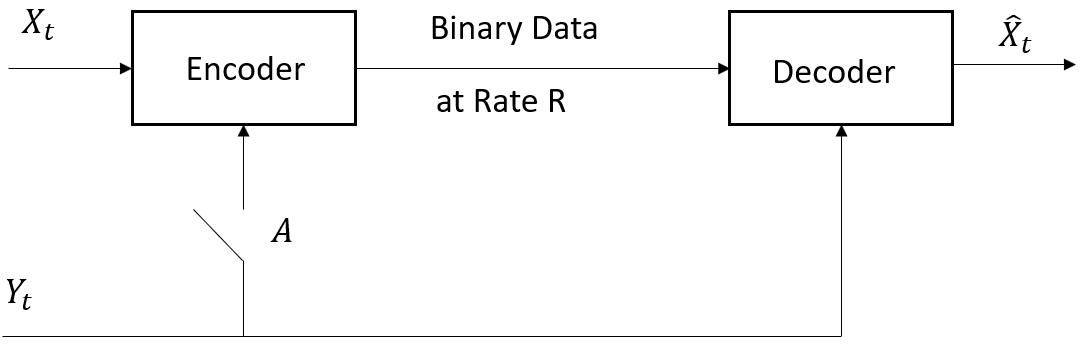}
 \caption{The Wyner and Ziv \cite{wyner-ziv1976} block diagram of lossy compression. If switch A is closed then  the side information is   available at both  the encoder and the decoder; if switch A is open the side information is  only available at the decoder.}
  \label{fg:blockdiagram} 
\end{figure}

{\it (A) Switch A Closed.} When   the side information $\{Y_{t}: t=1,2, \dots\}$ is available, noncausally,  at both  the encoder and the decoder,   Wyner \cite{wyner1978} (see also Berger \cite{berger:1971})  characterized  the infimum of all achievable operational rates (denoted by $\overline{R}_1(\Delta_X)$ in \cite{wyner1978}),  subject to a single-letter fidelity with average distortion less than or equal to $\Delta_X \in [0,\infty)$, by the single-letter operational information theoretic conditional rate distortion function (RDF):
\begin{align}
{R}_{X|Y}(\Delta_X) \tri& \inf_{{\cal M}_0(\Delta_X)}  I(X; \widehat{X}|Y), \hso \Delta_X \in [0,\infty)  \label{eq:OP1ck1}\\
=& \inf_{{\bf P}_{\widehat{X}|X,Y}:{\bf E}\big\{d_X(X, \widehat{X})\big\}\leq \Delta_X}  I(X; \widehat{X}|Y) \label{eq:OP1ck1_new}
\end{align}
where ${\cal M}_0(\Delta_X)$ is specified by the set 
\begin{align}
{\cal M}_0(\Delta_X)\tri & \Big\{ \widehat{X}: \Omega \rar \widehat{\cal X} : \hso  {\bf E}\Big\{d_X(X, \widehat{X})\Big\}\leq \Delta_X \Big\} \label{eq:OP2ck1}
\end{align}
and where $\widehat{X}$ is the reproduction of $X$, $I(X; \widehat{X}|Y)$ is the conditional mutual information between $X$ and $\widehat{X}$ conditioned on $Y$,  and $d_X(\cdot, \cdot)$ is the fidelity criterion between $x$ and $\widehat{x}$.  The infimum  in (\ref{eq:OP1ck1}) is over all joint distributions ${\bf P}_{X,Y, \widehat{X}}$ such that the marginal distribution ${\bf P}_{X,Y}$ is the joint distribution of the source $(X, Y)$. 

{\it (B) Switch A Open.} When   the side information $\{Y_{t}: t=1,2, \dots\}$ is available, noncausally,  only at the decoder, 
 Wyner \cite{wyner1978} characterized the infimum of all achievable operational rates (denoted by $R^*(\Delta_X)$ in \cite{wyner1978}),  subject to a single-letter fidelity with average distortion less than or equal to $\Delta_X$, by the  single-letter  operational information theoretic  RDF, as a function of  an auxiliary RV $Z: \Omega \rar {\cal Z}$:
\begin{align}
\overline{R}(\Delta_X) \tri& \inf_{ {\cal M}(\Delta_X)} \Big\{I(X;Z)-I(Y;Z)\Big\}, \; \Delta_X \in [0,\infty) \label{rdf_d1_a}\\
 =&\inf_{{\cal M}(\Delta_X)} I(X; Z|Y) \label{rdf_d1}
\end{align}
where  ${\cal M}(\Delta_X)$ is specified by the set of auxiliary RVs $Z$, 
\begin{align}
{\cal M}(\Delta_X)\tri& \Big\{ Z: \Omega \rar {\cal Z} : \hso {\bf P}_{Z|X,Y}={\bf P}_{Z|X},    \hso \exists \: \mbox{measurable function $f: {\cal Y}\times {\cal Z} \rar \widehat{\cal X}$}, \; \widehat{X}=f(Y,Z), \nonumber \\
&{\bf E}\big\{d_X(X, \widehat{X})\big\}\leq \Delta_X \Big\}. \label{rdf_d2}
\end{align}
Wyner's   realization of the joint  measure  ${\bf P}_{X, Y, Z, \widehat{X}}$ induced by the RVs $(X, Y, Z, \widehat{X})$,  is  illustrated in 
 Fig.~\ref{fg:onlydecoder}, where  $Z$ is the output of the ``test channel'', ${\bf P}_{Z|X}$. 

{\it Special Case of Switch A Open with Causal Side Information.} When   the side information $\{Y_{t}: t=1,2, \dots\}$ is causally available,  only at the decoder, it follows from \cite{wyner1978}, that the infimum of all achievable operational rates,  denoted by $R^{*,CSI}(\Delta_X)$,  is characterized by a degenerate version of $\overline{R}(\Delta_X)$, given by   
\begin{align}
\overline{R}^{CSI}(\Delta_X) \tri& \inf_{ {\cal M}(\Delta_X)}I(X;Z), \; \Delta_X \in [0,\infty). \label{rdf_d1_a_csi}
\end{align}

  Throughout \cite{wyner1978} the following  assumption   is imposed.

\begin{assumption}
\label{ass_1}
$
I(X;Y)<\infty $ (see \cite{wyner1978}).
\end{assumption}

For scalar-valued RVs $(X, Y, \widehat{X},Z)$ with square-error distortion, Wyner \cite{wyner1978} constructed the optimal realizations $\widehat{X}$ and $\widehat{X}=f(X, Z)$ that achieve the characterizations of the RDFs $R_{X|Y}(\Delta_X)$ and $\overline{R}(\Delta_X)$, respectively, and showed $\overline{R}(\Delta_X)= R_{X|Y}(\Delta_X)$. 

The main objective of this paper is to generalize Wyner's \cite{wyner1978}  optimal realizations $\widehat{X}$ and $\widehat{X}=f(X, Z)$ that achieve the  RDFs $R_{X|Y}(\Delta_X)$ and $\overline{R}(\Delta_X)$,  to  multivariate-valued RVs $(X, Y, \widehat{X},Z)$, and to show   $\overline{R}(\Delta_X)= R_{X|Y}(\Delta_X)$. Our main contribution  lies in the derivation of  structural properties of the optimal test channels that achieve the RDFs, and their realizations. Further, these  structural properties are indispensable in other problems of rate distortion theory. In particular, it is verified (see Remark~\ref{comment}) that the optimal realization of the test channel that achieves the RDF of the remote source coding problem\footnote{The RDF of the remote sensor problem is a generalization Wyner's RDF  $\overline{R}(\Delta_X)$, with the encoder observing a noisy version of the RVs generated by the source.}, given  in  \cite[Theorem~4 and Abstract]{tian-chen2009} and \cite[Theorem~3A]{Zahedi-Ostegraard-2014}, when specilized to scalar RVs, and  Wyner's RDF     $\overline{R}(\Delta_X)$, do not generate Wyner's value of the  RDF $\overline{R}(\Delta_X)$ and optimal test channel realization that achieves it, contrary to the current belief,  i.e.,  \cite{tian-chen2009} and \cite[Theorem~3A]{Zahedi-Ostegraard-2014}.
  The remote sensor problem\footnote{Remark~\ref{comment} implies that, for multivariate-valued Gaussian RVs,  the characterization of the RDF of the remote sensor problem and the test channel that achieves it, are currently  not known.}  is introduced by Draper and Wornell in \cite{draper-wornell2004}. 

{\it (C) Marginal RDF.}  If there is no  side information $\{Y_t: t=1, \ldots\}$, or the side information is  independent of the source $\{X_t: t=1, \ldots\}$, the RDFs ${R}_{X|Y}(\Delta_X), \overline{R}(\Delta_X)$ degenerate to   the marginal  RDF  $R_{X}(\Delta_{X})$,   defined by 
 \begin{align}
{R}_{X}(\Delta_X) \tri  \inf_{{\bf P}_{\widehat{X}|X}:{\bf E}\big\{d_X(X, \widehat{X})\big\}\leq \Delta_X }  I(X; \widehat{X}), \hso \Delta_X \in [0,\infty).  \label{eq:cl}
\end{align}

{\it (D) Gray's Lower Bounds.} Related to the RDF $R_{X|Y}(\Delta_X)$ is Gray's  \cite[Theorem~3.1]{gray1973} lower bound,  
\bea
R_{X|Y}(\Delta_X)\geq R_{X}(\Delta_X)-I(X;Y). \label{gray_lb_1}
\eea

{\it (E) The Draper and Wornell \cite{draper-wornell2004} Distributed Remote Source Coding Problem.}  Draper and Wornell \cite{draper-wornell2004} generalized the RDF $\overline{R}(\Delta_X)$, when the source  to be estimated at the decoder is  $S: \Omega \rar {\cal S}$, and it  is not directly   observed at the encoder. Rather, the encoder observes a RV $X:\Omega \rar {\cal X} $ (which is correlated with $S$), while  the decoder observes another RV, as side information,  $Y:\Omega \rar {\cal Y}$, which provides information on $(S,X)$.  The aim is to reconstruct $S$ at the decoder by $\widehat{S}: \Omega \rar \widehat{\cal S} $, subject to an average distortion ${\bf E}\{d_S(S,\widehat{S})\}\leq \Delta_S$, by a function   $\widehat{S}=f(Y,Z)$. \\
The RDF for this  problem, called distributed remote source coding problem,    is defined by \cite{draper-wornell2004}
 \begin{align}
\overline{R}^{PO}(\Delta_S)
 =&\inf_{{\cal M}^{PO}(\Delta_S)} I(X; Z|Y) \label{rdf_po_1}
\end{align}
where  ${\cal M}^{PO}(\Delta_S)$ is specified by the set of auxiliary RVs $Z$, 
\begin{align}
{\cal M}^{PO}(\Delta_S)\tri& \Big\{ Z: \Omega \rar {\cal Z} : \hso {\bf P}_{Z|S, X,Y}={\bf P}_{Z|X},    \hso \exists \: \mbox{measurable function $f^{PO}: {\cal Y}\times {\cal Z} \rar \widehat{\cal S}$}, \nonumber \\
&\widehat{S}=f^{PO}(Y,Z), \hso {\bf E}\big\{d_S(S, \widehat{S})\big\}\leq \Delta_S \Big\}. \label{rdf_po_2}
\end{align}
It should be mentioned that  if  $S=X-$a.s (almost surely) then  $\overline{R}^{PO}(\Delta_S)$ degenerates\footnote{This implies the optimal test channel that achieves  the characterization of the RDF $\overline{R}^{PO}(\Delta_S)$ should degenerate to the optimal test channel that achieves the characterization of the RDF $\overline{R}(\Delta_X)$.} to $\overline{R}(\Delta_X)$. \\
For scalar-valued jointly Gaussian RVs $(S, X, Y, Z, \widehat{X})$ with square-error distortion,  Draper and Wornell \cite[eqn(3) and Appendix~A]{draper-wornell2004} derived the characterization of the RDF $\overline{R}^{PO}(\Delta_S)$, and constructed   the optimal realization $\widehat{S}=f^{PO}(Y,Z)$ that achieves this characterization. 

In  \cite{tian-chen2009}  the authors 
 investigated the RDF $\overline{R}^{PO}(\Delta_S)$ for 
the    multivariate jointly Gaussian RVs $(S, X, Y, Z, \widehat{X})$, with square-error distortion, and derived 
a characterization for the RDF $\overline{R}^{PO}(\Delta_S)$ in  \cite[Theorem~4]{tian-chen2009} and \cite[Theorem~3A]{Zahedi-Ostegraard-2014} (see \cite[eqn(26)]{Zahedi-Ostegraard-2014}). However, as it apparent in Remark~\ref{comment}, when $S=X-$almost surely, and hence  $\overline{R}^{PO}(\Delta_S)=\overline{R}(\Delta_X)$, the optimal test channel realizations  that are  used to derive  \cite[Theorem~4]{tian-chen2009} and  \cite[Theorem~3A]{Zahedi-Ostegraard-2014}, when substituted into    the RDF $\overline{R}(\Delta_X)$, i.e.,   (\ref{rdf_d1_a}), do not produce  Wyner's  \cite{wyner1978} value of the the RDF,  test channel realization  (and also do not  produce the known RDF and test channel realization of memoryless sources).
 This observation is sufficient to raise 
 concerns regarding  the validity of the water-filling solution given in  \cite[Theorem~4]{tian-chen2009} and \cite[Theorem~3A]{Zahedi-Ostegraard-2014}.

\begin{figure}
\centering 
 \includegraphics[width=0.65\textwidth]{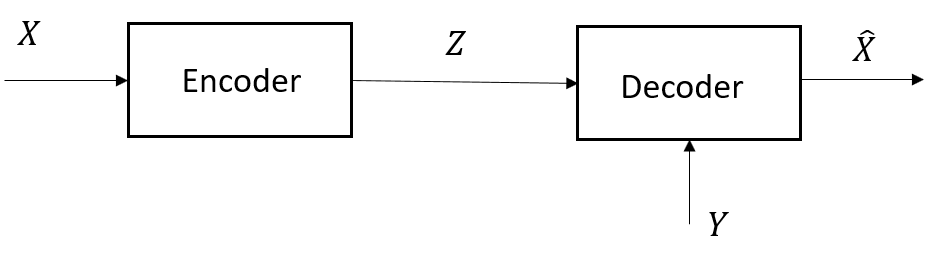}
 \caption{Test channel when side information is only available to the  decoder}
\label{fg:onlydecoder}
\end{figure}

\subsection{Problem Statement and Main Contributions}
\par In this paper we consider      a tuple of jointly independent and identically distributed multivariate Gaussian random
variables (RVs) $(X^n, Y^n)= \{(X_{t}, Y_{t}): t=1,2, \ldots,n\}$, with respect to the square-error fidelity, as defined below. 
\begin{align}
&X_{t} : \Omega \rightarrow {\mathbb R}^{n_x}= {\cal X}, \;  Y_{t} : \Omega \rightarrow {\mathbb R}^{n_y}={\cal Y},\; t=1,2, \ldots, n,\label{prob_1} \\
& X_t \in N(0, Q_X), \hst  Y_t \in N(0, Q_Y),  \label{prob_2}\\
&Q_{(X_t, Y_t)} =  {\mathbb E} \Big\{ \left[ \begin{array}{c} X_t \\ Y_t  \end{array} \right]  \left[ \begin{array}{c} X_t \\ Y_t  \end{array} \right]^T \Big\}=  \left[ \begin{array}{cc} Q_X & Q_{X,Y} \\ Q_{X,Y}^T & Q_Y \end{array} \right], \\
&{\bf P}_{X_t, Y_t}={\bf P}_{X,Y} \hso \mbox{multivariate Gaussian distribution}, 
\label{prob_7}\\
& \widehat{X}_t: \Omega  \rar {\mathbb R}^{n_x}= {\cal  X}, \hso 
 t = 1,2 \ldots, n,\label{prob_8} \\
& D_{X} (x^n, \widehat{x}^n)= \frac{1}{n} \sum_{t=1}^n ||x_{t}-\widehat{x}_{t}||_{{\mathbb R}^{n_x}}^2\label{prob_4}
\end{align}
where $n_x, n_y$ are arbitrary positive integers, $I_{n_y}$ is the $n_y \times n_y$ diagonal matrix,  $X \in N(0,Q_X)$ means $X$ is a Gaussian RV,  with zero mean and covariance matrix $Q_X$,   and $||\cdot||_{{\mathbb R}^{n_x}}^2$ is the  Euclidean distance on ${\mathbb R}^{n_x}$.\\
To give additional insight we often consider the following  realization of side information\footnote{The condition $DD^T \succ 0$ ensures $I(X;Y)<\infty$, and hence Assumption~\ref{ass_1} is respected.}. 
\begin{align}
&Y_t = CX_t + DV_t,      \label{eq:sideInfo}\\
&V_t \in N(0, Q_V),\label{prob_3} \\
&C\in \mathbb{R}^{n_y\times n_x}, \hso  D \in \mathbb{R}^{n_y\times n_y}, \hso D D^T \succ 0,  \hso Q_V=I_{n_y} \label{prob_9_a}\\
&V^n \hso \mbox{independent of $X^n$}.\label{prob_9}
\end{align}
For the above specification of the source and distortion criterion,  we  derive the following results.
 
\begin{enumerate}
\item  {\bf  Theorem~\ref{thm_rw}, Fig.~\ref{fg:realization}.} Structural properties   of  optimal realization  of  $\widehat{X}$  that achieves the RDFs, $R_{X|Y}(\Delta_X)$,  closed form expression for $R_{X|Y}(\Delta_X)$.

\item {\bf  Theorem~\ref{thm:dec}.} Structural properties   of  optimal realization  of  $\widehat{X}$ and $\widehat{X}=f(Y,Z)$ that achieve the RDF,  $\overline{R}(\Delta_X)$,   and  closed form expressions for  $\overline{R}(\Delta_X)$.

\item  A proof that $\overline{R}(\Delta_X)$ and $R_{X|Y}(\Delta_X)$ coincide, calculation of positive surface such that Gray's lower bound (\ref{gray_lb_1}) holds with equality, and a proof  that the optimal test channel realization for the remote sensor problem,   that is used to derive  \cite[Theorem~4]{tian-chen2009} is incorrect (Remark~\ref{comment}).
\end{enumerate}

In  Remark~\ref{rem-wyner},  we consider the tuple of scalar-valued, jointly Gaussian RVs $(X, Y)$,  with square error distortion function, and verify that our optimal realizations  of  $\widehat{X}$ and closed form expressions for $R_{X|Y}(\Delta_X)$ and $\overline{R}(\Delta_X)$ are identical to  Wyner's \cite{wyner1978} realizations  and RDFs.

We emphasize that  past literature often deals with the calculation of RDFs using optimization techniques, without much emphasis on the  structural properties of the realizations of the test channels, that achieve the characterizations of the RDFs. Because of this, often the optimization problems appear intractable, while closed form solutions are rare. It will be indefensible to claim that solving an optimization problem of a  RDF, without specifying the realization of the optimal test channel that achieves the  value of the RDF,   fully characterizes the RDF. 
As demonstrated   by Wyner \cite{wyner1978} for a tuple of scalar jointly Gaussian RVs $(X,Y)$ with square-error distortion criterion,  the identity $\overline{R}(\Delta_X)={R}_{X|Y}(\Delta_X)$ holds, because the realizations that achieve these RDFs are explicitly constructed. 
Although, in the current paper the emphasis is on 1), 2) above, our derivations are generic and bring new insight into the construction of optimal test channels for  other distributed source coding problems.




\subsection{Additional Literature Review}
The  formulation of Fig.~\ref{fg:blockdiagram} is generalized to  other multiterminal or distributed  lossy compression problems, 
 such as, relay networks, sensor networks   etc.,
 under various code formulations and  assumptions.
 Oohama \cite{Oohama1997} analyzed the lossy compression problems for a tuple of scalar  correlated  Gaussian memoryless  sources with square error distortion criterion, and
determined the rate-distortion region,  in the special case when one
source provides partial side information to the other source. 
 Oohama \cite{Oohama2005} analyzed the separate lossy compression 
problem for $L+1$ scalar correlated Gaussian memoryless sources, when $L$ act as  sources partial side information
at the decoder for the reconstruction of the remaining source, and gave a partial answer to the  rate distortion region. Oohama \cite{Oohama2005} also proved that his problem gives as a special case the additive white Gaussian CEO problem analyzed by Viswanathan and Berger \cite{ViswanathanCEO1997}. In addition, Ekrem and Ulukus \cite{SUlukus2012} and Wang and Chen \cite{JunChen2014} expanded Oohama\cite{Oohama2005} main results, by  deriving  an outer bound on  the rate region of the vector Gaussian multiterminal source. 
%
The vast literature  on multiterminal or distributed lossy compression of jointly Gaussian sources with square-error distortion (mentioned above), is often confined to a tuple of correlated  RVs $X: \Omega \rar {\mathbb R}, Y: \Omega \rar {\mathbb R}$. The above literature treats the optimization problems of RDFs without much emphasis on the structural properties of the  optimal test channels that achieve the characterizations of the RDFs. 

\subsection{Main Theorems of the Paper}
The characterizations of the RDFs $R_{X|Y}(\Delta_X)$ and $\overline{R}(\Delta_X)$ are  encapsulated in  Theorem~\ref{thm_rw} and Theorem~\ref{thm:dec}, stated below. These theorems include,  structural properties of optimal test channels or realizations of $\widehat{X}$, that induce joint  distributions, which  achieve the RDFs, and closed form expressions of the RDFs based on a water-filling. The realization of the optimal test channel of $R_{X|Y}(\Delta_X)$ is shown in Fig.~\ref{fg:realization}.

First, we introduce some notation. We denote  the  covariance of $X$ and $Y$ by   
\begin{align}
 Q_{X,Y} \tri\Cov\Big(X,Y\Big). 
\end{align}
We denote  the  covariance of $X$ conditioned on  $Y$ by,  
\begin{align}
Q_{X|Y} & \tri  \nonumber \Cov(X,X |Y)\\
& =  {\bf  E} \Big\{  \Big(X - {\bf E}\big(X\Big|Y\big) \Big)  \Big(X - {\bf E}\big(X\Big|Y\big) \Big)\T \Big\} \hso \mbox{if $(X, Y)$ is jointly Gaussian.}
\end{align}
where the second equality is due to a property of jointly Gaussian RVs. 

The first  theorem gives the optimal test channel that achieves the characterization of the RDF $R_{X|Y}(\Delta_X)$, and its water-filling representation. 

\begin{theorem}
Characterization and water-filling solution of $R_{X|Y}(\Delta_X)$\\ 
\label{thm_rw}
Consider the RDF $R_{X|Y}(\Delta_X)$ defined by  (\ref{eq:OP1ck1}), for the multivariate Gaussian source  with mean-square error distortion defined by (\ref{prob_1})-(\ref{prob_9}).\\
Then the following hold.

(a) The optimal  realization $\widehat{X}$  that achieves $R_{X|Y}(\Delta_X)$ is represented
 by 
\begin{align}
\widehat{X} &= H X + \Big(I_{n_x}- H\Big)Q_{X,Y}Q_Y^{-1} Y +  W \label{eq:realization_sp} \\
&= H X + \Big(I_{n_x}- H\Big)Q_{X,Y}Q_Y^{-1} Y + H \Psi \label{eq:realization_sp_new} 
\end{align}
where
   \begin{align}
 &H  \tri I_{n_x} - \Sigma_{\Delta} Q_{X|Y}^{-1}=I_{n_x} - Q_{X|Y}^{-1} \Sigma_{\Delta}=H\T \succeq 0,  \label{eq:realization_nn_1_sp}  \\
 &W = H \Psi, \hso \Psi \in N(0, Q_\Psi), \hso Q_\Psi \tri \Sigma_\Delta H^{-1}=H^{-1} \Sigma_\Delta, \\
   & Q_W \tri H \Sigma_\Delta=  \Sigma_\Delta -\Sigma_\Delta Q_{X|Y}^{-1} \Sigma_\Delta= \Sigma_{\Delta}  H \succeq 0,\\
  & \Sigma_{\Delta} \tri  {\bf E} \Big\{ \Big( X - \widehat{X} \Big) \Big(X - \widehat{X} \Big)\T \Big\}, \\
 &Q_{X|Y}= Q_X -Q_{X,Y} Q_Y^{-1} Q_{X,Y}\T, \ \ Q_{X,Y}= Q_X C\T, \ \ Q_Y= C Q_X C\T + D  D\T    \label{eq:realization_nn_1_sp_new}
\end{align}
Moreover, the following structural properties hold:\\
(1)  The optimal test channel satisfies  
\begin{align}
&(i) \hso {\bf P}_{X|\widehat{X}, Y}={\bf P}_{X|\widehat{X}}, \label{pr_1} \\
&(ii)    \hso {\bf E}\Big\{X\Big|\widehat{X}, Y\Big\}={\bf E}\Big\{X\Big|\widehat{X}\Big\}=\widehat{X} \hso \Longrightarrow \hso {\bf E}\Big\{X\Big|Y\Big\}= {\bf E}\Big\{\widehat{X}\Big|Y\Big\}. \label{pr_1_a}
\end{align} 
(2) The matrices 
\begin{align}
 \big\{\Sigma_\Delta,& Q_{X|Y}, H, Q_W\big\} \hso \mbox{have  spectral} \nonumber \\
&   \mbox{decompositions w.r.t the same unitary matrix $U U\T=I_{n_x}, U\T U=I_{n_x}$.} \label{spe_d}
\end{align}
\par (b)  The RDF $ R_{X|Y}(\Delta_X)$ is given by the water-filling solution:
\begin{equation}
R_{X|Y}(\Delta_X) =\frac{1}{2}  \log \max\big\{1, \det(Q_{X|Y}\Sigma_\Delta^{-1})\big\}=\frac{1}{2} \sum_{i=1}^{n_x} \log \frac{\lambda_{i}}{\delta_{i}}  \label{thm_wf_1}
\end{equation}
where 
\begin{equation}
{\bf E}\big\{||X-\widehat{X}||_{{\mathbb R}^{n_x}}\big\}=\trace\big(\Sigma_\Delta\big)= \sum_{i=1}^{n_x} \delta_{i} = \Delta_X, \hst \delta_{i}= \left\{ \begin{array}{lll} \mu, & \mbox{if} & \mu < \lambda_i \\ \lambda_i, & \mbox{if} & \mu \geq \lambda_i \end{array} \right. \label{thm_wf_2}
\end{equation}
and where $\mu \in [0,\infty)$ is a Lagrange multiplier (obtained from the Kuch-Tucker conditions), and
\begin{align}
 Q_{X|Y}&= U\Lambda U\T, \ \ \Lambda =\diag{\{\lambda_{1},\dots, \lambda_{n_x}\}}, \ \ \lambda_{1} \geq \lambda_2 \geq \dots \geq \lambda_{n_x}  \\ 
\Sigma_{\Delta} &= U \Delta U\T,\ \ \Delta = \diag{\{\delta_{1},\ldots, \delta_{{n_x}}\}}, \ \  \delta_{1}\geq \delta_{2} \geq \ldots \geq  \delta_{{n_x}}   .
\end{align}
\par (c) The optimal  $\widehat{X}$ of part (a)   that achieves $R_{X|Y}(\Delta_X)$ is realized by the parallel channel scheme  depicted  in Fig.~\ref{fg:realization}.
\par (d) If $X$ and $Y$ are independent or   $Y$ is replaced by  a RV that generates the trivial information, i.e., the $\sigma-$algebra of $Y$, is $\sigma\{Y\}=\{\Omega, \emptyset\}$ (or $C=0$ in  (\ref{eq:sideInfo})), then (a)-(c) hold with   $Q_{X|Y}=Q_X, Q_{X,Y}=0$,  and $R_{X|Y}(\Delta_X)=R_X(\Delta_X)$, i.e. becomes the marginal RDF of $X$.
\end{theorem}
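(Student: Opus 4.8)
The plan is to reduce the problem with two-sided side information to a \emph{marginal} Gaussian rate--distortion problem and solve the latter by reverse water-filling, after which every structural claim follows by direct verification of the Gaussian construction. I would start from the innovations decomposition $X={\bf E}\{X|Y\}+X'$, where ${\bf E}\{X|Y\}=Q_{X,Y}Q_Y^{-1}Y$ and $X'\in N(0,Q_{X|Y})$ is independent of $Y$; since ${\bf E}\{X|Y\}$ is a deterministic function of the side information, which is available at both terminals, the claim reduces to $R_{X|Y}(\Delta_X)=R_{X'}(\Delta_X)$. For the converse, given any ${\bf P}_{\widehat X|X,Y}$ with ${\bf E}\{\|X-\widehat X\|^2\}\le\Delta_X$, I would write $I(X;\widehat X|Y)=h(X'|Y)-h(X'|\widehat X,Y)=h(X')-h(X'|\widehat X,Y)$ and bound $h(X'|\widehat X,Y)$ by the Gaussian entropy of the conditional-mean error $X'-{\bf E}\{X'|\widehat X,Y\}=X-{\bf E}\{X|\widehat X,Y\}$. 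By the orthogonality principle its covariance $\Sigma^\ast$ obeys $0\preceq\Sigma^\ast\preceq Q_{X|Y}$ and $\trace(\Sigma^\ast)\le{\bf E}\{\|X-\widehat X\|^2\}\le\Delta_X$, so $I(X;\widehat X|Y)\ge\frac12\log\big(\det Q_{X|Y}/\det\Sigma^\ast\big)\ge0$, reducing the converse to the matrix program $\min\big\{\frac12\log(\det Q_{X|Y}/\det\Sigma):\ 0\preceq\Sigma\preceq Q_{X|Y},\ \trace\Sigma\le\Delta_X\big\}$.

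To solve this program I would diagonalize $Q_{X|Y}=U\Lambda U\T$; Hadamard's inequality together with the constraint $\Sigma\preceq Q_{X|Y}$ forces the maximizer of $\det\Sigma$ to be diagonal in the basis $U$, and the Karush--Kuhn--Tucker conditions then give eigenvalues $\delta_i=\min\{\mu,\lambda_i\}$ with $\sum_i\delta_i=\Delta_X$, which is exactly (\ref{thm_wf_1})--(\ref{thm_wf_2}). For achievability I would exhibit the Gaussian test channel attaining this value: in the coordinates $U\T X'$ use independent scalar backward Gaussian channels of error variance $\delta_i$ (spending no rate on coordinates with $\lambda_i\le\mu$), and take $\widehat X$ equal to ${\bf E}\{X|Y\}$ plus the image under $U$ of the vector of scalar reconstructions; a direct computation shows $I(X;\widehat X|Y)$ equals the water-filling value and ${\bf E}\{\|X-\widehat X\|^2\}=\sum_i\delta_i=\Delta_X$. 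Writing this channel in closed form yields the backward representation $X=\widehat X+N$ with $N\in N(0,\Sigma_\Delta)$ independent of $(\widehat X,Y)$, which is algebraically equivalent to (\ref{eq:realization_sp})--(\ref{eq:realization_sp_new}) with $H=I_{n_x}-\Sigma_\Delta Q_{X|Y}^{-1}$ and $W=H\Psi$; this establishes (a), (b) and (c).

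The structural properties are then read off. From $X=\widehat X+N$ with $N$ Gaussian and independent of $(\widehat X,Y)$, the law ${\bf P}_{X|\widehat X,Y}$ is Gaussian with mean $\widehat X$ and covariance $\Sigma_\Delta$, hence independent of $Y$, which gives (\ref{pr_1})--(\ref{pr_1_a}); taking ${\bf E}\{\,\cdot\mid Y\}$ of $X=\widehat X+N$ gives the implication ${\bf E}\{X|Y\}={\bf E}\{\widehat X|Y\}$. Property (\ref{spe_d}) is immediate since $\Sigma_\Delta=U\Delta U\T$, $Q_{X|Y}=U\Lambda U\T$, $H=U(I_{n_x}-\Delta\Lambda^{-1})U\T$ and $Q_W=U(\Delta-\Delta^2\Lambda^{-1})U\T$ all share the unitary $U$. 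Part (d) is the specialization $C=0$, so $Q_{X,Y}=Q_XC\T=0$, $Q_{X|Y}=Q_X$, and the whole construction collapses to the marginal RDF $R_X(\Delta_X)$. The step I expect to be the main obstacle is the converse bookkeeping: making the lower bound valid for an arbitrary, possibly non-Gaussian $\widehat X$ by passing to the conditional-mean error covariance $\Sigma^\ast$ and exploiting the Loewner bound $\Sigma^\ast\preceq Q_{X|Y}$ (which is what legitimizes the water-filling), followed by the matrix optimization showing the optimal $\Sigma$ commutes with $Q_{X|Y}$.
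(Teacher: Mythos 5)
Your proof is correct, but it follows a genuinely different route from the paper's. You run the converse directly over arbitrary (possibly non-Gaussian) test channels: the innovations step $h(X|Y)=h(X')$, the maximum-entropy bound $h(X'|\widehat X,Y)\le\tfrac12\log\big((2\pi e)^{n_x}\det\Sigma^\ast\big)$, and the Loewner/trace bounds $0\preceq\Sigma^\ast\preceq Q_{X|Y}$, $\trace(\Sigma^\ast)\le\Delta_X$ reduce everything to a determinant-maximization program that Hadamard plus KKT solves, with achievability supplied by the backward channel $X=\widehat X+N$, $N$ independent of $(\widehat X,Y)$. The paper instead works forward: it asserts that Gaussian test channels suffice (Lemma~\ref{lemma:par}(a), stated by analogy with the unconditional RDF rather than proved), parametrizes $\widehat X=HX+GY+W$, and determines $(H,G,Q_W)$ by imposing the two estimation-theoretic conditions ${\bf E}(X|Y)={\bf E}(\widehat X|Y)$ and $\Cov(X,\widehat X|Y)\,\Cov(\widehat X,\widehat X|Y)^{-1}=I_{n_x}$ of Theorem~\ref{them:lb_g}, which force ${\bf E}(X|\widehat X,Y)=\widehat X$ (Theorem~\ref{thm:proof2}); the water-filling is then obtained from the translation $\mathbf X=X-{\bf E}(X|Y)$ and the commutation argument of Corollaries~\ref{cor:sp_rep} and~\ref{cor:equivalent}. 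Your route buys a self-contained converse that subsumes the ``Gaussian test channels suffice'' step the paper leaves unproved, and it makes the constraint $\Sigma^\ast\preceq Q_{X|Y}$ (which legitimizes the water-filling) explicit from the outset; the paper's route makes the structural identities (\ref{pr_1})--(\ref{pr_1_a}) the organizing principle of the construction, whereas you recover them a posteriori from the backward representation. Two small points to make explicit in a write-up: the bound $h(X'|\widehat X,Y)\le\tfrac12\log\big((2\pi e)^{n_x}\det\Sigma^\ast\big)$ with $\Sigma^\ast$ the averaged conditional error covariance uses concavity of $\log\det$ (Jensen) in addition to the Gaussian maximum-entropy property, and the reduction $R_{X|Y}(\Delta_X)=R_{X'}(\Delta_X)$ should be argued information-theoretically (via $I(X';\widehat X'|Y)\ge I(X';\widehat X')$ and the reverse construction with $\widehat X'$ independent of $Y$) rather than operationally, though your converse does not actually rely on it.
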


\begin{figure}
\centering  \includegraphics[scale=0.5]{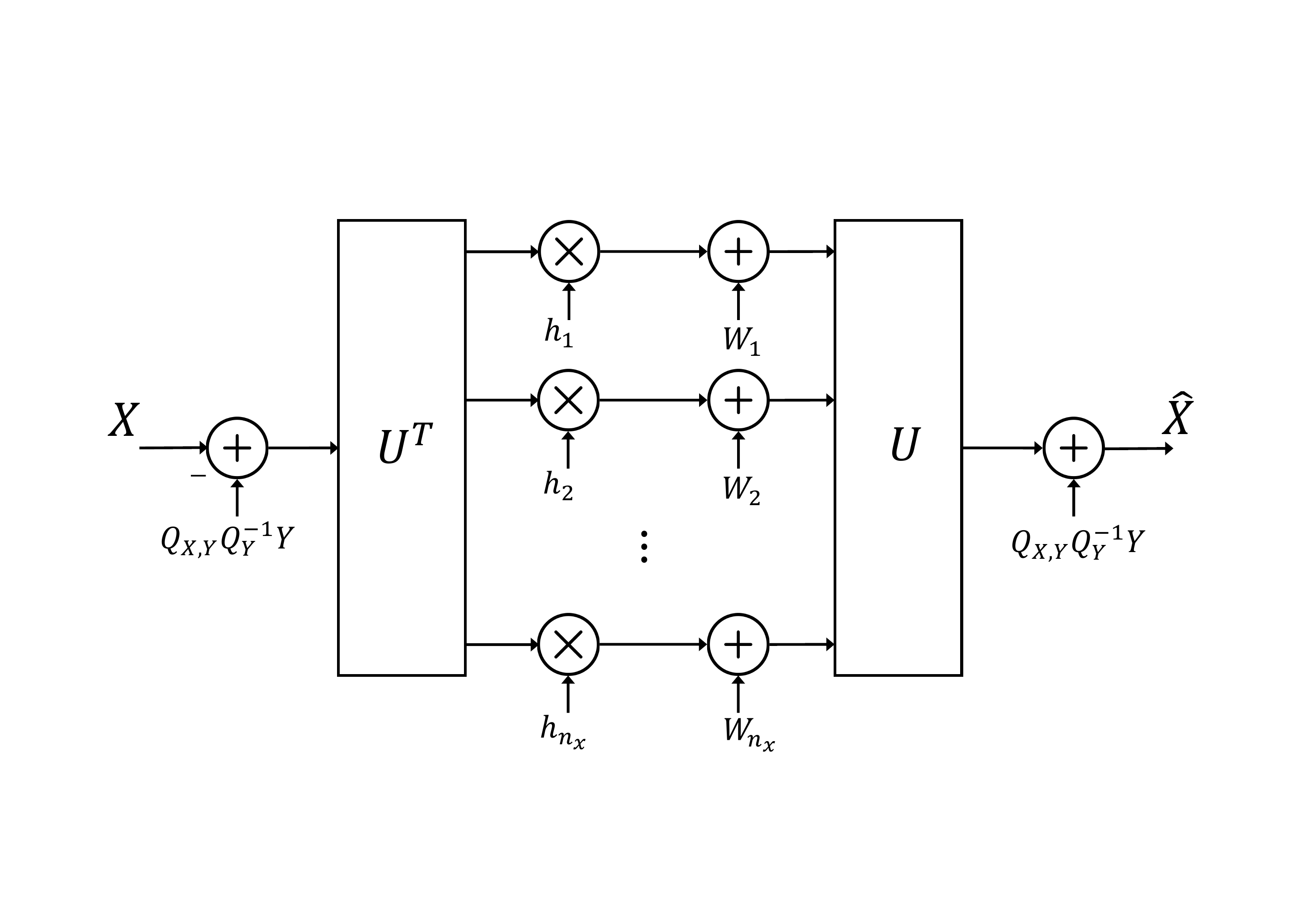}
 \caption{$R_{X|Y}(\Delta_X)$: A realization of optimal reproduction $\widehat{X}$ over parallel additive Gaussian noise channels, where $h_{i}\tri 1-\frac{\delta_{i}}{\lambda_i}\geq 0, i=1, \ldots, n_x$ are the diagonal element of  the spectral decomposition of the  matrix $H=U \diag\{h_1, \ldots, h_{nx}\} U\T$, and $W_i\in N(0,h_i\delta_{i}),  i=1, \ldots, n_x$  .}
 \label{fg:realization}
\end{figure}

The proof of Theorem \ref{thm_rw} is given Section 
\ref{mainres}, and it is based on the derivation of the  structural properties. Some of the implications are briefly described below. \\

{\it Conclusion 1.} The construction and the structural properties of the optimal test channel ${\bf P}_{X|\widehat{X}, Y}$ that achieves the water-filling characterization of the RDF $R_{X|Y}(\Delta_X)$  of  Theorem \ref{thm_rw}, are  not documented elsewhere in the literature.

(i) Structural property (\ref{pr_1}) strengthens Gray's \cite[Theorem~3.1]{gray1973} (see proof of (\ref{gray_lb_1})), inequality, 
\bea
I(X; \widehat{X}|Y)\geq  I(X; \widehat{X})-I(X;Y).
\eea
to the equality  
\begin{align}
I(X; \widehat{X}|Y)=& I(X; \widehat{X})-I(X;Y)\in [0,\infty) \hst \mbox{if} \hso  {\bf P}_{X|\widehat{X}, Y}={\bf P}_{X|\widehat{X}} \\
  =&\frac{1}{2}  \log\big\{ \det(Q_{X|Y}\Sigma_\Delta^{-1})\big\},\hso Q_{X|Y}-\Sigma_{\Delta}\succeq 0, \hso {\bf E}\big\{||X-\widehat{X}||_{{\bf R}^{n_x}}^2\big\}=\trace\big(\Sigma_\Delta\big)\leq \Delta_X
\end{align}  
Structural property (\ref{pr_1_a}) means the  subtraction of equal quantities  ${\bf E}\Big\{X\Big|Y\Big\}$ and ${\bf E}\Big\{\widehat{X}\Big|Y\Big\}$ at the encoder and decoder, respectively, without affecting the information measure, see Fig.~\ref{fg:realization}. \\
Theorem~\ref{thm_rw}.(b), (c), are   obtained, with the aid of part (a), and Hadamard's inequality that shows $Q_{X|Y}$ and $\Sigma_\Delta$ have the same eigenvectors.\\
  Structural propery   (\ref{pr_1}) implies that    Gray's  \cite[Theorem~3.1]{gray1973}  lower bound (\ref{gray_lb_1}) holds with equality for a strictly positive surface\footnote{See Gray \cite{gray1973} for definition.}  $\Delta_X \leq {\cal D}_C(X|Y)\subseteq [0,\infty)$, i.e., 
\bea
R_{X|Y}(\Delta_X)= R_{X}(\Delta_X)-I(X;Y),\hst  \Delta_X \leq  
{\cal D}_C(X|Y)\tri \big\{\Delta_X\in [0,\infty): \Delta_X \leq n_x \lambda_{n_x}\big\}.
\eea
 That is,  the set ${\cal D}_C(X|Y)$ excludes values of   $\Delta_X\in [0,\infty)$ for which   water-filling is active in   (\ref{thm_wf_1}), (\ref{thm_wf_2}).

(ii) Structural property (2), i.e.,  the matrices $\{\Sigma_\Delta, Q_{X|Y}, H, Q_W\}$ are nonnegative symmetric, and have a spectral decomposition with respect to  the same unitary matrix $U U\T=I_{n_x}$\cite{Horn:2013}, implies that  the test channel is equivalently represented by parallel additive Gaussian noise channels (subject to pre-processing and post-processing at the encoder and decoder). 

(iii) Remark~\ref{rem-wyner} shows that 
 the realization of  optimal $\widehat{X}$ of   Fig.~\ref{fg:realization} that achieves the RDF of  Theorem~\ref{thm_rw}, degenerates to    Wyner's \cite{wyner1978} optimal realization that achieves the  RDF $R_{X|Y}(\Delta_X)$, for the tuple of scalar-valued, jointly Gaussian RVs $(X, Y)$,  with square error distortion function. 

The second  theorem gives the optimal test channel that achieves the characterization  of the RDF $\overline{R}(\Delta_X)$, and further states that, there is no loss of compression rate if side information is only available at the decoder. That is, although in general,   $\overline{R}(\Delta_X)\geq R_{X|Y}(\Delta_X)$, an optimal  reproduction $\widehat{X}=f(Y,Z)$  of $X$,   where $f(\cdot, \cdot)$ is linear, is contructed such that the  inequality holds with equality.

\begin{theorem} Characterization and water-filling solution of  $\overline{R}(\Delta_X)$\\
\label{thm:dec}
Consider the RDF $\overline{R}(\Delta_X)$ defined by  (\ref{rdf_d1}), for the multivariate Gaussian source  with mean-square error distortion, defined by (\ref{prob_1})-(\ref{prob_9}).\\
Then the following hold. 

(a)  The characterization of the RDF, $\overline{R}(\Delta_X)$,  satisfies  
\begin{align}
\overline{R}(\Delta_X) \geq R_{X|Y}(\Delta_X) \label{lower_b_d_1}
\end{align}
where $R_{X|Y}(\Delta_X)$ is  given in Theorem~\ref{thm_rw}.(b). 

(b) The optimal  realization $\widehat{X}=f(Y, Z)$  that achieves the lower bound in (\ref{lower_b_d_1}), i.e.,  $\overline{R}(\Delta_X)=R_{X|Y}(\Delta_X)$,  is represented by 
\begin{align}
\widehat{X}=&f(Y,Z) \label{real_d1}\\
=&\Big(I-H\Big)Q_{X,Y}Q_Y^{-1} Y +Z, \\
Z=& H \Big(X + H^{-1} W\Big), \\
(H,& Q_W)  \hso \mbox{given by (\ref{eq:realization_nn_1_sp})-(\ref{eq:realization_nn_1_sp_new}), and (\ref{spe_d}) holds} \label{real_d2}.
\end{align}
Moreover, the following structural properties hold:\\
(1)  The optimal test channel satisfies 
\begin{align}
&(i) \hso  {\bf P}_{X|\widehat{X},Y,Z}={\bf P}_{X|\widehat{X},Y}={\bf P}_{X|\widehat{X}}, \label{str_p_1} \\
&(ii)   \hso {\bf E}\Big\{X\Big|\widehat{X}, Y,Z\Big\}={\bf E}\Big\{X\Big|\widehat{X}\Big\}=\widehat{X} \hso \Longrightarrow \hso {\bf E}\Big\{X\Big|Y\Big\}= {\bf E}\Big\{\widehat{X}\Big|Y\Big\}.\label{str_p_2}
\end{align} 
(2) Structural property (2) of Theorem~\ref{thm_rw}.(a) holds.  
\end{theorem}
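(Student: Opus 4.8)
The plan is to obtain Theorem~\ref{thm:dec} essentially as a corollary of Theorem~\ref{thm_rw}: the lower bound (\ref{lower_b_d_1}) is a generic data-processing argument, while tightness is established by exhibiting a concrete auxiliary RV $Z$ and a linear map $f$ whose induced reproduction $\widehat{X}=f(Y,Z)$ is precisely the optimal reproduction of the conditional problem of Theorem~\ref{thm_rw}, so that its value and all its structural properties transfer verbatim.

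\emph{Step 1 (lower bound (a)).} Fix any $Z\in\mathcal{M}(\Delta_X)$ with associated $\widehat{X}=f(Y,Z)$. Since $\widehat{X}$ is a deterministic function of $(Y,Z)$, the Markov chain $X-(Y,Z)-\widehat{X}$ holds, so conditioning on $Y$ and invoking the data-processing inequality gives $I(X;Z|Y)\geq I(X;\widehat{X}|Y)$. Because $\widehat{X}$ satisfies ${\bf E}\{\|X-\widehat{X}\|^2\}\leq\Delta_X$ and $R_{X|Y}(\Delta_X)$ in (\ref{eq:OP1ck1_new}) is the infimum of $I(X;\widehat{X}|Y)$ over \emph{all} such ${\bf P}_{\widehat{X}|X,Y}$, we get $I(X;\widehat{X}|Y)\geq R_{X|Y}(\Delta_X)$; taking the infimum over $\mathcal{M}(\Delta_X)$ yields (\ref{lower_b_d_1}).

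\emph{Step 2 (achievability (b)).} Let $H$, $\Sigma_\Delta$, $Q_W$ and $W=H\Psi$, $\Psi\in N(0,Q_\Psi)$ with $\Psi$ (hence $W$) independent of $(X,Y)$, be exactly those produced by the water-filling of Theorem~\ref{thm_rw}. Put $Z:=HX+W$ and $f(Y,z):=(I_{n_x}-H)Q_{X,Y}Q_Y^{-1}Y+z$, as in (\ref{real_d1})--(\ref{real_d2}). Three verifications are needed. (i) Because $W\perp(X,Y)$, one has ${\bf P}_{Z|X,Y}={\bf P}_{Z|X}$, so $Z$ meets the test-channel (Markov) requirement of $\mathcal{M}(\Delta_X)$. (ii) $\widehat{X}=f(Y,Z)=HX+(I_{n_x}-H)Q_{X,Y}Q_Y^{-1}Y+W$ is identical to the optimal reproduction (\ref{eq:realization_sp}) of Theorem~\ref{thm_rw}, hence the same error covariance $\Sigma_\Delta$ applies and ${\bf E}\{\|X-\widehat{X}\|^2\}=\trace(\Sigma_\Delta)=\Delta_X$, so $Z\in\mathcal{M}(\Delta_X)$. (iii) For fixed $Y$ the map $z\mapsto f(Y,z)$ is an affine bijection, so $\sigma\{Y,Z\}=\sigma\{Y,\widehat{X}\}$, whence $I(X;Z|Y)=I(X;\widehat{X}|Y)$, which by Theorem~\ref{thm_rw}.(b) equals $R_{X|Y}(\Delta_X)$. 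Combined with Step~1 this gives $\overline{R}(\Delta_X)=R_{X|Y}(\Delta_X)$ and that $(Z,f)$ is optimal.

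\emph{Step 3 (structural properties) and the main obstacle.} The identity $\sigma\{Y,Z\}=\sigma\{Y,\widehat{X}\}$ from Step~2(iii) makes conditioning on $(\widehat{X},Y,Z)$, on $(Y,Z)$, and on $(\widehat{X},Y)$ all coincide, so (\ref{pr_1})--(\ref{pr_1_a}) of Theorem~\ref{thm_rw} immediately upgrade to (\ref{str_p_1})--(\ref{str_p_2}); structural property (2) is literal, since $\{\Sigma_\Delta,Q_{X|Y},H,Q_W\}$ are the very matrices of Theorem~\ref{thm_rw}.(a). The only delicate point is the equality $I(X;Z|Y)=I(X;\widehat{X}|Y)$ when the water-filling is active and $H$ (hence $Q_W$) is singular: there the relevant conditional covariances degenerate, so I would justify it through the invariance of conditional mutual information under the conditionally-bijective change of variables $Z\leftrightarrow\widehat{X}$ given $Y$, avoiding differential entropies that could be $-\infty$. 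Making that invariance rigorous, together with checking that Theorem~\ref{thm_rw}.(b) is being invoked on exactly the same $\Sigma_\Delta$, is the crux; everything else is bookkeeping.
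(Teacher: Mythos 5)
Your proposal is correct and follows essentially the same route as the paper: the lower bound is Wyner's chain-rule/data-processing argument (Remark~\ref{rem_lb}), and achievability is obtained by splitting the optimal reproduction of Theorem~\ref{thm_rw} into $\widehat{X}=(I-H)Q_{X,Y}Q_Y^{-1}Y+Z$ with $Z=HX+W$, checking ${\bf P}_{Z|X,Y}={\bf P}_{Z|X}$ and that $(\widehat{X},Y)$ determines $Z$ (your $\sigma\{Y,Z\}=\sigma\{Y,\widehat{X}\}$), which forces $I(X;Z|\widehat{X},Y)=0$. Your worry about singular $H$ is moot, since for fixed $Y$ the map $z\mapsto f(Y,z)$ is a pure translation and hence bijective regardless of the rank of $H$.
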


The proof of Theorem \ref{thm:dec} is given Section 
\ref{mainres}, and it is based on  the derivation of  the  structural properties and  Theorem~\ref{thm_rw}. Some implications are  discussed below.\\

{\it Conclusion 2.} The optimal reproduction $\widehat{X}=f(X,Z)$ or test channel distribution ${\bf P}_{X|\widehat{X},Y,Z}$ that achieve  the RDF $\overline{R}(\Delta_X)$  of Theorem \ref{thm:dec}, are  not reported in the literature. 

(i) From the structural property (1) of Theorem~\ref{thm:dec}, i.e., (\ref{str_p_1}) then follows the lower bound $\overline{R}(\Delta_X) \geq R_{X|Y}(\Delta_X)$ is achieved by the realization $\widehat{X}=f(Y,Z)$ of  Theorem \ref{thm:dec}.(b), i.e., for a given $Y=y$, then $\widehat{X}$ uniquely defines $Z$.   

(ii) If $X$ is  independent of $Y$ or $Y$ generates a trivial information,  then the RDFs $\overline{R}(\Delta_X)=\overline{R}_{X|Y}(\Delta_X)$  degenerate to the classical RDF of the source $X$, i.e., $R_{X}(\Delta_X)$, as expected. This is easily verified   from  (\ref{real_d1}), (\ref{real_d2}), i.e., $Q_{X,Y}=0$ which implies $\widehat{X}=Z$. \\
For scalar-valued RVs,  $X : \Omega \rightarrow {\mathbb R}, Y : \Omega \rightarrow {\mathbb R}, X \in N(0, \sigma_X^2)$, and $X$ independent of $Y$, then  the optimal realization reduces to 
\begin{align}
&\widehat{X}=Z= \Big(1-\frac{\Delta_X}{\sigma_X^2}\Big) X + \sqrt{\Big(1-\frac{\Delta_X}{\sigma_X^2}\Big)\Delta_X} \overline{W}, \hso \overline{W}\in N(0,1), \hso \sigma_X^2 \leq \Delta_X, \label{mem_scal_1}   \\
&Q_{\widehat{X}}=Q_{Z}=\sigma_{\widehat{X}}^2=\sigma_X^2 -\Delta_X.\label{mem_scal_2}
\end{align}
 as expected.


(iii) Remark~\ref{rem-wyner} shows that 
 the realization  of optimal $\widehat{X}=f(Y,Z)$  that achieves the RDF $\overline{R}(\Delta_X)$ of   Theorem \ref{thm:dec}, degenerates  to Wyner's \cite{wyner1978} realization that achieves the  RDF $\overline{R}(\Delta_X)$,  of the tuple of scalar-valued, jointly Gaussian RVs $(X, Y)$,  with square error distortion function. 

(iv) Remark~\ref{comment} shows that,  when specialized to Wyner's  RDF  $\overline{R}(\Delta_X)$,   the optimal test channel realizations that achieve the  RDFs of the distributed remote source coding problems  in   \cite[Theorem~4]{tian-chen2009},
   does not degenerate  to  Wyner's optimal test channel realization, $(\widehat{X}, Z)$, that achieves  the RDF  $\overline{R}(\Delta_X)$, contrary to what is expected \cite[Abstract]{tian-chen2009}.

The next corollary follows from the above two theorems.

\begin{corollary} Characterization  of  $\overline{R}^{CSI}(\Delta_X)$\\
\label{cor:c-dec}
Consider the RDF  $\overline{R}^{CSI}(\Delta_X)$ defined by  (\ref{rdf_d1_a_csi}), for the multivariate Gaussian source  with mean-square error distortion, defined by (\ref{prob_1})-(\ref{prob_9}).\\
The optimal test channel of the RDF $\overline{R}^{CSI}(\Delta_X)$  is induced by the realization $Z= H \Big(X + H^{-1} W\Big)$, where 
$(H,Q_W)$   are given by (\ref{eq:realization_nn_1_sp})-(\ref{eq:realization_nn_1_sp_new}) of  Theorem \ref{thm:dec}, and
\begin{align}
\overline{R}^{CSI}(\Delta_X)=&\inf_{{\cal Q}(\Delta_X) }  \Big\{H(X)- H(X|Z)\Big\}\\ 
=& \inf_{{\cal Q}(\Delta_X) }   \frac{1}{2}\log\Big\{ \det (Q_XQ_{X|Z}^{-1})\Big\}
\end{align}
where   $(H, Q_W)$ are   given by (\ref{eq:realization_nn_1_sp})-(\ref{eq:realization_nn_1_sp_new}), and 
\begin{align}
&{\cal Q}(\Delta_X)\tri \bigg\{\Sigma_{\Delta} \succeq 0:  \trace \big( \Sigma_{\Delta} \big)\leq{\Delta_X}, \hst  HQ_X H\T \Big(HQ_XH\T +Q_W\Big)^{-1}HQ_X H\T \preceq  Q_X \bigg\}\\
&Q_{X|Z}=Q_X-HQ_X H\T\Big(HQ_X H\T +Q_W\Big)^{-1} HQ_X H\T.
\end{align}
\end{corollary}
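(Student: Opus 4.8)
The plan is to read the corollary as a reformulation of $\overline{R}^{CSI}(\Delta_X)=\inf_{{\cal M}(\Delta_X)}I(X;Z)$ obtained by transporting the construction of Theorem~\ref{thm:dec} to the new objective. The crucial observation is that the feasible set ${\cal M}(\Delta_X)$ in (\ref{rdf_d1_a_csi}) is literally the same as in (\ref{rdf_d1}); only the functional being minimized changes from $I(X;Z|Y)$ to $I(X;Z)$. So the first step is to argue, exactly as in the proof of Theorem~\ref{thm:dec}, that it suffices to consider jointly Gaussian triples $(X,Y,Z)$ with the Markov property $Z-X-Y$ and reproduction $\widehat{X}={\bf E}\{X|Y,Z\}$: the conditional-independence and common-spectral-decomposition properties (items (1)--(2) of Theorem~\ref{thm:dec}) are consequences of the distortion constraint and the Markov constraint in ${\cal M}(\Delta_X)$, not of the particular rate functional, so they persist. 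Consequently the optimal test channel admits the affine realization $Z=H\big(X+H^{-1}W\big)=HX+W$, with $(H,Q_W)$ parametrized by the distortion covariance $\Sigma_\Delta$ through (\ref{eq:realization_nn_1_sp})--(\ref{eq:realization_nn_1_sp_new}), where $W=H\Psi$, $\Psi\in N(0,Q_\Psi)$ is taken independent of $(X,Y)$ so that ${\bf P}_{Z|X,Y}={\bf P}_{Z|X}$, and (\ref{spe_d}) holds so that $H$ and $Q_{X|Y}$ commute.

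Next I would check feasibility and compute the distortion. From $H=I_{n_x}-\Sigma_\Delta Q_{X|Y}^{-1}$ and the commutation of $H$ and $Q_{X|Y}$ one gets $HQ_{X|Y}=Q_{X|Y}-\Sigma_\Delta$ and $HQ_{X|Y}H\T+Q_W=Q_{X|Y}-\Sigma_\Delta$; conditioning first on $Y$ and taking the Schur complement, $Q_{X|Y,Z}=Q_{X|Y}-Q_{X|Y}H\T\big(HQ_{X|Y}H\T+Q_W\big)^{-1}HQ_{X|Y}=\Sigma_\Delta$. Since the structural property gives $\widehat{X}={\bf E}\{X|Y,Z\}$, the achieved mean-square error is ${\bf E}\{\|X-\widehat{X}\|^2\}=\trace\big(Q_{X|Y,Z}\big)=\trace\big(\Sigma_\Delta\big)$, which reproduces the constraint $\trace(\Sigma_\Delta)\le\Delta_X$ of ${\cal Q}(\Delta_X)$. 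The remaining requirement $HQ_XH\T\big(HQ_XH\T+Q_W\big)^{-1}HQ_XH\T\preceq Q_X$ is precisely the demand that the (unconditional) Schur complement $Q_{X|Z}$ — i.e.\ $Q_{X|Z}$ of $Q_Z=HQ_XH\T+Q_W$ in the joint covariance of $(X,Z)$ — be a bona fide covariance, so that the affine model $Z=HX+W$ is realizable for that $\Sigma_\Delta$; I would also note it is implied by $Q_W\succeq0$, i.e.\ $\Sigma_\Delta\preceq Q_{X|Y}$, which holds under the parametrization of Theorem~\ref{thm:dec}.

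Evaluating the objective is then immediate: $(X,Z)$ jointly Gaussian gives $I(X;Z)=H(X)-H(X|Z)=\tfrac12\log\det\big(Q_XQ_{X|Z}^{-1}\big)$ with $Q_{X|Z}$ as in the statement, and taking the infimum over $\Sigma_\Delta\in{\cal Q}(\Delta_X)$ yields both displayed forms of $\overline{R}^{CSI}(\Delta_X)$. To close the argument one needs the matching lower bound $\overline{R}^{CSI}(\Delta_X)\ge\inf_{{\cal Q}(\Delta_X)}\tfrac12\log\det(Q_XQ_{X|Z}^{-1})$: for an arbitrary $Z\in{\cal M}(\Delta_X)$ a maximum-entropy argument replaces it by its jointly-Gaussian surrogate without increasing $I(X;Z)$ or the achievable distortion, and the structural reduction of the first paragraph then maps it into the parametrized family, so nothing outside that family does better.

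The step I expect to be the main obstacle is the first one: justifying that the structural conclusions of Theorem~\ref{thm:dec} — derived for the objective $I(X;Z|Y)$ — still force the optimal reproduction to be $\widehat{X}={\bf E}\{X|Y,Z\}$ and the optimal $Z$ to have the affine form $Z=H(X+H^{-1}W)$ with $(H,Q_W)$ tied to $\Sigma_\Delta$, when the objective is instead the unconditional $I(X;Z)$. I would resolve this by isolating, inside the proof of Theorem~\ref{thm:dec}, which steps use the form of the objective and which use only the constraint set ${\cal M}(\Delta_X)$: the conditional-independence and common-eigenvector properties fall in the second category and carry over verbatim, while the only objective-dependent step — the water-filling over $\Sigma_\Delta$ — is precisely what is deliberately left unevaluated in the corollary, surviving there only as the infimum over ${\cal Q}(\Delta_X)$.
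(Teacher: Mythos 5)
The paper itself offers no real proof of this corollary---Remark~\ref{rk:1_n}(b) only asserts that it ``follows from the two main theorems''---so your write-up is already more explicit than the source. Your feasibility check, the computation $Q_{X|Y,Z}=\Sigma_\Delta$, and the evaluation $I(X;Z)=\tfrac12\log\det(Q_XQ_{X|Z}^{-1})$ are fine. The genuine gap sits exactly at the step you flag as the main obstacle, and your proposed resolution does not close it: the structural properties of Theorem~\ref{thm:dec} (that $\widehat{X}={\bf E}\{X|\widehat{X},Y,Z\}$, that $H=I_{n_x}-\Sigma_\Delta Q_{X|Y}^{-1}$ and $Q_W=H\Sigma_\Delta$) are \emph{not} consequences of the constraint set ${\cal M}(\Delta_X)$ alone. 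In the paper they are extracted from the achievability conditions for the lower bound on the \emph{conditional} mutual information (Theorems~\ref{them_lb} and \ref{them:lb_g}, Conditions 1--2), i.e., they are optimality conditions tied to the objective $I(X;Z|Y)$. Asserting that they ``carry over verbatim'' to the minimizer of the different functional $I(X;Z)=I(Y;Z)+I(X;Z|Y)$ (the identity holding because $Y\leftrightarrow X\leftrightarrow Z$) is a claim, not an argument: a priori the optimizer trades off the two terms differently.

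What actually licenses restricting to the Theorem~\ref{thm:dec} realization is an exhaustiveness argument you do not give: for any jointly Gaussian test channel $Z=AX+N$ with $N$ independent of $(X,Y)$, both the rate $I(X;Z)=\tfrac12\log\det\big(I_{n_x}+Q_XM\big)$ and the minimal attainable distortion $\trace\big((Q_{X|Y}^{-1}+M)^{-1}\big)$ depend on $(A,Q_N)$ only through $M\tri A\T Q_N^{-1}A\succeq 0$; setting $\Sigma_\Delta\tri(Q_{X|Y}^{-1}+M)^{-1}$, the realization $Z=HX+W$ of Theorem~\ref{thm:dec} reproduces exactly this $M$, since $H\T Q_W^{-1}H=\Sigma_\Delta^{-1}H^{-1}H=\Sigma_\Delta^{-1}-Q_{X|Y}^{-1}$. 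Hence no Gaussian channel outside the family does better, and the infimum collapses to the stated one over ${\cal Q}(\Delta_X)$. Separately, your Gaussian-surrogate step in the converse needs care: a Gaussian $Z^G$ matching the second moments of an arbitrary feasible $Z$ has decoder MMSE equal to the \emph{linear} MMSE of the original, which can exceed the original's nonlinear distortion, so feasibility of the surrogate at the same $\Delta_X$ is not automatic; the safe route is the direct entropy bound $h(X|Y,Z)\le h(X-f(Y,Z))$ for the $I(X;Z|Y)$ term combined with the reduction to $M$ above.
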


The rest of the paper is organized as   follows. In Section~\ref{sect:wyner} we  review of Wyner's \cite{wyner1978} operational definition of lossy compression and state a fundamental theorem on mean-square estimation that we use throughout the paper.
  In Section~\ref{sect:proofs} we  prove the structural properties and  the  two  main theorems.
\section{Preliminaries}
\label{sect:wyner}
\par In this section we review  the Wyner   \cite{wyner1978} source coding problems with  fidelity of Fig, \ref{fg:blockdiagram}.

 We begin with the notation, which follows closely \cite{wyner1978}.

\subsection{Notation}
Let  ${\mathbb Z} \tri \{\ldots, -1,0,1,\ldots \}$ the set of all integers, ${\mathbb N} \tri \{0, 1,2, \ldots, \}$ the set of natural integers, ${\mathbb Z}_+ \tri \{1,2, \ldots, \}$. For $n \in {\mathbb Z}_+$ denote the following finite subset of the above defined set,  ${\mathbb Z}_n\tri \{1,2,\ldots, n\}$. 

Denote the real numbers by
${\mathbb R}$ and the set of
positive and of strictly positive real numbers, respectively, by
${\mathbb R}_+ = [0,\infty)$ and ${\mathbb R}_{++}=(0,\infty)$. For any  matrix $A\in \mathbb{R}^{p\times m}, (p,m)\in {\mathbb Z}_+\times {\mathbb Z}_+$, we denote its transpose by $A\T$, and for $m=p$,  we denote its trace  by  $\trace(A)$,  and by  $\diag\{A\}$, the matrix with diagonal entries $A_{ii},~i=1,\ldots,p$,  and zero elsewhere. The identity matrix with dimensions $p\times p$ is designated as $I_p$.

Denote an arbitrary set or space by 
   ${\cal U}$ and the product space formed by $n$ copies of it by  ${\cal U}^n \tri \times_{t=1}^n {\cal U}$. $u^n \in {\cal U}^n$ denotes the set of $n-$tuples $u^n \tri (u_1,u_2, \ldots, u_n)$, where $u_k \in {\cal U}, k=1, \ldots, n$ are its coordinates.
   
Denote a probability space by $(\Omega, {\cal F}, {\mathbb P})$. For a sub-sigma-field  ${\cal G} \subseteq {\cal F}$, and $A \in {\cal F}$, denote by ${\mathbb P}(A|{\cal G})$ the conditional probability of $A$ given ${\cal G}$, i.e., ${\mathbb P}(A|{\cal G})={\mathbb P}(A|{\cal G})(\omega), \omega \in \Omega$ is a measurable function on $\Omega$. 

 On the above probability space,  consider two-real valued random variables (RV) $X: \Omega \rar {\cal X}, Y: \Omega \rar {\cal X}$, where $({\cal X}, {\cal B}({\cal X})),    ({\cal Y}, {\cal B}({\cal Y}))$ are arbitrary measurable spaces.  The measure (or joint distribution if ${\cal X}, {\cal Y}$ are Euclidean spaces) induced by $(X, Y)$ on ${\cal X} \times {\cal Y}$ is denoted by ${\bf P}_{X,Y}$ or  ${\bf P}(dx,dy)$ and their marginals on ${\cal X}$ and ${\cal Y}$ by ${\bf P}_X$ and ${\bf P}_Y$, respectively. The conditional measure of RV $X$ conditioned on $Y$ is denoted by ${\bf P}_{X|Y}$ or ${\bf P}(dx|y)$, when $Y=y$ is fixed.  

On the above probability space, consider  three-real values RVs $X: \Omega \rar {\cal X}, Y: \Omega \rar {\cal X}$, $Z: \Omega \rar {\cal Z}$. We say that RVs $(Y, Z)$ are conditional independent given RV $X$ if ${\bf P}_{Y, Z|X}={\bf P}_{Y|X} {\bf P}_{Z|X}-$a.s (almost surely) or equivalently ${\bf P}_{Z|X, Y}={\bf P}_{Z|X}-$a.s; the specification a.s is often omitted.  We often  denote the above conditional independence by the Markov chain (MC) $Y \leftrightarrow X \leftrightarrow Z$. 

Finally, for RVs $X, Y$ etc. $H(X)$ denotes differential entropy of $X$, $H(X|Y)$ conditional differential entropy of $X$ given $Y$, $I(X;Y)$ the mutual information between $X$ and $Y$, as defined in standard books on information theory, \cite{Gallager:1968}, \cite{pinsker:1964}. 

The notation $X \in N(0, Q_X)$ means $X$ is a Gaussian distributed RV with zero mean and covariance $Q_X\succeq 0$, where $Q_X \succeq 0$ (resp. $Q_X \succ 0 $) means $Q_X$ is  positive semidefinite (resp. positive definite).


\subsection{Wyner's Coding Theorems with Side Information at the Decoder} \label{pr:2}
\par For the sake of completeness, we introduce certain results from Wyner's paper \cite{wyner1978}, that we use in this paper.   

On a probability space $(\Omega, {\cal F}, {\mathbb P})$, consider  a tuple of jointly independent and identically distributed RVs $(X^n, Y^n)= \{(Y_{t}, Y_{t}): t=1,2, \ldots,n\}$,  
\begin{align}
X_{t} : \Omega \rightarrow {\cal Y}, \hso  Y_{t} : \Omega \rightarrow {\cal  Y}, \ \ t = 1,2, \ldots, n \label{jgrv}
\end{align}
with induced distribution ${\bf P}_{X_t, Y_t}={\bf P}_{X,Y}, \forall t$. 
Consider also the measurable function $d_{X}: {\cal X} \times \widehat{\cal X} \rar [0,\infty)$, for a measurable space $\widehat{\cal X}$. 
 Let   
 \bea
 {\cal I}_M \tri \big\{0,1, \ldots, M-1\big\}, \hso M \in {\mathbb Z}_M. 
 \eea
be a finite set.


\par A code $(n, M,D_X)$, when switch $A$ is open in Fig.~\ref{fg:blockdiagram},   is defined by two measurable  functions,  the encoder $F_E$ and the decoder $F_D$, with average distortion,   as follows. 
\begin{align*}
&F_{E} : {\cal X}^n \longrightarrow {\cal I}_M, \hst 
F_D: {\cal I}_M \times {\cal Y}^n \longrightarrow {\cal \widehat{\cal X}}^n, \\
&\frac{1}{n}  \mathbb{\bf E} \Big\{\sum_{t=1}^n d_X(X_t,\widehat{X}_t) \Big\} = D_X 
\end{align*}
where $\widehat{X}^n$ is again a sequence of RVs,  $\widehat{X}^n= F_D(Y^n,F_E(X^n))\in {\cal \widehat{\cal X}}^n$. A non-negative rate distortion pair $(R, \Delta_X)$ is said to be {\it achievable} if for every  $\epsilon >0$, and $n$ sufficiently large there exists a code $(n, M,D_X)$ such that 
\begin{equation*}
M \leq 2^{n(R + \epsilon)}, \hspace{0.4cm} D_X \leq\Delta_X+ \epsilon
\end{equation*}
Let ${\cal R}$ denote the set of all achievable pairs  $(R,\Delta_X)$, and define,  for  $ \Delta_X\geq 0$,  the infimum  of all achievable rates by 
\begin{equation}
R^*(\Delta_X) = \inf_{(R,\Delta_X) \in{\cal R}}R
\end{equation}
If for some $\Delta_X$ there is no $R <\infty$ such that $(R,\Delta_X) \in{\cal R}$, then set  $R^*(\Delta_X) =\infty.$  For arbitrary abstract spaces  Wyner \cite{wyner1978}  characterized the infimum of all achievable rates $R^*(\Delta_X)$, by the single-letter RDF, $\overline{R}(\Delta_X)$ given by (\ref{rdf_d1}), (\ref{rdf_d2}),    in terms of an auxiliary RV $Z: \Omega \rar {\cal Z}$. Wyner's   realization of the joint  measure  ${\bf P}_{X, Y, Z, \widehat{X}}$ induced by the RVs $(X, Y, Z, \widehat{X})$, is  illustrated in 
 Fig.~\ref{fg:onlydecoder}, where  $Z$ is the output of the ``test channel'', ${\bf P}_{Z|X}$. 

Wyner proved the following  coding theorems.

\begin{theorem}  \cite{wyner1978}\\
\label{the_1_dec}
Suppose Assumption~\ref{ass_1} holds. \\
(a) Converse Theorem. For any $\Delta_X\geq 0$, $R^*(\Delta_X) \geq \overline{R}(\Delta_X)$.\\
(b) Direct Theorem. If the conditions stated in \cite[pages~64-65, (i), (ii)]{wyner1978} hold, then  $R^*(\Delta_X) \leq \overline{R}(\Delta_X)$, $0 \leq \Delta_X <\infty$. 
\end{theorem}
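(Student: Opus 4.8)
The plan is to reconstruct the two standard arguments behind Wyner's coding theorem \cite{wyner1978}. For the converse, part (a), I would start from an arbitrary admissible code $(n,M,D_X)$, write $J\tri F_E(X^n)\in{\cal I}_M$ and $\widehat X^n=F_D(J,Y^n)$, and single-letterize. The chain of bounds $\log M\geq H(J)\geq H(J|Y^n)\geq I(X^n;J|Y^n)$ reduces the problem to expanding $I(X^n;J|Y^n)$ by the chain rule. Here the i.i.d.\ structure of the pairs $\{(X_t,Y_t)\}$ does the work: since $(X_t,Y_t)$ is independent of $\{(X_s,Y_s):s\neq t\}$, one has $H(X_t\mid X^{t-1},Y^n)=H(X_t\mid Y_t)$, and therefore
\begin{align}
I(X^n;J|Y^n)=\sum_{t=1}^{n} I\big(X_t;Z_t\,\big|\,Y_t\big),\qquad Z_t\tri\big(J,X^{t-1},Y^{t-1},Y_{t+1}^{n}\big).
\end{align}
The two properties that must be checked are that $Z_t\leftrightarrow X_t\leftrightarrow Y_t$ (again immediate from the i.i.d.\ pair structure, which makes $Y_t$ conditionally independent of everything else given $X_t$, so the constraint ${\bf P}_{Z_t|X_t,Y_t}={\bf P}_{Z_t|X_t}$ in the definition of ${\cal M}(\Delta_X)$ holds) and that $\widehat X_t$ is a measurable function of $(Y_t,Z_t)$ (it is the $t$-th coordinate of $F_D(J,Y^n)$, and $(J,Y^n)$ is recoverable from $(Y_t,Z_t)$). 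Introducing a uniform time-sharing variable $T$ and setting $X\tri X_T$, $Y\tri Y_T$, $Z\tri(Z_T,T)$, $\widehat X\tri\widehat X_T$, one gets $\tfrac1n\log M\geq I(X;Z|Y)$ with $(Z,\widehat X)$ admissible for distortion level $\tfrac1n{\bf E}\sum_t d_X(X_t,\widehat X_t)=D_X\leq\Delta_X+\eps$; letting $\eps\downarrow 0$ and using monotonicity/continuity of $\overline R$ in $\Delta_X$ yields $R^*(\Delta_X)\geq\overline R(\Delta_X)$.

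For the direct part (b), I would run the Wyner--Ziv random-binning scheme. Fix a test channel ${\bf P}_{Z|X}$ and reconstruction $f$ nearly achieving $\overline R(\Delta_X)$ with ${\bf E}\{d_X(X,f(Y,Z))\}\leq\Delta_X$; generate $\approx 2^{nI(X;Z)}$ i.i.d.\ codewords $z^n$ from ${\bf P}_Z^{\otimes n}$ and spread them uniformly over $\approx 2^{n(I(X;Z)-I(Y;Z))}$ bins. The encoder picks a codeword $z^n$ jointly typical with the observed $x^n$ (possible with high probability by the covering lemma, since the codeword rate exceeds $I(X;Z)$) and transmits its bin index; the decoder, knowing the bin index and $y^n$, recovers the unique codeword in that bin jointly typical with $y^n$ (here the bin budget $I(X;Z)-I(Y;Z)$ is exactly what guarantees uniqueness), and outputs $\widehat x_t=f(y_t,z_t)$. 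Joint-typicality and Markov-lemma estimates then bound the average distortion by $\Delta_X+\eps$ and the rate by $I(X;Z)-I(Y;Z)+\eps$, which equals $I(X;Z|Y)+\eps$ because the Markov chain $Z\leftrightarrow X\leftrightarrow Y$ gives $I(X;Z)-I(Y;Z)=I(X;Z|Y)$; the regularity conditions cited as \cite[pp.~64--65]{wyner1978} are precisely the hypotheses (well-behaved alphabets, uniform integrability of $d_X$) that make these estimates valid in the abstract-alphabet setting.

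I expect the main obstacle to be not the combinatorial core but the abstract-alphabet technicalities of the direct part: making the covering lemma, packing, and Markov lemma rigorous when $({\cal X},{\cal Y},{\cal Z})$ are general measurable spaces requires a quantization/approximation argument together with the moment and continuity hypotheses of \cite{wyner1978}. In a self-contained treatment one would either import those conditions verbatim or specialize to the Gaussian, square-error case of interest here, where one can discretize cleanly and the distortion is uniformly integrable by construction. The converse, by contrast, is essentially bookkeeping once $Z_t$ is chosen correctly; the only real subtlety is verifying the Markov and measurability properties that certify $Z_t$ as an admissible auxiliary random variable in the definition of $\overline R(\Delta_X)$.
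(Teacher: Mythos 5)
This statement is quoted verbatim from Wyner \cite{wyner1978} as background; the paper provides no proof of it, so there is nothing internal to compare your argument against. Your reconstruction is the standard one and is essentially correct at the sketch level: the converse bookkeeping with $Z_t\tri(J,X^{t-1},Y^{t-1},Y_{t+1}^{n})$ is right (the identity $H(X_t\mid X^{t-1},Y^n)=H(X_t\mid Y_t)$, the Markov property ${\bf P}_{Z_t|X_t,Y_t}={\bf P}_{Z_t|X_t}$, and the recoverability of $\widehat X_t$ from $(Y_t,Z_t)$ all follow from the i.i.d.\ pair structure exactly as you say), and the time-sharing step together with monotonicity and continuity of $\overline R(\cdot)$ — which Wyner establishes via convexity — closes part (a). For part (b) your binning scheme is the right mechanism, and the identity $I(X;Z)-I(Y;Z)=I(X;Z|Y)$ under $Y\leftrightarrow X\leftrightarrow Z$ is what reconciles (\ref{rdf_d1_a}) with (\ref{rdf_d1}). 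Two caveats worth making explicit if you were to write this out in full: in the abstract-alphabet setting the converse should be phrased entirely in terms of general mutual informations rather than (differential) entropies, with Assumption~\ref{ass_1} invoked to keep the relevant quantities finite; and the direct part genuinely requires the regularity hypotheses of \cite[pp.~64--65]{wyner1978} (Wyner works with a quantization argument rather than typicality on the raw continuous alphabets), which you correctly flag as the main technical burden rather than the combinatorial core.
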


%
%

\par When switch $A$ is closed in Fig.~\ref{fg:blockdiagram}, and the tuple of jointly independent and identically distributed RVs $(X^n, Y^n)$, is defined as in Section~\ref{pr:2}, Wyner \cite{wyner1978} generalized Berger's \cite{berger:1971}  characterization of all achievable pairs  $(R,\Delta_X)$, from finite alphabet spaces to abstract alphabet spaces. 

\par A code $(n, M,D_X)$, when switch $A$ is closed in Fig.~\ref{fg:blockdiagram},   is defined as in Section~\ref{pr:2}, with  the encoder $F_E$, replaced by   
\begin{align}
F_{E} : {\cal X}^n \times {\cal Y}^n \longrightarrow {\cal I}_M .
\end{align}
Let ${\cal R}_1$ denote the set of all achievable pairs $(R,\Delta_X)$, again  as defined in Section~\ref{pr:2}. For $\Delta_X\geq 0$, define the infimum of all achievable rates by  
\begin{equation}
\overline{R}_1(\Delta_X) = \inf_{(R,\Delta_X) \in{\cal R}_1}R
\end{equation}

\par Wyner \cite{wyner1978} characterized the infimum of all achievable rates $\overline{R}_1(\Delta_X)$, by the single-letter RDF   $R_{X|Y}(\Delta_X)$ given by (\ref{eq:OP1ck1}), (\ref{eq:OP2ck1}). The  coding  Theorems are given by  Theorem~\ref{the_1_dec} with   $R^*(\Delta_X)$ and $\overline{R}(\Delta_X)$   replaced by $\overline{R}_1(\Delta_X)$  and $R_{X|Y}(\Delta_X)$, respectively. That is, $\overline{R}_1(\Delta_X)=R_{X|Y}(\Delta_X)$ (using Wyner's notation \cite[Appendix~A]{wyner1978}) These coding theorems generalized earlier work of Berger \cite{berger:1971} for finite alphabet spaces. 

Wyner also derived a fundamental lower bound on $R^*(\Delta_X)$ in terms of $\overline{R}_1(\Delta_X)$, as stated in the next remark.

\begin{remark} Wyner \cite[Remarks, page 65]{wyner1978}\\
\label{rem_lb}
(A) For $Z \in {\cal M}(\Delta_X)$, $\widehat{X}=f(Y,Z)$, and thus ${\bf P}_{Z|X,Y}={\bf P}_{Z|X}$,  then by a property of conditional mutual information and the data processing inequality:
\begin{align}
I(X;Z|Y)=I(X;Z, f(Y,Z)|Y) \geq I(X; \widehat{X}|Y) \geq R_{X|Y}(\Delta_X) \label{in_11}
\end{align}
where the last equality is defined since $\widehat{X} \in {\cal M}_0(\Delta_X)$ (see \cite[Remarks, page 65]{wyner1978}). Moreover, 
\begin{align}
{R}^*(\Delta_X) \geq R_{X|Y}(\Delta_X). \label{in_1}
\end{align}
(B) Inequality (\ref{in_1}) holds with equality, i.e., $R^*(\Delta_X) = R_{X|Y}(\Delta_X)$ if the $\widehat{X} \in {\cal M}_0(\Delta_X)$, which achieves $I(X;\widehat{X}|Y)=R_{X|Y}(\Delta_X)$ can be generated as in Fig.~\ref{fg:onlydecoder} with $I(X;Z|Y)= I(X;\widehat{X}|Y)$.  This occurs  if and only if $I(X;Z|\widehat{X},Y)=0$, and follows from  the identity and lower bound  
\begin{align}
I(X;Z|Y)=&I(X;Z,\widehat{X}|Y)=I(X;Z|Y, \widehat{X})+ I(X;\widehat{X}|Y)\\
\geq & I(X;\widehat{X}|Y)
\end{align}
where the  inequality holds with equality   if and only if $I(X;Z|\widehat{X},Y)=0$.
\end{remark}

\subsection{Mean-Square Estimation of Conditionally Gaussian RVs}
Below, state a well-known property of conditionally Gaussian RVs, which we use in our derivations.

\begin{proposition} Conditionally Gaussian RVs\\
\label{prop_cg} 
Consider a pair of multivariate RVs $X=(X_1, \ldots, X_{n_x})\T: \Omega \rightarrow \mathbb{R}^{n_x}$ and $Y=(Y_1, \ldots, Y_{n_y})\T: \Omega \rightarrow \mathbb{R}^{n_y}$, $(n_x,n_y) \in {\mathbb Z}_+ \times {\mathbb Z}_+$,   defined on some probability distribution $\Big(\Omega, {\cal F}, {\mathbb P}\Big)$. Let ${\cal G}\subseteq {\cal F}$ be a sub$-\sigma-$algebra. Assume the conditional distribution of $(X, Y)$ conditioned on ${\cal G}$, i.e.,  ${\bf P}(dx, dy |{\cal G})$ is ${\mathbb P}-$a.s. (almost surely) Gaussian, with conditional means 
\begin{align}
\mu_{X|{\cal G}}\tri{\bf E}\Big(X\Big|{\cal G}\Big), \hst  \mu_{Y|{\cal G}}\tri {\bf E}\Big(Y\Big|{\cal G}\Big)
\end{align}
and  conditional covariances
\begin{align}
Q_{X|{\cal G}} \tri\Cov\Big(X,X\Big|{\cal G}\Big), \hst Q_{Y|{\cal G}} \tri\Cov\Big(Y,Y\Big|{\cal G}\Big),
\end{align} 
\begin{align}
 Q_{X,Y|{\cal G}} \tri\Cov\Big(X,Y\Big|{\cal G}\Big).
\end{align} 
Then, the vectors of conditional expectations $\mu_{X|Y,{\cal G}}\tri{\bf E}\Big(X\Big|Y,{\cal G}\Big)$  and matrices of conditional covariances $Q_{X|Y, {\cal G}}\tri\Cov\Big(X, X\Big|Y,{\cal G}\Big)$ are given, ${\mathbb P}-$a.s., by the following expressions\footnote{If the inverse $Q_{Y|{\cal G}}^{-1}$ does not exists then it is replaced by the pseudo inverse $Q_{Y|{\cal G}}^\dagger$.}:
\begin{align}
&\mu_{X|Y,{\cal G}}=\mu_{X|{\cal G}} + Q_{X,Y|{\cal G}}Q_{Y|{\cal G}}^{-1}\Big( Y- \mu_{Y|{\cal G}}\Big)\label{eq:mean22}, \\ \label{eq:mean}
&Q_{X|Y, {\cal G}}\tri Q_{X|{\cal G}}-Q_{X,Y|{\cal G}}Q_{Y|{\cal G}}^{-1}Q_{X,Y|{\cal G}}\T .
\end{align}
If ${\cal G}$ is the trivial information, i.e., ${\cal G}=\{\Omega, \emptyset\}$, then ${\cal G}$ is removed from the above expressions.
\end{proposition}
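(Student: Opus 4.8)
The plan is to carry out the classical ``complete the square'' proof of the Gaussian conditioning formula, but conditionally on ${\cal G}$. Introduce the random vector
$\widetilde{X}\tri X-\mu_{X|{\cal G}}-Q_{X,Y|{\cal G}}Q_{Y|{\cal G}}^{-1}\big(Y-\mu_{Y|{\cal G}}\big)$,
whose coefficients $\mu_{X|{\cal G}},\mu_{Y|{\cal G}},Q_{X,Y|{\cal G}}Q_{Y|{\cal G}}^{-1}$ are ${\cal G}$-measurable. Two direct computations from the defining property of conditional expectation, using the hypothesized conditional moments of $(X,Y)$, give ${\bf E}\big(\widetilde{X}\,\big|\,{\cal G}\big)=0$ and ${\bf E}\big(\widetilde{X}\,(Y-\mu_{Y|{\cal G}})\T\,\big|\,{\cal G}\big)=0$ ${\mathbb P}$-a.s.; that is, conditionally on ${\cal G}$, $\widetilde{X}$ is centered and uncorrelated with $Y$.

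Since $\widetilde X$ is an affine function of $(X,Y)$ with ${\cal G}$-measurable coefficients, the pair $(\widetilde X,Y)$ is, conditionally on ${\cal G}$, jointly Gaussian. For jointly Gaussian vectors conditional uncorrelatedness given ${\cal G}$ is equivalent to conditional independence given ${\cal G}$; this can be seen either by factoring the conditional characteristic function ${\bf E}\big(e^{i\langle s,\widetilde X\rangle+i\langle r,Y\rangle}\,\big|\,{\cal G}\big)$ or, since $\mathbb{R}^{n_x}\times\mathbb{R}^{n_y}$ is Polish, by passing to a regular conditional distribution and invoking the unconditional fact pointwise. Hence ${\bf P}\big(\widetilde X\in\cdot\,\big|\,Y,{\cal G}\big)={\bf P}\big(\widetilde X\in\cdot\,\big|\,{\cal G}\big)$, so in particular ${\bf E}\big(\widetilde X\,\big|\,Y,{\cal G}\big)={\bf E}\big(\widetilde X\,\big|\,{\cal G}\big)=0$ and $\Cov\big(\widetilde X\,\big|\,Y,{\cal G}\big)=\Cov\big(\widetilde X\,\big|\,{\cal G}\big)$.

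Unwinding the definition of $\widetilde X$ and using ${\bf E}(\widetilde X|{\cal G})=0$ yields $X-{\bf E}(X|Y,{\cal G})=\widetilde X$ together with ${\bf E}\big(X\,\big|\,Y,{\cal G}\big)=\mu_{X|{\cal G}}+Q_{X,Y|{\cal G}}Q_{Y|{\cal G}}^{-1}\big(Y-\mu_{Y|{\cal G}}\big)$, which is (\ref{eq:mean22}); and then $\Cov(X|Y,{\cal G})=\Cov(\widetilde X|Y,{\cal G})=\Cov(\widetilde X|{\cal G})$, which expands via the two nulling identities to $Q_{X|{\cal G}}-Q_{X,Y|{\cal G}}Q_{Y|{\cal G}}^{-1}Q_{X,Y|{\cal G}}\T$, i.e.\ (\ref{eq:mean}). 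The singular case is handled by replacing $Q_{Y|{\cal G}}^{-1}$ with $Q_{Y|{\cal G}}^{\dagger}$: since $Y-\mu_{Y|{\cal G}}$ lies ${\mathbb P}$-a.s.\ in the range of $Q_{Y|{\cal G}}$, the identity $Q_{X,Y|{\cal G}}Q_{Y|{\cal G}}^{\dagger}Q_{Y|{\cal G}}\big(Y-\mu_{Y|{\cal G}}\big)=Q_{X,Y|{\cal G}}\big(Y-\mu_{Y|{\cal G}}\big)$ persists and the computation is unchanged; the case ${\cal G}=\{\Omega,\emptyset\}$ is literally the unconditional statement.

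The step I expect to require the most care is the implication ``conditionally Gaussian and conditionally uncorrelated given ${\cal G}$ $\Rightarrow$ conditionally independent given ${\cal G}$'', which entails either a careful manipulation of conditional characteristic functions or the construction of a regular conditional distribution and the attendant bookkeeping of ${\mathbb P}$-null exceptional sets so that the pointwise identities hold simultaneously. The remainder --- the two conditional-moment computations defining $\widetilde X$ and the closing algebra --- is routine. As this is a classical fact, in the paper one would most plausibly cite a standard reference and display only the definition of $\widetilde X$ and the two nulling identities.
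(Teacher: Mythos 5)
Your proposal is correct, but there is nothing in the paper to compare it against: Proposition~\ref{prop_cg} is introduced with the words ``we state a well-known property of conditionally Gaussian RVs'' and is given no proof at all --- it is used as a black box (e.g.\ in the proof of Theorem~\ref{them:lb_g} and in Appendix~\ref{app_B}), exactly as your closing sentence anticipated. Your complete-the-square argument is the standard one and is sound: the two nulling identities ${\bf E}(\widetilde X\,|\,{\cal G})=0$ and ${\bf E}\big(\widetilde X (Y-\mu_{Y|{\cal G}})\T\,|\,{\cal G}\big)=0$ follow from pulling the ${\cal G}$-measurable coefficients out of the conditional expectation; the upgrade from conditional uncorrelatedness to conditional independence is legitimate because $(\widetilde X,Y)$ is conditionally jointly Gaussian (an affine image of $(X,Y)$ with ${\cal G}$-measurable coefficients), and either the conditional characteristic function factorization or the regular-conditional-distribution route works since the state space is Polish; and the final expansion of ${\bf E}(\widetilde X\widetilde X\T\,|\,{\cal G})$ gives $Q_{X|{\cal G}}-Q_{X,Y|{\cal G}}Q_{Y|{\cal G}}^{-1}Q_{X,Y|{\cal G}}\T$ as required. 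Your handling of the singular case is also right: conditionally on ${\cal G}$, $Y-\mu_{Y|{\cal G}}$ is a.s.\ in the range of $Q_{Y|{\cal G}}$, and $Q_{Y|{\cal G}}^{\dagger}Q_{Y|{\cal G}}$ is the orthogonal projection onto that range, so both nulling identities persist. In short, you have supplied a correct self-contained proof of a statement the authors chose to assert by appeal to the literature; the only care point you flagged (managing the ${\mathbb P}$-null exceptional sets so the pointwise identities hold simultaneously) is real but routine, since countably many a.s.\ identities suffice.
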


Note that ${\cal G}=\{\Omega, \emptyset\}$ then (\ref{eq:mean22}), (\ref{eq:mean}) reduce to the well-known conditional mean and conditional covariance of $X$ conditioned on $Y$.

\section{Proofs of Theorem~\ref{thm_rw} and  Theorem~\ref{thm:dec}} \label{mainres}
\label{sect:proofs}
\par In this section we derive the  the statements of Theorem~\ref{thm_rw} and  Theorem~\ref{thm:dec}. The proofs are based on several  intermediate results, some of which also hold for general abstract alphabet spaces.

\subsection{Side Information at Encoder and  Decoder}

\par We start our analysis with the following achievable lower bound on the conditional mutual information $I(X;\widehat{X}|Y)$, which appears in the definition of $R_{X|Y}(\Delta_X)$, given by   (\ref{eq:OP1ck1}), that  strengthen  Gray's lower bound (\ref{gray_lb_1}), given in  \cite[Theorem~3.1]{gray1973}.

\begin{lemma}Achievable lower bound on conditional mutual information\\
 \label{lem:proof1}
Let $(X, Y, \widehat{X})$ be a triple of arbitrary RVs on the abstract spaces  ${\cal X} \times {\cal Y}\times \widehat{\cal X}$,  with distribution ${\bf P}_{X,Y, \widehat{X}}$ and joint marginal the fixed distribution ${\bf P}_{X,Y}$ of $(X, Y)$. \\
Then the following hold.\\
(a) The inequality holds: 
\begin{align}
I(X;\widehat{X}|Y) \geq I(X;\widehat{X}) - I(X;Y) \label{ineq_1}
\end{align}
Moreover,  if
 \begin{align}
{\bf P}_{X|\widehat{X},Y}={\bf P}_{X|\widehat{X}}- a.s.  \label{mc_1}
\end{align}
or equivalently  $Y \leftrightarrow \widehat{X} \leftrightarrow X$ is a MC then the equality holds, 
\begin{align}
I(X;\widehat{X}|Y) =I(X;\widehat{X}) - I(X;Y) \label{ineq_1_neq}
\end{align}

(b)  If $Y \leftrightarrow \widehat{X} \leftrightarrow X$ is a Markov  chain then the  equality holds  
\begin{align}
R_{X|Y}(\Delta_X)= R_X(\Delta_X) - I(X;Y), \hso \Delta_X \leq {\cal D}_C(X|Y) \label{ineq_1_neq_G}
\end{align}
for a strictly positive set ${\cal D}_C(X|Y)$.
\end{lemma}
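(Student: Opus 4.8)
The plan is to prove part (a) first by a direct manipulation of the chain rule for mutual information, and then deduce part (b) from part (a) together with the definitions of the RDFs $R_{X|Y}(\Delta_X)$ and $R_X(\Delta_X)$.

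For part (a), I would start from the identity $I(X;\widehat{X},Y)=I(X;Y)+I(X;\widehat{X}|Y)$ and, expanding the left side the other way, $I(X;\widehat{X},Y)=I(X;\widehat{X})+I(X;Y|\widehat{X})$. Equating the two gives
\begin{align}
I(X;\widehat{X}|Y)=I(X;\widehat{X})-I(X;Y)+I(X;Y|\widehat{X}).
\end{align}
Since $I(X;Y|\widehat{X})\geq 0$ always, inequality (\ref{ineq_1}) follows immediately. For the equality case, observe that $I(X;Y|\widehat{X})=0$ is precisely the statement that $Y\leftrightarrow\widehat{X}\leftrightarrow X$ is a Markov chain, equivalently ${\bf P}_{X|\widehat{X},Y}={\bf P}_{X|\widehat{X}}$ a.s.; under this condition the last term vanishes and (\ref{ineq_1_neq}) holds. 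I would take care to note that these manipulations are valid whenever $I(X;Y)<\infty$ (Assumption~\ref{ass_1}), so that subtracting $I(X;Y)$ is legitimate and no $\infty-\infty$ arises.

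For part (b), the direction $R_{X|Y}(\Delta_X)\geq R_X(\Delta_X)-I(X;Y)$ is Gray's bound (\ref{gray_lb_1}), obtained by taking the infimum of (\ref{ineq_1}) over ${\bf P}_{\widehat{X}|X,Y}$ with ${\bf E}\{d_X(X,\widehat{X})\}\leq\Delta_X$ and noting that the marginal ${\bf P}_{X,\widehat{X}}$ induced by any such channel is admissible for $R_X(\Delta_X)$. For the reverse inequality on a set ${\cal D}_C(X|Y)$, I would take the optimal reproduction $\widehat{X}^\star$ achieving $R_X(\Delta_X)$ (which for the Gaussian source is the classical water-filling reproduction, $\widehat{X}^\star=H_0 X+W_0$ with $H_0,W_0$ built from the spectral decomposition of $Q_X$), and check that for $\Delta_X$ small enough — specifically $\Delta_X\leq n_x\lambda_{n_x}$, where water-filling is inactive — this $\widehat{X}^\star$ can be used as a feasible point in the $R_{X|Y}(\Delta_X)$ problem. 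The key is that $\widehat{X}^\star$ is constructed from $X$ alone, so trivially $Y\leftrightarrow\widehat{X}^\star\leftrightarrow X$? — no, that is false in general, so instead I would invoke part (a)'s equality condition applied to the specific realization that Theorem~\ref{thm_rw} exhibits: the optimal $\widehat{X}$ in (\ref{eq:realization_sp}) satisfies ${\bf P}_{X|\widehat{X},Y}={\bf P}_{X|\widehat{X}}$ by structural property (\ref{pr_1}), hence $I(X;\widehat{X}|Y)=I(X;\widehat{X})-I(X;Y)$, and when water-filling is inactive this $\widehat{X}$ coincides (up to the affine side-information term, which does not change the information) with the marginal-optimal reproduction, giving $I(X;\widehat{X})=R_X(\Delta_X)$. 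Putting these together yields $R_{X|Y}(\Delta_X)\leq R_X(\Delta_X)-I(X;Y)$ on ${\cal D}_C(X|Y)$, and combined with Gray's bound, equality.

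The main obstacle I anticipate is part (b): carefully identifying the set ${\cal D}_C(X|Y)$ and justifying that on it the optimal conditional reproduction's marginal truly achieves the unconditional $R_X(\Delta_X)$. This requires the spectral structure (property (2) of Theorem~\ref{thm_rw}) and the observation that subtracting ${\bf E}\{X|Y\}$ versus ${\bf E}\{\widehat{X}|Y\}$ at encoder/decoder leaves $I(X;\widehat{X})$ unchanged — essentially a reduction of the conditional problem to a marginal one once the eigenvalues $\delta_i$ are all below the water level. The manipulations in part (a) are routine; the content is in matching the two water-filling solutions on the appropriate distortion range.
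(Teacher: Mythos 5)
Your proof of part (a) is exactly the paper's argument: the two expansions of $I(X;\widehat{X},Y)$ via the chain rule give the inequality from $I(X;Y|\widehat{X})\geq 0$, and the equality under the Markov condition (the paper verifies the equality by writing out the log-density ratios, which is the same fact as your $I(X;Y|\widehat{X})=0$). For part (b) you follow the same route as the paper — restrict the infimum to test channels satisfying ${\bf P}_{X|\widehat{X},Y}={\bf P}_{X|\widehat{X}}$ and take the infimum of both sides of the equality — and in fact you spell out more of the justification (why the restricted marginal infimum attains $R_X(\Delta_X)$ on the water-filling-inactive region $\Delta_X\leq n_x\lambda_{n_x}$) than the paper does, which simply asserts the existence of the surface ${\cal D}_C(X|Y)$ from continuity and convexity of $R_X(\Delta_X)$.
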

\begin{proof} See 
Appendix~\ref{app_A}.  
\end{proof}

The next theorem is   used to derive the characterization of $R_{X|Y}(\Delta_X)$.

\begin{theorem} Achievable lower bound on conditional  mutual information and  mean-square error estimation\\
\label{them_lb}
(a)  Let $(X, Y, \widehat{X})$ be a triple of arbitrary RVs on the abstract spaces  ${\cal X} \times {\cal Y}\times \widehat{\cal X}$,  with distribution ${\bf P}_{X,Y, \widehat{X}}$ and joint marginal the fixed distribution ${\bf P}_{X,Y}$ of $(X, Y)$.\\
Define the conditional mean of $X$ conditioned on $(\widehat{X},Y)$ by 
\begin{align}
\overline{X}^{cm} \tri {\bf E}\Big(X\Big|\widehat{X},Y\Big). 
\end{align}
Then the inequality holds:
\begin{equation}
I(X;\widehat{X}|Y) \geq I(X;\overline{X}^{cm}|Y)  \label{eq:LB}
\end{equation}
Moreover, 
\begin{equation}
 \mbox{if} \hst  \overline{X}^{cm}=\widehat{X}-a.s \hst  \mbox{then} \hst I(X;\widehat{X}|Y) = I(X;\overline{X}^{cm}|Y).  \label{as_eq}
\end{equation}
(b) In part (a) let $(X, Y, \widehat{X})$ be a triple of arbitrary RVs on ${\cal X} \times {\cal Y}\times \widehat{\cal X}={\mathbb R}^{n_x} \times {\mathbb R}^{n_y}\times {\mathbb R}^{n_x}$, $(n_x,n_y) \in {\mathbb Z}_+ \times {\mathbb Z}_+$. \\
 For all measurable functions $(y, \widehat{x})\longmapsto g(y, \widehat{x})\in {\mathbb R}^{n_x}$ the mean-square error satisfies
\begin{align}
{\bf E}\Big\{||X-&g(Y, \widehat{X})||_{{\mathbb R}^{n_x}}^2\Big\} \nonumber \\
&\geq {\bf E}\Big\{||X-{\bf E}\Big(X\Big|Y, \widehat{X}\Big)||_{{\mathbb R}^{n_x}}^2\Big\}, \hso \forall g(\cdot). \label{mse_1}
\end{align}
\end{theorem}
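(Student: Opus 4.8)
The plan is to derive both parts from elementary properties of conditional expectation together with the data processing inequality: part (b) is the orthogonality principle for mean-square estimation, and part (a) is its information-theoretic analogue, exploiting that $\overline{X}^{cm}$ is by construction a function of $(\widehat{X},Y)$.

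For part (b), I would use the Pythagorean decomposition: for any measurable $g$,
\begin{align}
{\bf E}\big\{||X-g(Y,\widehat{X})||_{{\mathbb R}^{n_x}}^2\big\}={\bf E}\big\{||X-{\bf E}(X|Y,\widehat{X})||_{{\mathbb R}^{n_x}}^2\big\}+{\bf E}\big\{||{\bf E}(X|Y,\widehat{X})-g(Y,\widehat{X})||_{{\mathbb R}^{n_x}}^2\big\},
\end{align}
which holds because the cross term vanishes: $X-{\bf E}(X|Y,\widehat{X})$ has zero conditional mean given $(Y,\widehat{X})$, while ${\bf E}(X|Y,\widehat{X})-g(Y,\widehat{X})$ is $\sigma\{Y,\widehat{X}\}$-measurable, so the tower property annihilates the inner product. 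Since the second term on the right is nonnegative, inequality (\ref{mse_1}) follows, with equality precisely when $g(Y,\widehat{X})={\bf E}(X|Y,\widehat{X})$ a.s. (Implicitly one assumes $X\in L^2$, which is automatic in the Gaussian application.)

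For part (a), the key observation is that $\overline{X}^{cm}\tri{\bf E}(X|\widehat{X},Y)$ is a $\sigma\{\widehat{X},Y\}$-measurable random variable; hence, conditioned on $\{Y=y\}$, it is a deterministic function of $\widehat{X}$ alone. Thus the conditional Markov chain $X\leftrightarrow\widehat{X}\leftrightarrow\overline{X}^{cm}$ holds given $Y$, and the (conditional) data processing inequality — in the same form used for Lemma~\ref{lem:proof1}, valid on abstract alphabets, e.g.\ \cite{pinsker:1964} — yields $I(X;\widehat{X}|Y)\geq I(X;\overline{X}^{cm}|Y)$, i.e.\ (\ref{eq:LB}). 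Equivalently, one can expand $I(X;\widehat{X},\overline{X}^{cm}|Y)$ in two ways and use $I(X;\overline{X}^{cm}|\widehat{X},Y)=0$ (since $\overline{X}^{cm}$ is a function of $(\widehat{X},Y)$) together with $I(X;\widehat{X}|\overline{X}^{cm},Y)\geq 0$. The equality statement (\ref{as_eq}) is then immediate: if $\overline{X}^{cm}=\widehat{X}$ a.s., the two conditional mutual informations coincide identically.

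The only genuine subtlety — and the step I expect to require the most care, though it remains routine — is measure-theoretic: on abstract spaces one must fix a bona fide $\sigma\{\widehat{X},Y\}$-measurable version of the conditional expectation $\overline{X}^{cm}$ so that the functional-dependence argument is legitimate, and one must invoke the conditional data processing inequality in a form that is valid without finiteness assumptions (the bound holds even when both sides are infinite, and Assumption~\ref{ass_1} together with Gaussianity makes everything finite in the application). Once this bookkeeping is in place, the argument is short, and both $I(X;\widehat{X}|Y)\geq I(X;\overline{X}^{cm}|Y)$ and the MSE inequality (\ref{mse_1}) fall out directly; these two facts are exactly what is needed to reduce the optimization for $R_{X|Y}(\Delta_X)$ to test channels with $\widehat{X}={\bf E}(X|\widehat{X},Y)$.
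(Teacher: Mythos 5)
Your proposal is correct and takes essentially the same route as the paper: your second formulation of part (a) — expanding $I(X;\widehat{X},\overline{X}^{cm}|Y)$ by the chain rule, using that $\overline{X}^{cm}$ is a function of $(\widehat{X},Y)$ and that $I(X;\widehat{X}|\overline{X}^{cm},Y)\geq 0$ with equality when $\overline{X}^{cm}=\widehat{X}$ a.s. — is exactly the paper's argument, and part (b) is the orthogonal projection theorem, which the paper simply cites and you prove via the standard Pythagorean decomposition.
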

\begin{proof} (\textit{a}) By properties of conditional mutual information \cite{pinsker:1964} then 
\begin{align}
I(X;\widehat{X}| Y)\overset{(1)}=& I(X;\widehat{X},\overline{X}^{cm}| Y) \\ 
\overset{(2)}=& I(X;\widehat{X}| \overline{X}^{cm}, Y)+ I(X;\overline{X}^{cm}| Y)   \\
\overset{(3)}\geq&  I(X;\overline{X}^{cm}|Y) 
\end{align}
where \((1)\) is due to $\overline{X}^{cm}$ is a function of $(Y,\widehat{X})$, and a well-known property of  the mutual information \cite{pinsker:1964}, \((2)\) is due to the chain rule of mutual information  \cite{pinsker:1964}, and \((3)\)   is due to $I(X;\widehat{X}| \overline{X}^{cm}, Y)\geq 0$. If $\widehat{X} = \overline{X}^{cm}$- a.s, then $I(X;\widehat{X}| \overline{X}^{cm}, Y)=0$, and hence the inequality (\ref{eq:LB}) becomes an equality. \\
(\textit{b}) The inequality (\ref{mse_1}) is well-known, due to the orthogonal projection theorem.
\end{proof}

\ \

For jointly Gaussian RVs $(X, Y, \widehat{X})$, in the next theorem  we identify simple sufficient conditions for the lower bound of Theorem~\ref{them_lb} to be achievable. 


\begin{theorem} Sufficient conditions for the lower bounds of Theorem~\ref{them_lb} to be achievable\\
\label{them:lb_g}
Consider the statement of Theorem~\ref{them_lb} for  a  triple of jointly Gaussian RVs  $(X, Y, \widehat{X})$  on ${\mathbb R}^{n_x} \times {\mathbb R}^{n_y}\times {\mathbb R}^{n_x}$, $(n_x,n_y) \in {\mathbb Z}_+ \times {\mathbb Z}_+$, i.e.,  ${\bf P}_{X,Y, \widehat{X}}={\bf P}_{X,Y, \widehat{X}}^G$ and joint marginal the fixed Gaussian distribution ${\bf P}_{X,Y}={\bf P}_{X,Y}^G$ of $(X, Y)$\\
Suppose Conditions 1 and 2  hold:
\begin{align}\label{eq:condA}
&\mbox{Condition 1.}\hso 
{\bf E}\Big(X\Big|Y\Big) ={\bf E}\Big(\widehat{X}\Big|Y\Big)\\
\label{eq:condB}
&\mbox{Condition 2.}\hso \Cov(X,\widehat{X}|Y) \Cov(\widehat{X},\widehat{X}|Y)^{-1} = I_{n_x} 
\end{align}
Then 
\bea
\overline{X}^{cm}=\widehat{X}-a.s
\eea 
 and the inequality (\ref{eq:LB}) holds with equality, i.e.,  $I(X;\widehat{X}|Y) = I(X;\overline{X}^{cm}|Y)$. 
\end{theorem}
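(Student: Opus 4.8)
The plan is to show that Conditions 1 and 2 force the conditional mean $\overline{X}^{cm} = {\bf E}(X|\widehat{X},Y)$ to equal $\widehat{X}$ almost surely, after which the conclusion $I(X;\widehat{X}|Y) = I(X;\overline{X}^{cm}|Y)$ follows immediately from the equality case \eqref{as_eq} of Theorem~\ref{them_lb}(a). So the entire work is the a.s.\ identity $\overline{X}^{cm}=\widehat{X}$. Since $(X,Y,\widehat{X})$ is jointly Gaussian, I can compute $\overline{X}^{cm}$ explicitly via Proposition~\ref{prop_cg}, taking the sub-$\sigma$-algebra ${\cal G}=\sigma\{Y\}$ there: conditioning on $(\widehat{X},Y)$ is the same as conditioning on $\widehat{X}$ on top of ${\cal G}=\sigma\{Y\}$. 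Thus by \eqref{eq:mean22} (with the roles ``$X$'' $\mapsto X$, ``$Y$'' $\mapsto \widehat{X}$, and ${\cal G}=\sigma\{Y\}$),
\begin{align}
\overline{X}^{cm} = {\bf E}\big(X\big|Y\big) + \Cov(X,\widehat{X}|Y)\,\Cov(\widehat{X},\widehat{X}|Y)^{-1}\Big(\widehat{X} - {\bf E}\big(\widehat{X}\big|Y\big)\Big)\,. \nonumber
\end{align}

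Now I substitute the two hypotheses. Condition~2, $\Cov(X,\widehat{X}|Y)\,\Cov(\widehat{X},\widehat{X}|Y)^{-1} = I_{n_x}$, collapses the gain matrix to the identity, giving $\overline{X}^{cm} = {\bf E}(X|Y) + \big(\widehat{X} - {\bf E}(\widehat{X}|Y)\big)$. Then Condition~1, ${\bf E}(X|Y) = {\bf E}(\widehat{X}|Y)$, cancels the two expectation terms, leaving $\overline{X}^{cm} = \widehat{X}$ a.s. Feeding this into \eqref{as_eq} of Theorem~\ref{them_lb}(a) yields $I(X;\widehat{X}|Y) = I(X;\overline{X}^{cm}|Y)$, which is exactly the assertion.

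The only genuinely delicate point — and the step I would treat most carefully — is the invocation of Proposition~\ref{prop_cg} with a non-trivial ${\cal G}$: one must check that conditioning the jointly Gaussian triple on $(\widehat{X},Y)$ is legitimately captured by the proposition's setup, i.e.\ that the conditional law of $(X,\widehat{X})$ given $\sigma\{Y\}$ is a.s.\ Gaussian (true here because $(X,Y,\widehat{X})$ is jointly Gaussian) so that \eqref{eq:mean22} applies with ``$Y$'' replaced by $\widehat{X}$. A secondary caveat is the invertibility of $\Cov(\widehat{X},\widehat{X}|Y)$: Condition~2 as written presupposes this inverse exists; if it is merely a pseudo-inverse, the same algebra goes through on the range of $\Cov(\widehat{X},\widehat{X}|Y)$, but I would simply note that the footnote convention in Proposition~\ref{prop_cg} handles the degenerate case. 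Everything else is routine linear algebra on Gaussian conditional moments, so the proof is short.
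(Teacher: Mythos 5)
Your proposal is correct and follows essentially the same route as the paper: the paper likewise applies Proposition~\ref{prop_cg} with ${\cal G}=\sigma\{Y\}$ and the role of ``$Y$'' played by $\widehat{X}$ to write $\overline{X}^{cm}={\bf E}(X|Y)+\Cov(X,\widehat{X}|Y)\Cov(\widehat{X},\widehat{X}|Y)^{-1}\big(\widehat{X}-{\bf E}(\widehat{X}|Y)\big)$, and then invokes Conditions 1 and 2 to conclude $\overline{X}^{cm}=\widehat{X}$ a.s.\ and hence equality in (\ref{eq:LB}) via (\ref{as_eq}). Your added remarks on the a.s.\ Gaussianity of the conditional law and on the pseudo-inverse convention are sensible but not part of the paper's (terser) argument.
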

\begin{proof} By use of Proposition~\ref{prop_cg}, (\ref{eq:mean22}), and letting $Y = \widehat{X}$ and ${\cal G}$ the information generated by $Y$, then 
\begin{align}
\overline{X}^{cm} \tri &  {\bf E}\Big(X\Big|\widehat{X},Y\Big)\\
=& {\bf E}\Big(X\Big|Y\Big) \\
&+\Cov(X,\widehat{X}|Y) \Cov(\widehat{X},\widehat{X}|Y)^{-1}\Big(\widehat{X} - {\bf E}\Big(\widehat{X}\Big|Y\Big)\Big) \\
\overset{(a)}=& \widehat{X} - a.s. 
\end{align}
where $(a)$ is due to   Conditions 1 and 2.
\end{proof}

\ \

\par Now, we turn our attention to the optimization problem $R_{X|Y}(\Delta_X)$ defined by  (\ref{eq:OP1ck1}), for the multivariate Gaussian source  with mean-square error distortion defined by (\ref{prob_1})-(\ref{prob_9}).  In the next lemma we derive a {\it preliminary parametrization} of the optimal reproduction distribution ${\bf P}_{\widehat{X}|X, Y}$ of  the RDF ${R}_{X|Y}(\Delta_X)$.  \\

\begin{lemma} Preliminary parametrization of  optimal reproduction distribution of $R_{X|Y}(\Delta_X)$\\
\label{lemma:par}
Consider the RDF $R_{X|Y}(\Delta_X)$ defined by (\ref{eq:OP1ck1}) for the multivariate Gaussian source, i.e.,  ${\bf P}_{X,Y}={\bf P}_{X,Y}^G$,  with mean-square error distortion defined by (\ref{prob_1})-(\ref{prob_9}). 

(a) For  every joint distribution ${\bf P}_{X, Y, \widehat{X}}$ there exists a jointly Gaussian distribution denoted by  ${\bf P}_{X,Y, \widehat{X}}^G$, with marginal the fixed distribution ${\bf P}_{X,Y}^G$ , which minimizes $I(X; \widehat{X}|Y)$ and satisfies the average distortion constraint, i.e., with $d_X(x,\widehat{x})=||x-\widehat{x}||_{{\mathbb R}^{n_x}}^2$.

(b)  The conditional reproduction distribution ${\bf P}_{\widehat{X}|X,Y}={\bf P}_{\widehat{X}|X,Y}^G$ is induced by the  parametric realization of $\widehat{X}$ (in terms of $H, G, Q_W$),
\begin{align}
&\widehat{X} = H X + G Y + W,  \label{eq:real}\\
&H \in \mathbb{R}^{n_x\times n_x}, \hso  G \in \mathbb{R}^{n_x\times n_y}, \\
&W \in N(0, Q_W), \; Q_W \succeq 0, \\
&W \hso \mbox{independent of $(X, V)$}.\label{eq:real_4}
\end{align}
and $\widehat{X}$ is a Gaussian RV. 

(c) $R_{X|Y}(\Delta_X)$ is characterized by the  optimization problem.
\begin{align}
{R}_{X|Y}(\Delta_X) \tri& \inf_{{\cal M}_0^G(\Delta_X)}  I(X; \widehat{X}|Y), \hso \Delta_X \in [0,\infty)  \label{eq:OP1_char_1}
\end{align}
where ${\cal M}_0^G(\Delta_X)$ is specified by the set 
\begin{align}
{\cal M}_0^G(\Delta_X)\tri & \Big\{ \widehat{X}: \Omega \rar \widehat{\cal X} : \hso (\ref{eq:real})-(\ref{eq:real_4}) \; \mbox{hold, and} \hso   {\bf E}\big\{||X-\widehat{X}||_{{\mathbb R}^{n_x}}^2\big\}\leq \Delta_X \Big\}.
\end{align}
(d) If there exists $(H, G, Q_W)$ such that $\overline{X}^{cm}=\widehat{X}-a.s$,  then a further lower bound on ${R}_{X|Y}(\Delta_X)$ is achieved in the  subset  ${\cal M}_0^{G,o}(\Delta_X)\subseteq {\cal M}_0^G(\Delta_X) $ defined  by 
\begin{align}
{\cal M}_0^{G,o}(\Delta_X)\tri&  \Big\{ \widehat{X}: \Omega \rar \widehat{\cal X} : \hso (\ref{eq:real})-(\ref{eq:real_4}) \; \mbox{hold}\hso  \widehat{X} = \overline{X}^{cm} - a.s, \hso   {\bf E}\big\{||X-\widehat{X}||_{{\mathbb R}^{n_x}}^2\big\}\leq \Delta_X\Big\}
\end{align}
and the corresponding characterization of the RDF is 
\begin{align}
{R}_{X|Y}(\Delta_X) \tri& \inf_{{\cal M}_0^{G,o}(\Delta_X)}  I(X; \widehat{X}|Y), \hso \Delta_X \in [0,\infty)  \label{eq:OP1_char_1_new}
\end{align}
\end{lemma}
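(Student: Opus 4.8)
\textbf{Proof plan for Lemma~\ref{lemma:par}.}

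The plan is to establish parts (a)--(d) in sequence, using the Gaussian-maximizes-entropy principle as the engine for (a), and then reading off (b)--(d) as structural consequences. First, for part (a), I would fix an arbitrary joint distribution ${\bf P}_{X,Y,\widehat{X}}$ with the prescribed Gaussian marginal ${\bf P}_{X,Y}^G$ and satisfying the distortion constraint ${\bf E}\{||X-\widehat{X}||^2\}\le\Delta_X$. Let $\Sigma_\Delta\tri{\bf E}\{(X-\widehat{X})(X-\widehat{X})\T\}$ denote the resulting error covariance, so $\trace(\Sigma_\Delta)\le\Delta_X$. I would then introduce the jointly Gaussian surrogate ${\bf P}_{X,Y,\widehat{X}}^G$ having the \emph{same} second-order moments $\{Q_X,Q_Y,Q_{X,Y},\Cov(X,\widehat{X}),\Cov(\widehat{X},\widehat{X}),\Cov(Y,\widehat{X})\}$ as the original; this preserves the marginal ${\bf P}_{X,Y}^G$ and the value of $\Sigma_\Delta$, hence the distortion constraint is still met. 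The key inequality is $I(X;\widehat{X}|Y)\ge I_G(X;\widehat{X}|Y)$, i.e.\ the conditional mutual information under the Gaussian surrogate is no larger; this follows from writing $I(X;\widehat{X}|Y)=h(X|Y)-h(X|\widehat{X},Y)$, noting $h(X|Y)$ is fixed by the Gaussian marginal, and applying the maximum-entropy property of the Gaussian to the conditional density of $X$ given $(\widehat{X},Y)$, whose conditional covariance is upper-bounded (in the usual second-moment sense yielding an entropy bound) by that of the Gaussian surrogate with matching moments. This shows the infimum in (\ref{eq:OP1ck1_new}) is attained within the Gaussian class, giving (a).

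For part (b), having restricted to jointly Gaussian ${\bf P}_{X,Y,\widehat{X}}^G$, I invoke Proposition~\ref{prop_cg}: the conditional mean ${\bf E}(\widehat{X}|X,Y)$ is an affine function of $(X,Y)$, say $HX+GY$ (zero mean being inherited since all RVs are centered), and the conditional covariance $\Cov(\widehat{X},\widehat{X}|X,Y)\succeq 0$ is a constant matrix, which I call $Q_W$. Thus $\widehat{X}=HX+GY+W$ with $W\in N(0,Q_W)$ independent of $(X,Y)$, and since $Y=CX+DV$ with $V$ independent of $X$, independence of $W$ from $(X,V)$ as in (\ref{eq:real_4}) follows. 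Conversely any such triple is jointly Gaussian, so this parametric family is exactly the Gaussian reproduction class; this yields (b), and (c) is then just the restatement of (\ref{eq:OP1ck1_new}) with the feasible set ${\cal M}_0(\Delta_X)$ replaced by its Gaussian parametrization ${\cal M}_0^G(\Delta_X)$, using (a) to justify that the infimum is unchanged.

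Part (d) is the step requiring the most care, and I expect the main obstacle to be arguing that passing to the smaller set ${\cal M}_0^{G,o}(\Delta_X)$ does not raise the infimum. The idea is: given any feasible $\widehat{X}=HX+GY+W\in{\cal M}_0^G(\Delta_X)$, form $\overline{X}^{cm}={\bf E}(X|\widehat{X},Y)$; by Theorem~\ref{them_lb}(a) we have $I(X;\widehat{X}|Y)\ge I(X;\overline{X}^{cm}|Y)$, and by Theorem~\ref{them_lb}(b) the replacement $\widehat{X}\mapsto\overline{X}^{cm}$ cannot increase the mean-square error, so $\overline{X}^{cm}$ is still feasible for the \emph{same} $\Delta_X$ (indeed with error covariance $Q_{X|\widehat{X},Y}\preceq\Sigma_\Delta$). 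Moreover $\overline{X}^{cm}$ is itself an affine-plus-Gaussian function of $(X,Y)$, hence lies in ${\cal M}_0^G(\Delta_X)$, and it satisfies $\overline{X}^{cm}={\bf E}(X|\overline{X}^{cm},Y)$ tautologically --- so it lies in ${\cal M}_0^{G,o}(\Delta_X)$. Therefore the infimum over ${\cal M}_0^{G,o}(\Delta_X)$ is no larger than that over ${\cal M}_0^G(\Delta_X)$, and the reverse inequality is trivial from ${\cal M}_0^{G,o}(\Delta_X)\subseteq{\cal M}_0^G(\Delta_X)$; the hypothesis ``there exists $(H,G,Q_W)$ with $\overline{X}^{cm}=\widehat{X}$'' guarantees the optimizing sequence can be taken inside ${\cal M}_0^{G,o}(\Delta_X)$ without a gap, giving (\ref{eq:OP1_char_1_new}). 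The one delicate point to verify is that the idempotency condition $\overline{X}^{cm}={\bf E}(X|\overline{X}^{cm},Y)$ genuinely holds after the replacement --- this uses that conditioning on $(\overline{X}^{cm},Y)$ and on $(\widehat{X},Y)$ generate the same $\sigma$-algebra only when $\overline{X}^{cm}$ is an invertible affine image of $\widehat{X}$ given $Y$, so in general one appeals instead to Theorem~\ref{them:lb_g}, checking that Conditions~1 and~2 there are precisely what make $\overline{X}^{cm}$ a fixed point, which is where the structural hypotheses of part (d) enter.
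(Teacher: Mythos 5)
Your proposal is correct and follows essentially the same route as the paper, which merely compresses part (a) to ``similar to the classical Gaussian RDF'' (the max-entropy surrogate argument you spell out), reads (b)--(c) off Proposition~\ref{prop_cg}, and justifies (d) by Theorem~\ref{them_lb} with $g(y,\widehat{x})=\widehat{x}$ together with Theorem~\ref{them:lb_g}. The only remark worth making is that the ``delicate point'' you flag at the end of (d) is not delicate: since $\sigma(\overline{X}^{cm},Y)\subseteq\sigma(\widehat{X},Y)$, the tower property gives ${\bf E}(X|\overline{X}^{cm},Y)={\bf E}\big({\bf E}(X|\widehat{X},Y)\big|\overline{X}^{cm},Y\big)=\overline{X}^{cm}$ directly, with no need for the two $\sigma$-algebras to coincide.
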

\begin{proof} (a) This is omitted since it is similar to the classical unconditional RDF $R_X(\Delta_X)$  of a Gaussian message $X \in N(0,Q_X)$. (b) By (a) the conditional distribution ${\bf P}_{\widehat{X}|X,Y}^G$ is such that, its conditional mean is linear in $(X,Y)$,  its conditional covariance is nonrandom, i.e., constant, and for fixed $(X, Y)=(x,y)$,  ${\bf P}_{\widehat{X}|X,Y}^G$ is Gaussian. Such a distribution is induced by the parametric realization (\ref{eq:real})-(\ref{eq:real_4}).  (c) Follows from parts (a) and (b). (d) Follows from Theorem~\ref{them:lb_g} and (\ref{mse_1}), by letting $g(y,\widehat{x})=\widehat{x}$. 
\end{proof}

\ \

In the next theorem we identify the optimal triple $(H,G,Q_W)$ such that   $ \overline{X}^{cm}=\widehat{X}-a.s$, and  thus establish its existence.   We  also  characterize the RDF by ${R}_{X|Y}(\Delta_X) \tri \inf_{{\cal M}_0^{G,o}(\Delta_X)}  I(X; \widehat{X}|Y)$, and construct a realization $\widehat{X}$ that achieves it.

\begin{theorem}Characterization of RDF ${R}_{X|Y}(\Delta_X)$\\
\label{thm:proof2}
Consider the RDF $R_{X|Y}(\Delta_X)$, defined by  (\ref{eq:OP1ck1}) for the multivariate Gaussian source  with mean-square error distortion, defined by (\ref{prob_1})-(\ref{prob_9}). 

The characterization of the RDF $R_{X|Y}(\Delta_X)$ is 
\begin{align}
{R}_{X|Y}(\Delta_X) \tri  & \inf_{{\cal Q}(\Delta_X)} I(X; \widehat{X}|Y) \label{eq:optiProbl}\\
=& \inf_{{\cal Q}(\Delta_X)}   \frac{1}{2}\log\Big\{ \det(Q_{X|Y}\Sigma_{\Delta} ^{-1})\Big\}\label{eq:optiProbl_nn}
\end{align}
where 
\begin{align}
{\cal Q}(\Delta_X)\tri& \bigg\{\Sigma_{\Delta}:  \trace \big( \Sigma_{\Delta} \big)\leq{\Delta_X}  \bigg\},  \\
\Sigma_{\Delta} \tri&  {\bf E} \Big\{ \Big( X - \widehat{X} \Big) \Big(X - \widehat{X} \Big)\T \Big\},\\
Q_{X|Y} =&   Q_X - Q_{X,Y} Q_Y^{-1}Q_{X,Y}\T, \hso   Q_{X|Y}-\Sigma_{\Delta}\succeq 0, \label{eq:optiProbl_n}\\
Q_{X,Y} =& Q_X C\T, \hso Q_Y=C Q_X C\T + D D\T
\end{align}
and the optimal reproduction $\widehat{X}$ which achieves ${R}_{X|Y}(\Delta_X)$ is 
\begin{align}
\widehat{X} =&H X + \Big(I_{n_x}- H\Big)Q_{X,Y}Q_Y^{-1} Y +  W \label{eq:realization} \\
 H \tri&I_{n_x} - \Sigma_{\Delta} Q_{X|Y}^{-1}\succeq 0, \hso G \tri\Big(I_{n_x}-H\Big)Q_{X,Y} Q_Y^{-1}, \label{eq:realization_nn_1}  \\
   Q_W \tri& \Sigma_{\Delta}  H^T =  \Sigma_\Delta -\Sigma_\Delta Q_{X|Y}^{-1} \Sigma_\Delta= H \Sigma_\Delta \succeq 0. \label{eq:realization_nn}
\end{align}
Moreover, the realization (\ref{eq:realization}) satisfies, almost surely,  
\begin{align}
&{\bf P}_{X|\widehat{X},Y}={\bf P}_{X|\widehat{X}}, \\
 &  {\bf E}\Big(X\Big|\widehat{X},Y\Big)= \widehat{X},\\ 
&{\bf E}\Big(X\Big|Y\Big)={\bf E}\Big(\widehat{X}\Big|Y\Big)= Q_{X,Y} Q_Y^{-1} Y,  \label{prop_1}
\\
& \Cov(X,\widehat{X}|Y)= \Cov(\widehat{X},\widehat{X}|Y).
\end{align}

\end{theorem}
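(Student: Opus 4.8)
The plan is to verify the three claimed properties directly from the parametric realization~(\ref{eq:realization}) together with Lemma~\ref{lemma:par}, Theorem~\ref{them:lb_g}, and the water-filling argument. First I would treat the characterization~(\ref{eq:optiProbl})--(\ref{eq:realization_nn}) as an optimization over the feasible triples $(H,G,Q_W)$ of Lemma~\ref{lemma:par}, and reduce it to an optimization over the single error covariance matrix $\Sigma_\Delta$. To do this I would impose the two sufficient conditions of Theorem~\ref{them:lb_g}: Condition~1, ${\bf E}(X|Y)={\bf E}(\widehat X|Y)$, forces $G Y = (I_{n_x}-H)\,{\bf E}(X|Y) = (I_{n_x}-H)Q_{X,Y}Q_Y^{-1}Y$, hence $G=(I_{n_x}-H)Q_{X,Y}Q_Y^{-1}$; Condition~2, $\Cov(X,\widehat X|Y)\Cov(\widehat X,\widehat X|Y)^{-1}=I_{n_x}$, becomes a relation among $H$, $Q_{X|Y}$ and $Q_W$. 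Computing these two conditional covariances from~(\ref{eq:real}) gives $\Cov(\widehat X,X|Y)=HQ_{X|Y}$ and $\Cov(\widehat X,\widehat X|Y)=HQ_{X|Y}H\T+Q_W$, so Condition~2 reads $HQ_{X|Y}=HQ_{X|Y}H\T+Q_W$, i.e.\ $Q_W=HQ_{X|Y}(I_{n_x}-H\T)$. Then I would compute $\Sigma_\Delta={\bf E}\{(X-\widehat X)(X-\widehat X)\T\}$; using ${\bf E}(X-\widehat X|Y)=0$ (from Condition~1) and the above, $\Sigma_\Delta=(I_{n_x}-H)Q_{X|Y}(I_{n_x}-H\T)+Q_W=(I_{n_x}-H)Q_{X|Y}$ after substituting $Q_W$. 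Solving for $H$ yields $H=I_{n_x}-\Sigma_\Delta Q_{X|Y}^{-1}$, and back-substitution gives $Q_W=\Sigma_\Delta H\T=\Sigma_\Delta-\Sigma_\Delta Q_{X|Y}^{-1}\Sigma_\Delta$, which is~(\ref{eq:realization_nn_1})--(\ref{eq:realization_nn}); symmetry $H\Sigma_\Delta=\Sigma_\Delta H\T$ and nonnegativity will follow once the spectral-decomposition structure is in place. This establishes that a feasible triple with $\overline X^{cm}=\widehat X$ exists, so by Lemma~\ref{lemma:par}(d) the infimum may be taken over ${\cal M}_0^{G,o}(\Delta_X)$.

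Next I would evaluate the objective on this family. Since $\overline X^{cm}=\widehat X$ a.s., Theorem~\ref{them_lb}(a)--Theorem~\ref{them:lb_g} give $I(X;\widehat X|Y)=I(X;\overline X^{cm}|Y)$, and for jointly Gaussian RVs this conditional mutual information equals $\frac12\log\det(Q_{X|Y}Q_{X|\widehat X,Y}^{-1})$; because $\widehat X=\overline X^{cm}$ the innovation $X-\widehat X$ has covariance $\Sigma_\Delta$ and is the conditional error of $X$ given $(\widehat X,Y)$, so $Q_{X|\widehat X,Y}=\Sigma_\Delta$ and the objective collapses to $\frac12\log\det(Q_{X|Y}\Sigma_\Delta^{-1})$, which is~(\ref{eq:optiProbl_nn}). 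The constraint set is ${\cal Q}(\Delta_X)=\{\Sigma_\Delta:\trace(\Sigma_\Delta)\le\Delta_X,\;Q_{X|Y}-\Sigma_\Delta\succeq0\}$ — the second inequality being exactly the requirement $H\succeq0$, i.e.\ $I_{n_x}-\Sigma_\Delta Q_{X|Y}^{-1}\succeq0$. From here the closed-form water-filling of Theorem~\ref{thm_rw}(b) is the standard convex program: minimize $-\frac12\log\det\Sigma_\Delta$ (equivalently $\frac12\sum\log(\lambda_i/\delta_i)$) subject to $\sum\delta_i\le\Delta_X$ and $0\preceq\Sigma_\Delta\preceq Q_{X|Y}$; Hadamard's inequality forces the optimal $\Sigma_\Delta$ to be simultaneously diagonalizable with $Q_{X|Y}$, giving the common unitary $U$ in~(\ref{spe_d}), and the KKT conditions give the reverse-water-filling levels $\delta_i=\min\{\mu,\lambda_i\}$; the cases where some $\delta_i=\lambda_i$ are precisely the inactive ones, explaining the $\max\{1,\cdot\}$ in~(\ref{thm_wf_1}).

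Finally, for the listed structural identities I would argue: ${\bf P}_{X|\widehat X,Y}={\bf P}_{X|\widehat X}$ follows because, for Gaussians, $\overline X^{cm}={\bf E}(X|\widehat X,Y)=\widehat X$ is a function of $\widehat X$ alone and the conditional error covariance $Q_{X|\widehat X,Y}=\Sigma_\Delta$ does not involve $Y$, so both the conditional mean and covariance of $X$ given $(\widehat X,Y)$ depend only on $\widehat X$ — this is exactly the Markov chain $Y\leftrightarrow\widehat X\leftrightarrow X$, and it then also gives ${\bf E}(X|\widehat X,Y)={\bf E}(X|\widehat X)=\widehat X$. The identity ${\bf E}(X|Y)={\bf E}(\widehat X|Y)=Q_{X,Y}Q_Y^{-1}Y$ is just Condition~1 re-read through~(\ref{eq:real}), and $\Cov(X,\widehat X|Y)=\Cov(\widehat X,\widehat X|Y)$ is Condition~2. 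I expect the main obstacle to be organizing the equivalences cleanly: showing that Conditions~1--2 are not merely sufficient but that the optimal solution must satisfy them (so that no generality is lost in passing to ${\cal M}_0^{G,o}$), and carefully checking the nonnegativity/symmetry of $H$, $Q_W$, $Q_\Psi$ — which really only becomes transparent after the simultaneous diagonalization, so the logical order matters. Everything else is bookkeeping with the Schur-complement formulas of Proposition~\ref{prop_cg}.
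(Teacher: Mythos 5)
Your proposal is correct and follows essentially the same route as the paper's Appendix~B proof: impose Conditions~1 and~2 of Theorem~\ref{them:lb_g} on the parametric realization of Lemma~\ref{lemma:par} to solve for $G$, $Q_W$ and then $H$ in terms of $\Sigma_\Delta$, and evaluate the Gaussian conditional mutual information as $\frac{1}{2}\log\det(Q_{X|Y}\Sigma_\Delta^{-1})$. The only cosmetic difference is that you compute $\Sigma_\Delta$ directly as the unconditional error covariance $(I_{n_x}-H)Q_{X|Y}(I_{n_x}-H\T)+Q_W$, whereas the paper identifies it with $\Cov(X,X|Y,\widehat{X})$ via Proposition~\ref{prop_cg} and Condition~2; both yield $\Sigma_\Delta=(I_{n_x}-H)Q_{X|Y}$, and the subtlety you flag (that no generality is lost in restricting to realizations with $\widehat{X}=\overline{X}^{cm}$) is handled in the paper only at the level of Lemma~\ref{lemma:par}(d) and Theorem~\ref{them_lb}, i.e., no more carefully than you do.
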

\begin{proof} See Appendix~\ref{app_B}. 
\end{proof}

\ \

\begin{remark} Structural properties  of realization of Theorem~\ref{thm:proof2}
\label{rk:1_ed}

\par For the characterization of the RDF $R_{X|Y}(\Delta_X)$ of Theorem~\ref{thm:proof2}, for the tuple of multivariate jointly Gaussian RVs  $(X,Y)$, we can proceed one step further  to show that the optimal  $\widehat{X}$ defined by   (\ref{eq:realization})-(\ref{eq:realization_nn}) in terms of the matrices  $\big\{\Sigma_\Delta, Q_{X|Y}, H, Q_W\big\}$, is such that 
\begin{align}
&\mbox{i)} \hso H=H^T\succeq 0, \\
&\mbox{ii)} \hso  \big\{\Sigma_\Delta, \Sigma_{X|Y}, H, Q_W\big\} \hso \mbox{have  spectral decompositions w.r.t the same unitary matrix $U U\T=I_{n_x}$}.
\end{align}
 We show this in Corollary~\ref{cor:equivalent}.
\end{remark}

To prove the structural property of Remark~\ref{rk:1_ed} we use the next corollary, which is a degenerate case of \cite[Lemma~2]{charalambous-charalambous-kourtellaris-vanschuppen-2020} (i.e., the structural properties of test channel of Gorbunov and Pinsker \cite{gorbunov-pinsker1974} nonanticipatory RDF of Markov sources). 

\begin{corollary}
Structural properties of realization of optimal $\widehat{X}$ of characterization of $R_{X|Y}(\Delta_X)$\\
\label{cor:sp_rep}
Consider the characterization of the RDF $ R_{X|Y}(\Delta_X)$ of Theorem~\ref{thm:proof2}. 

  Suppose $Q_{X|Y} \succeq 0$ and $\Sigma_\Delta\succeq 0$ commute, that is, 
\begin{align}
Q_{X|Y} \Sigma_\Delta =\Sigma_\Delta Q_{X|Y} . \label{suff_c}
\end{align}
Then 
\begin{align}
&\mbox{(1)} \hso  H= I_{n_x}-\Sigma_\Delta Q_{X|Y}^{-1}= H\T, \hso Q_W = 
\Sigma_\Delta H\T = \Sigma_\Delta H =H\Sigma_\Delta= Q_W\T  \label{suff_c1} \\
& \mbox{(2)} \hso \big\{\Sigma_\Delta, Q_{X|Y}, H, Q_W\big\} \hso \mbox{have  spectral} \nonumber \\
& \hst  \mbox{decompositions w.r.t the same unitary matrix $U U\T=I_{n_x}, \;U\T U=I_{n_x} $}.
\end{align}
that is, the following hold.
\begin{align}
&Q_{X|Y} =U \diag{\{\lambda_{1},\dots, \lambda_{n_x}\}}U\T, \hso \lambda_1 \geq \lambda_2 \geq \ldots \geq \lambda_{n_x}, \\
&\Sigma_\Delta =U \diag{\{\delta_{1},\dots, \delta_{n_x}\}}U\T,\hso \delta_1 \geq \delta_2 \geq \ldots \geq \delta_{n_x},  \\
&H =U \diag\{1-\frac{\delta_{1}}{\lambda_1},\dots, 1-\frac{\delta_{n_x}}{\lambda_{n_x}}\}U\T,  \\
&Q_W =U \diag\{\big(1-\frac{\delta_{1}}{\lambda_1}\big)\delta_1,\dots, \big(1-\frac{\delta_{n_x}}{\lambda_{n_x}}\big)\delta_{n_x}\}U\T. \label{deco_1}
\end{align}
\end{corollary}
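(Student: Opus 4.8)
The plan is to exploit the fact that two commuting real symmetric positive semidefinite matrices are simultaneously diagonalizable by a single orthogonal matrix, and then to verify that every matrix appearing in the realization of Theorem~\ref{thm:proof2} is a polynomial (indeed a simple rational function) in $Q_{X|Y}$ and $\Sigma_\Delta$, hence inherits the same eigenbasis. First I would recall the standard linear-algebra fact (see \cite{Horn:2013}): if $Q_{X|Y}\succeq 0$ and $\Sigma_\Delta\succeq 0$ satisfy $Q_{X|Y}\Sigma_\Delta=\Sigma_\Delta Q_{X|Y}$, then there is an orthogonal $U$ with $UU\T=U\T U=I_{n_x}$ such that both are diagonalized by $U$; moreover $U$ can be chosen so that the eigenvalues of $Q_{X|Y}$ are arranged in decreasing order $\lambda_1\geq\cdots\geq\lambda_{n_x}$, and since any eigenvector basis of $Q_{X|Y}$ is also an eigenvector basis of $\Sigma_\Delta$ (on the assumption that the feasible $\Sigma_\Delta$ achieving the infimum shares the eigenvectors — which is exactly the conclusion of the water-filling argument invoked by Theorem~\ref{thm_rw}.(b), so here we simply take commutativity as the hypothesis), we get $\Sigma_\Delta=U\diag\{\delta_1,\dots,\delta_{n_x}\}U\T$ with the $\delta_i$ in the corresponding order.

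Second, for statement (1) I would substitute the spectral forms into the definitions $H\tri I_{n_x}-\Sigma_\Delta Q_{X|Y}^{-1}$ and $Q_W\tri\Sigma_\Delta H\T$ from (\ref{eq:realization_nn_1})--(\ref{eq:realization_nn}). Because $Q_{X|Y}^{-1}=U\diag\{\lambda_1^{-1},\dots,\lambda_{n_x}^{-1}\}U\T$ and the $U$ factors telescope, we get $\Sigma_\Delta Q_{X|Y}^{-1}=U\diag\{\delta_i/\lambda_i\}U\T=Q_{X|Y}^{-1}\Sigma_\Delta$, so $H=I_{n_x}-\Sigma_\Delta Q_{X|Y}^{-1}=I_{n_x}-Q_{X|Y}^{-1}\Sigma_\Delta$ and $H=H\T$. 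Likewise $H$ commutes with $\Sigma_\Delta$, so $Q_W=\Sigma_\Delta H\T=\Sigma_\Delta H=H\Sigma_\Delta=Q_W\T$, which is (1). For statement (2) and the explicit displays, I would just read off the diagonal entries: $H=U\diag\{1-\delta_i/\lambda_i\}U\T$ and $Q_W=U\diag\{(1-\delta_i/\lambda_i)\delta_i\}U\T$, and note that the constraint $Q_{X|Y}-\Sigma_\Delta\succeq0$ from (\ref{eq:optiProbl_n}) forces $\lambda_i\geq\delta_i$ so each $1-\delta_i/\lambda_i\geq0$, which in particular re-derives $H\succeq0$ and $Q_W\succeq0$ consistently with Theorem~\ref{thm:proof2}.

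The only genuinely non-routine point is justifying the hypothesis (\ref{suff_c}) itself in the eventual application — i.e., that the optimal $\Sigma_\Delta$ actually commutes with $Q_{X|Y}$ — but that is not what this corollary asks: the corollary takes commutativity as given and merely unpacks its consequences, so the proof is essentially a bookkeeping exercise in simultaneous diagonalization. I expect the main (mild) obstacle to be stating cleanly that a fixed orthogonal $U$ can be chosen to order \emph{both} spectra compatibly; this is immediate once one observes that on each eigenspace of $Q_{X|Y}$ the matrix $\Sigma_\Delta$ acts as a symmetric operator and can be diagonalized within that eigenspace, after which the global ordering of the $\delta_i$ is just a relabeling of basis vectors. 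No other subtlety arises, and the remaining identities are one-line diagonal computations. Thus I would close with the four displayed spectral decompositions, which follow by direct substitution as above.
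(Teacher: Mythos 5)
Your proposal is correct and follows essentially the same route as the paper's Appendix~C proof: both reduce the corollary to the simultaneous diagonalization of the commuting symmetric positive semidefinite matrices $Q_{X|Y}$ and $\Sigma_\Delta$ (the paper detours through the equivalence ``$AB$ normal iff $AB\succeq 0$'' from \cite{Horn:2013}, while you invoke simultaneous diagonalizability directly), and then obtain $H=H\T$, $Q_W=\Sigma_\Delta H=H\Sigma_\Delta=Q_W\T$ and the displayed spectral forms by telescoping the $U$ factors. The only点 you handle more carefully than the paper is the compatibility of the decreasing orderings of $\{\lambda_i\}$ and $\{\delta_i\}$ under a single $U$, which the paper leaves implicit.
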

\begin{proof} See Appendix~\ref{app_C}.
\end{proof}

In the next corollary we re-express the realization of $\widehat{X}$ which characterizes the RDF   of Theorem~\ref{thm:proof2}  using a translation of $X$ and $\widehat{X}$, by subtracting their conditional means with respect to $Y$, making use of (\ref{prop_1}). Then we apply  corollary~\ref{cor:sp_rep} to establish    that   the optimal matrices of the RDF $ R_{X|Y}(\Delta_X)$ of Theorem~\ref{thm:proof2} are such that   $\big\{\Sigma_\Delta, Q_{X|Y}, H, Q_W\big\}$ have a  spectral decomposition w.r.t the same unitary matrix $U U\T=I_{n_x}$.


\begin{corollary}
 \label{cor:equivalent}
Equivalent characterization of $ R_{X|Y}(\Delta_X)$\\
Consider the characterization of the RDF $ R_{X|Y}(\Delta_X)$ of Theorem~\ref{thm:proof2}. 
Define the translated RVs
\begin{equation}
\mathbf{X} \tri X- {\bf E}\Big\{X\Big|Y\Big\}=   X- Q_{X,Y}Q_Y^{-1}Y,\hst \mathbf{\widehat{X}}  \tri \widehat{X}  - {\bf E}\Big\{\widehat{X}\Big|Y\Big\}=  \widehat{X}- Q_{X,Y}Q_Y^{-1} Y \label{trans_1}
\end{equation}
where the equalities are due to (\ref{prop_1}). Let  
\begin{align}
& Q_{X|Y} =U \diag{\{\lambda_{1},\dots, \lambda_{n_x}\}}U\T,  \hso U U\T =I_{n_x}, U\T U=I_{n_x},  \hso \lambda_1 \geq \lambda_2 \geq \ldots \geq \lambda_{n_x}, \\
& \overline{\mathbf{X}} \tri U\T \mathbf{X} , \hso  \widehat{\overline{\mathbf{X}}} \tri U\T \widehat{\mathbf{X}}.
\end{align}
Then 
\begin{align}
&\mathbf{\widehat{X}}  = H\mathbf{X}   +W,\label{eq:equivalent_m}\\
&I(X;\widehat{X}|Y) 
=  I(\mathbf{X} ;\mathbf{\widehat{X}}  )= I(U\T\mathbf{X} ;U\T\mathbf{\widehat{X}}  ), \label{eq:equivalent}\\
&{\bf E}\big\|X-\widehat{X}\big\|_{{\mathbb R}^{n_x}}^2 
= {\bf E}\big\|\mathbf{X} - \mathbf{\widehat{X}} \big\|_{{\mathbb R}^{n_x}}^2 = {\bf E}\big\|U\T{\mathbf{X}} -U\T\widehat{{\mathbf{X}}}  \big\|_{{\mathbb R}^{n_x}}^2 =\trace \big( \Sigma_{\Delta} \big).\label{eq:equivalent_d}
\end{align}
where $(H, Q_W)$ are given by (\ref{eq:realization_nn_1}) and   (\ref{eq:realization_nn}).\\
Further,  the   characterization of the RDF $R_{X|Y}(\Delta_X)$ (\ref{eq:optiProbl_nn}) satisfies the following equalities and inequality: 
\begin{align}
{R}_{X|Y}(\Delta_X) \tri  & \inf_{{\cal Q}(\Delta_X)} I(X; \widehat{X}|Y) = \inf_{{\cal Q}(\Delta_X)}   \frac{1}{2}\log \max\Big\{1, \det(Q_{X|Y}\Sigma_{\Delta}^{-1})\Big\} \label{equiv_100}   \\
=& \inf_{ {\bf E}\big\|\mathbf{X} - \mathbf{\widehat{X}} \big\|_{{\mathbb R}^{n_x}}^2\leq \Delta_X   } I(\mathbf{X} ;\mathbf{\widehat{X}}  )  \label{equiv_101}\\
=& \inf_{  {\bf E}\big\|U\T\mathbf{X} - U\T\mathbf{\widehat{X}} \big\|_{{\mathbb R}^{n_x}}^2\leq \Delta_X   } I(U\T\mathbf{X} ;U\T\mathbf{\widehat{X}}  )  \label{equiv_1010}\\ 
\geq & \inf_{ {\bf E}\big\|U\T\mathbf{X} - U\T\mathbf{\widehat{X}} \big\|_{{\mathbb R}^{n_x}}^2\leq \Delta_X   } \sum_{t=1}^{n_x} I(\overline{\mathbf{X}}_t ; \widehat{\mathbf{\overline{X}}}_t  )  \label{equiv_1011}
\end{align}
Moreover, the inequality (\ref{equiv_1011}) is achieved if $Q_{X|Y} \succeq 0$ and $\Sigma_\Delta\succeq 0$ commute, that is, if (\ref{suff_c}) holds, and 
\begin{equation}
R_{X|Y}(\Delta_X) =\inf_{ \sum_{i=1}^{n_x} \delta_i \leq  \Delta_X} \frac{1}{2} \sum_{i=1}^{n_x} \log \max\Big\{1, \frac{\lambda_i}{\delta_i}\Big\} \label{fchar_1}
\end{equation}
where 
\begin{align}
\diag\{{\bf E}\Big(U\T{\mathbf{X}} -U\T\widehat{{\mathbf{X}}}\Big) \Big(U\T{\mathbf{X}} -U\T\widehat{{\mathbf{X}}}\Big)\T\}=\diag\{\delta_1,\delta_2, \ldots, \delta_{n_x}\}. \label{fchar_2}
\end{align}
\end{corollary}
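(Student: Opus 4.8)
The plan is to exploit the translation invariance of both mutual information and the mean-square distortion, reduce the conditional problem to an unconditional one, and then apply Corollary~\ref{cor:sp_rep}. First I would verify (\ref{eq:equivalent_m}): substituting the realization (\ref{eq:realization}) for $\widehat{X}$ and using $G=(I_{n_x}-H)Q_{X,Y}Q_Y^{-1}$ from (\ref{eq:realization_nn_1}), the term $GY$ is exactly ${\bf E}\{\widehat{X}|Y\}$'s ``non-$HX$'' piece, and by (\ref{prop_1}) the conditional means of $X$ and $\widehat X$ given $Y$ coincide and equal $Q_{X,Y}Q_Y^{-1}Y$; subtracting these from both sides collapses (\ref{eq:realization}) to $\widehat{\mathbf X}=H\mathbf X+W$. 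Next, (\ref{eq:equivalent}) follows because $\mathbf X,\widehat{\mathbf X}$ are deterministic affine (indeed linear) bijective functions of $(X,\widehat X)$ once $Y$ is fixed, so conditioning on $Y$ is unaffected; moreover since $W$ is independent of $(X,V)$ hence of $Y$, the pair $(\mathbf X,\widehat{\mathbf X})$ is in fact independent of $Y$, which lets us drop the conditioning entirely and write $I(X;\widehat X|Y)=I(\mathbf X;\widehat{\mathbf X})$. Left-multiplication by the unitary $U\T$ is a bijection, giving the last equality in (\ref{eq:equivalent}). The distortion identities (\ref{eq:equivalent_d}) are immediate: subtracting equal quantities from $X$ and $\widehat X$ leaves $X-\widehat X$ unchanged, and $\|U\T v\|^2=\|v\|^2$ for unitary $U$; the final equality is just the definition $\Sigma_\Delta={\bf E}\{(X-\widehat X)(X-\widehat X)\T\}$ together with $\trace(U\T\Sigma_\Delta U)=\trace(\Sigma_\Delta)$.

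For the chain (\ref{equiv_100})--(\ref{equiv_1011}): the first line restates Theorem~\ref{thm:proof2}'s characterization (with the $\max\{1,\cdot\}$ coming from $Q_{X|Y}-\Sigma_\Delta\succeq0$ being enforced, so on the feasible set $\det(Q_{X|Y}\Sigma_\Delta^{-1})\geq1$); lines (\ref{equiv_101}) and (\ref{equiv_1010}) are then rewritings of the objective and constraint under the identities just established, noting that the feasible set ${\cal Q}(\Delta_X)=\{\Sigma_\Delta:\trace(\Sigma_\Delta)\le\Delta_X\}$ is exactly $\{{\bf E}\|\mathbf X-\widehat{\mathbf X}\|^2\le\Delta_X\}$. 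Inequality (\ref{equiv_1011}) is the standard fact that mutual information between vectors upper-bounds the sum of per-coordinate mutual informations after the orthogonalizing rotation $U\T$ — this is essentially a chain-rule plus independence-reduction argument, or can be seen as $h(U\T\mathbf X)\le\sum_t h(\overline{\mathbf X}_t)$ applied to a conditionally Gaussian law. Finally, for the achievability of (\ref{equiv_1011}) and the explicit water-filling form (\ref{fchar_1})--(\ref{fchar_2}): I invoke Corollary~\ref{cor:sp_rep}, whose hypothesis is that $Q_{X|Y}$ and $\Sigma_\Delta$ commute; under that hypothesis all four matrices $\{\Sigma_\Delta,Q_{X|Y},H,Q_W\}$ diagonalize in the same basis $U$, so $U\T\mathbf X$ has independent coordinates with variances $\lambda_i$, $U\T\widehat{\mathbf X}$ likewise, the channel $\widehat{\mathbf X}=H\mathbf X+W$ decouples into $n_x$ scalar Gaussian channels, and $I(\mathbf X;\widehat{\mathbf X})=\sum_i\frac12\log\max\{1,\lambda_i/\delta_i\}$ with $\delta_i$ the $i$-th diagonal entry of $U\T\Sigma_\Delta U$, which is (\ref{fchar_1})--(\ref{fchar_2}).

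The main obstacle, and the only step that is not pure bookkeeping, is justifying that restricting to the commuting (equivalently, simultaneously diagonalizable) case is \emph{without loss of optimality} — i.e., that the infimum in (\ref{equiv_1010}) is actually attained by a $\Sigma_\Delta$ commuting with $Q_{X|Y}$, so that (\ref{equiv_1011}) holds with equality and equals (\ref{fchar_1}). The clean way to see this is: the objective $\frac12\log\det(Q_{X|Y}\Sigma_\Delta^{-1})$ depends on $\Sigma_\Delta$ only through $\det\Sigma_\Delta$, while the constraint is $\trace(\Sigma_\Delta)\le\Delta_X$; by Hadamard's inequality applied in the eigenbasis of $Q_{X|Y}$, for any feasible $\Sigma_\Delta$ the diagonal part $\Sigma_\Delta'=\diag$-of-$\Sigma_\Delta$ in that basis has $\det\Sigma_\Delta'\ge\det\Sigma_\Delta$ and the same trace, hence is feasible and at least as good — and $\Sigma_\Delta'$ commutes with $Q_{X|Y}$. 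This is precisely the Hadamard's-inequality remark flagged after Theorem~\ref{thm_rw}, so I would either cite it or spell out this one-line reduction, after which (\ref{fchar_1})--(\ref{fchar_2}) follow by scalar water-filling over the eigenvalues $\lambda_i$.
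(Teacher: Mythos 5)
Your proposal is correct and follows essentially the same route as the paper: rewrite the realization of Theorem~\ref{thm:proof2} as $\widehat{\mathbf{X}}=H\mathbf{X}+W$ by subtracting the common conditional mean $Q_{X,Y}Q_Y^{-1}Y$, drop the conditioning on $Y$ by independence of $(\mathbf{X},W)$ from $Y$, rotate by the unitary $U$ diagonalizing $Q_{X|Y}$, single-letterize via the independence of the components of $U\T\mathbf{X}$, and invoke Corollary~\ref{cor:sp_rep} for achievability under commutation. The only difference is that you explicitly supply the Hadamard-inequality argument showing that restricting to $\Sigma_\Delta$ commuting with $Q_{X|Y}$ is without loss of optimality; the paper's proof of the corollary leaves that step implicit (it is only gestured at in the remarks following Theorem~\ref{thm_rw} and in the appeal to convexity), so your addition is a legitimate tightening rather than a departure.
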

\begin{proof} By Theorem~\ref{thm:proof2}, then 
\begin{align}
\widehat{X} =& HX + GY + W\\
 =& HX + \Big(I-H\Big)Q_{X,Y}Q_Y^{-1}Y + W\\
=& H\Big(X - Q_{X,Y} Q_Y^{-1}Y\Big) + Q_{X,Y}Q_Y^{-1}Y + W \\
  \Longrightarrow  & \hso \widehat{X} -  Q_{X,Y} Q_Y^{-1} Y = H\Big(X - Q_{X,Y}Q_Y^{-1}Y\Big) + W\\
 \Longrightarrow & \hso \mathbf{\widehat{X}}  = H\mathbf{X}   +W. \label{mem}
\end{align}
The last equation establishes  (\ref{eq:equivalent_m}). 
By properties of conditional mutual information and the properties of optimal realization $\widehat{X}$ then the following equalities hold.
\begin{align}
I(X;\widehat{X}|Y)= & I(X-Q_{X,Y}Q_Y^{-1}Y;\widehat{X}-Q_{X,Y}Q_Y^{-1}Y|Y)\\
= & I(\mathbf{X} ;\mathbf{\widehat{X}}  |Y), \hst \mbox{  \hso \mbox{by (\ref{trans_1}), i.e.,  (\ref{prop_1})} } \\
 =& H(\mathbf{\widehat{X}}  |Y) - H(\mathbf{\widehat{X}}  |Y,\mathbf{X} )\\
= & H(\mathbf{\widehat{X}}) - H(\mathbf{\widehat{X}} |Y, \mathbf{X} ), \hso \mbox{by indep. of  $\mathbf{X}$ and $Y$}\\
= & H(\mathbf{\widehat{X}}) - H(\mathbf{\widehat{X}} |\mathbf{X}), \hso \mbox{by indep. of  $W$ and $Y$ for fixed $X$}\\
= & I(\mathbf{X} ;\mathbf{\widehat{X}}  )\\
=&I(U\T\mathbf{X} ; U\T\mathbf{\widehat{X}}  ) \label{eq:equivalent}\\
=&I(\overline{\mathbf{X}}_1, \overline{\mathbf{X}}_2, \ldots, \overline{\mathbf{X}}_{n_x} ; \widehat{\overline{\mathbf{X}}}_1, \widehat{\overline{\mathbf{X}}}_2, \ldots, \widehat{\overline{\mathbf{X}}}_{n_x}  )\\
\geq &\sum_{t=1}^{n_x} I(\overline{\mathbf{X}}_t; \widehat{\overline{\mathbf{X}}}_t  ), \hso \mbox{by mutual independence of $\overline{\mathbf{X}}_t,   t=1,2, \ldots, n_x$} \label{eq:equivalent}   
\end{align}
Moreover,  inequality (\ref{eq:equivalent}) holds with equality if $(\overline{\mathbf{X}}_t; \widehat{\overline{\mathbf{X}}}_t), t=1,2, \ldots, n_x$ are jointly independent. \\
The average distortion function is then given by
\begin{align}
 {\bf E}\big\|X-\widehat{X}\big\|_{{\mathbb R}^{n_x}}^2 &= {\bf E}\big\|X-\widehat{X}- Q_{X,Y} Q_Y^{-1}Y+Q_{X,Y}Q_Y^{-1}Y\big\|_{{\mathbb R}^{n_x}}^2 \\
&={\bf E}\big\|\mathbf{X} - \mathbf{\widehat{X}} \big\|_{{\mathbb R}^{n_x}}^2,   \hso \mbox{by (\ref{trans_1}), i.e.,  (\ref{prop_1})} \\
&={\bf E}\big\|U\T\mathbf{X} -U\T \mathbf{\widehat{X}} \big\|_{{\mathbb R}^{n_x}}^2 = \trace \big( \Sigma_{\Delta}\big), \hso \mbox{by $UU\T=I_{n_x}$}. 
\end{align}
By Corollary~\ref{cor:sp_rep}, if  (\ref{suff_c}) holds, that is,  $Q_{X|Y} \succeq 0$ and $\Sigma_\Delta\succeq 0$ satisfy  $Q_{X|Y} \Sigma_\Delta =\Sigma_\Delta Q_{X|Y}$ (i.e., commute), then (\ref{suff_c1})-(\ref{deco_1}) hold,   and by (\ref{fchar_1}),  then 
\begin{align}
\widehat{\overline{\mathbf{X}}} \tri&  U\T \widehat{\mathbf{X}}=U\T H {\mathbf{X}}+U\T W=U\T \widehat{\mathbf{X}}=U\T H U U\T {\mathbf{X}}+U\T W \\
=& U\T H U  \overline{\mathbf{X}}+U\T W, \hso \mbox{$U\T HU$ is diagonal and $U\T W$ has indep. components}.
\end{align}
Hence,  if  (\ref{suff_c}) holds then the lower bound in (\ref{eq:equivalent}) holds with equality, because  $(\overline{\mathbf{X}}_t; \widehat{\overline{\mathbf{X}}}_t), t=1,2 \ldots, n_x$ are jointly independent. Moreover,  if  (\ref{suff_c}) holds then from, say, (\ref{equiv_100}), the expressions  (\ref{fchar_1}), (\ref{fchar_2}) are obtained. The above equations establish all claims.   
\end{proof}

\ \

\begin{proposition} Theorem~\ref{thm_rw} is correct.
\end{proposition}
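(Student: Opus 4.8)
The plan is to assemble Theorem~\ref{thm_rw} from the intermediate results already established, checking that each of its parts (a)--(d) is covered. The backbone is Theorem~\ref{thm:proof2}, which gives the characterization \eqref{eq:optiProbl}--\eqref{eq:optiProbl_nn} together with the parametric realization \eqref{eq:realization}--\eqref{eq:realization_nn} and the four almost-sure identities (the Markov property $\mathbf{P}_{X|\widehat X,Y}=\mathbf{P}_{X|\widehat X}$, the conditional-mean identity $\mathbf{E}(X|\widehat X,Y)=\widehat X$, the equality of conditional means given $Y$, and $\Cov(X,\widehat X|Y)=\Cov(\widehat X,\widehat X|Y)$). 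First I would note that parts (a)(1) and (a)(2), i.e.\ \eqref{pr_1}--\eqref{pr_1_a} and the common spectral decomposition \eqref{spe_d}, follow directly: \eqref{pr_1} and the first half of \eqref{pr_1_a} are exactly the a.s.\ identities of Theorem~\ref{thm:proof2}, the implication ${\bf E}\{X|Y\}={\bf E}\{\widehat X|Y\}$ is the third identity \eqref{prop_1}, and \eqref{spe_d} together with the symmetry/PSD statements in \eqref{eq:realization_nn_1_sp}--\eqref{eq:realization_nn_1_sp_new} is precisely the content of Corollary~\ref{cor:sp_rep} once one verifies its hypothesis \eqref{suff_c}. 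The realizations \eqref{eq:realization_sp}--\eqref{eq:realization_sp_new} are a rewriting of \eqref{eq:realization} using $W=H\Psi$ with $Q_\Psi=\Sigma_\Delta H^{-1}$, which is just algebra given $Q_W=H\Sigma_\Delta$.

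The one genuine gap to fill is establishing that the minimizing pair $(Q_{X|Y},\Sigma_\Delta)$ indeed \emph{commutes}, since Corollary~\ref{cor:sp_rep} and the closed form \eqref{fchar_1}--\eqref{fchar_2} (hence part (b), the water-filling \eqref{thm_wf_1}--\eqref{thm_wf_2}) are conditional on \eqref{suff_c}. My approach here is the standard eigenvalue/Hadamard argument sketched in the paper's ``Conclusion 1'': from Corollary~\ref{cor:equivalent}, \eqref{equiv_1011}, the objective is lower-bounded by $\sum_{t=1}^{n_x}\tfrac12\log\max\{1,\lambda_t/\delta_t\}$ where $\lambda_t$ are the eigenvalues of $Q_{X|Y}$ and $\delta_t=\mathbf{E}|\overline{\mathbf X}_t-\widehat{\overline{\mathbf X}}_t|^2$ are the diagonal entries of $U\T\Sigma_\Delta U$; since $\trace(\Sigma_\Delta)=\sum_t(\text{diag entries})$ is basis-independent, choosing $\Sigma_\Delta$ diagonal in the eigenbasis $U$ of $Q_{X|Y}$ simultaneously (i) makes the lower bound \eqref{equiv_1011} tight (the per-component channels become independent, as shown at the end of the proof of Corollary~\ref{cor:equivalent}) and (ii) does not increase the trace constraint. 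Hence an optimal $\Sigma_\Delta$ can be taken diagonal in $U$, i.e.\ it commutes with $Q_{X|Y}$, so \eqref{suff_c} holds at the optimum and Corollary~\ref{cor:sp_rep} applies.

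With commutativity secured, part (b) is the classical reverse-water-filling computation: minimize $\tfrac12\sum_i\log\max\{1,\lambda_i/\delta_i\}$ over $\delta_i\ge0$ with $\sum_i\delta_i\le\Delta_X$ and the constraint $\delta_i\le\lambda_i$ (so that $Q_{X|Y}-\Sigma_\Delta\succeq0$); a Lagrange multiplier $\mu$ gives $\delta_i=\min\{\mu,\lambda_i\}$, which is \eqref{thm_wf_2}, and substituting back yields \eqref{thm_wf_1}. Part (c) is then immediate: in the basis $U$ the realization \eqref{eq:realization_sp_new} decouples into $n_x$ scalar additive-Gaussian channels with gains $h_i=1-\delta_i/\lambda_i\ge0$ and noise variances $h_i\delta_i$, i.e.\ Fig.~\ref{fg:realization}, after the pre-/post-processing by $U\T$ and the $Y$-dependent mean correction $(I-H)Q_{X,Y}Q_Y^{-1}Y$ built into \eqref{eq:realization_sp}. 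Part (d) is the degenerate case $C=0$: then $Q_{X,Y}=0$, $Q_{X|Y}=Q_X$, the mean-correction term vanishes, the conditional mutual information equals the unconditional one, and everything reduces to the marginal RDF $R_X(\Delta_X)$. The main obstacle is really just step two --- articulating cleanly why an optimal $\Sigma_\Delta$ may be chosen in the eigenbasis of $Q_{X|Y}$ (equivalently, why the bound \eqref{equiv_1011} is achievable) --- since the rest is bookkeeping across the already-proved lemmas; I would lean on the argument at the end of the proof of Corollary~\ref{cor:equivalent} together with basis-invariance of the trace to close it.
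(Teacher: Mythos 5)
Your proposal is correct and follows essentially the same route as the paper, which simply invokes Theorem~\ref{thm:proof2}, Corollary~\ref{cor:equivalent}, and the convexity of the water-filling expression \eqref{fchar_1}. In fact you spell out more explicitly than the paper does the key step that an optimal $\Sigma_\Delta$ may be taken diagonal in the eigenbasis of $Q_{X|Y}$ (so that \eqref{suff_c} holds and the lower bound \eqref{equiv_1011} is tight), which the paper leaves implicit in the final paragraph of the proof of Corollary~\ref{cor:equivalent}.
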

\begin{proof}
By invoking 
Corollary~\ref{cor:equivalent}, Theorem~\ref{thm:proof2}  and the convexity of $R_{X|Y}(\Delta_X)$ given by  (\ref{fchar_1}),  then we  arrive at the statements of Theorem~\ref{thm_rw}, which completely characterize the RDF $R_{X|Y}(\Delta_X)$, and constructs a  realization of the  optimal $\widehat{X}$ that achieves it. 
\end{proof}

\ \

Next, we discuss the degenerate case, when the statements of Theorem~\ref{thm:proof2} and Theorem~\ref{thm_rw} reduce to the  RDF ${R}_{X}(\Delta_X)$ of a Gaussian RV $X$ with square-error distortion function. We  illustrate that, the identified structural property of the realization matrices $\big\{\Sigma_\Delta, Q_{X|Y}, H, Q_W\big\}$ leads to  to the well-known water-filling solution.

\begin{remark} Degenerate case of Theorem~\ref{thm:proof2}\\
\label{rem_crdf_1}
Consider the characterization of the RDF ${R}_{X|Y}(\Delta_X)$ of Theorem~\ref{thm:proof2}, and assume $X$ and $Y$ are independent or  $Y$ generates the trivial information, i.e., the $\sigma-$algebra of $Y$, is $\sigma\{Y\}=\{\Omega, \emptyset\}$ or $C=0$ in (\ref{eq:sideInfo})-(\ref{prob_9}).

(a) By the definitions of $Q_{X,Y}, Q_{X|Y}$ then   
\begin{align}
Q_{X,Y}=0, \hso Q_{X|Y}= Q_X. \label{deg_1}
\end{align}
Substituting (\ref{deg_1}) into the expressions of Theorem~\ref{thm:proof2}, then 
 the RDF $R_{X|Y}(\Delta_X)$ reduces to 
\begin{align}
{R}_{X|Y}(\Delta_X)=& {R}_{X}(\Delta_X) \tri \inf_{{\cal Q}(\Delta_X)} I(X; \widehat{X}) \label{eq:optiProbl_deg}\\
=& \inf_{{\cal Q}^{m}(\Delta_X)}   \frac{1}{2}\log \Big\{\det( Q_{X}\Sigma_{\Delta}^{-1}) \Big\}
\end{align}
where 
\begin{align}
{\cal Q}^m(\Delta_X)\tri& \bigg\{\Sigma_{\Delta}:  \trace \big( \Sigma_{\Delta} \big)\leq{\Delta_X}  \bigg\}
\end{align}
and the optimal reproduction $\widehat{X}$ reduces to 
\begin{align}
\widehat{X} =& \Big(I_{n_x} - \Sigma_{\Delta} Q_X^{-1}\Big) X + W, \hso Q_X \succeq \Sigma_\Delta,
 \label{eq:realizatio_x}\\
 Q_W=&\Big( I_{{n_x}} - \Sigma_{\Delta} Q_X^{-1}\Big) \Sigma_{\Delta} \succeq 0. \label{eq:realization_n_deg}
\end{align}
Thus, $R_X(\Delta_X)$ is the well-known RDF of a multivariate memoryless Gaussian RV $X$ with square-error distortion.

(b) For the RDF ${R}_{X}(\Delta_X)$  of part (a),  it is  known \cite{Ihara:1993} that $\Sigma_\Delta$ and $Q_X$ have a spectral decomposition with respect to the same unitary matrix, that is, 
\begin{align}
& Q_X = U\Lambda_X U\T, \hso  \Sigma_\Delta = U\Delta U\T, \hso  UU\T = I \\
& \Lambda_X = \diag{\{ \lambda_{X,1},\dots, \lambda_{X,n_x}\}} , \hso   \Delta = \diag{\{\delta_{1},\dots, \delta_{n_x}\}}
\end{align}
where the entries of $(\Lambda_X, \Delta)$ are in decreasing order. \\
Define
\begin{align}
\mathsf{X}^p \tri  U\T X, \hso  \mathsf{\widehat{X}}^p \tri U\T \widehat{X}, \hso \mathsf{W}^p \tri  U\T W.
\end{align}
Then a parallel channel realization of the optimal reproduction $\mathsf{\widehat{X}}^p$ is obtained given by,
\begin{align}
&\mathsf{\widehat{X}}^p = \mathsf{H}\mathsf{X}^p + \mathsf{W}^p , \\
& \mathsf{H} = I_{n_x} -   \Delta\Lambda_X^{-1}=\diag{\{1-\frac{\delta_1}{\lambda_{X,1}},\dots,1-\frac{\delta_{n_x}}{\lambda_{X,n_x}}\}} , \\
& Q_{\mathsf{W}^p} = \mathsf{H} \Delta = \diag{\{\big(1-\frac{\delta_1}{\lambda_{X,1}}\big)\delta_1,\dots,\big(1-\frac{\delta_{n_x}}{\lambda_{X,n_x}}\big)\delta_{n_x}\}} . 
\end{align}
The RDF $R_{X}(\Delta_X)$ is then computed from the  reverse water-filling equations, as follows.
\begin{equation}
R_{X}(\Delta_X) =\frac{1}{2} \sum_{i=1}^{n_x} \log \frac{\lambda_{X,i}}{\delta_i}
\end{equation}
where
\begin{equation}
\sum_{i=1}^{n_x} \delta_i = \Delta_X, \hst \delta_i= \left\{ \begin{array}{lll} \mu, & \mbox{if} & \mu < \lambda_{X,i} \\ \sigma_i, & \mbox{if} & \mu \geq \lambda_{X,i}  \end{array} \right.
\end{equation}
and where $\mu \in [0,\infty)$ is a Lagrange multiplier (obtained from the Kuch-Tucker conditions).
\end{remark}

\subsection{Side Information only at Decoder}
In general, when the side information is available only at the decoder the achievable  operational rate $R^*(\Delta_X)$ is  greater than the achievable operational rate $\overline{R}_1(\Delta_X)$,  when the side information is available to the encoder and the decoder \cite{wyner1978}. By Remark~\ref{rem_lb},  $\overline{R}(\Delta_X) \geq R_{X|Y}(\Delta_X)$, and equality holds if $I(X;Z|\widehat{X}, Y)=0$.  
 
 \par In view of the characterization of $R_{X|Y}(\Delta_X)$
and the realization of the optimal reproduction $\widehat{X}$ of Theorem~\ref{thm_rw}, which is presented in Fig.~\ref{fg:realization}, we  observe that we can re-write (\ref{eq:realization_sp}) as follows.
\begin{align}
\widehat{X} =& \Big(I_{n_x} - \Sigma_{\Delta} Q_{X|Y}^{-1}\Big) X+ \Sigma_\Delta Q_{X|Y}^{-1} Q_{X,Y} Q_Y^{-1} Y + W, 
 \label{eq:realization-d}\\
 =&\Sigma_\Delta Q_{X|Y}^{-1} Q_{X,Y}Q_Y^{-1}Y +Z \\
 =&f(Y,Z)\\
 Z=&\Big(I_{n_x} - \Sigma_{\Delta} Q_{X|Y}^{-1}\Big)  \Big(X +  \Big(I_{n_x} - \Sigma_{\Delta} Q_{X|Y}^{-1}\Big)^{-1}    W\Big), \\
  H=&I_{n_x}-\Sigma_\Delta Q_{X|Y}^{-1}, \; Q_W=H\Sigma_\Delta,  \hso \mbox{defined by  (\ref{eq:realization_nn_1_sp})-(\ref{spe_d}),}     \label{eq:realization_n}\\
  {\bf P}_{Z|X,Y}=&{\bf P}_{Z|X}, \hso \mbox{$(\widehat{X}, Y)$  uniquely defined $Z$, which implies $I(X;Z|\widehat{X}, Y)=0$.}
\end{align}
The realization $\widehat{X}=f(Y,Z)$ is shown in Fig~\ref{fg:realization}. \\

\begin{proposition} Theorem~\ref{thm:dec} is correct.
\end{proposition}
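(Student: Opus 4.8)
The plan is to verify the claimed realization $\widehat{X}=f(Y,Z)$ satisfies all requirements in the definition of $\overline{R}(\Delta_X)$ and then use the chain of inequalities already established to conclude that it achieves the lower bound $R_{X|Y}(\Delta_X)$, forcing equality. First I would note that $Z=H(X+H^{-1}W)=HX+W$ is, by construction, a function of $X$ and the independent noise $W$ only, so ${\bf P}_{Z|X,Y}={\bf P}_{Z|X}$ holds, i.e. $Z\in{\cal M}(\Delta_X)$ is a legitimate auxiliary RV (with $Y\leftrightarrow X\leftrightarrow Z$). Next I would check that $\widehat{X}=f(Y,Z)=(I-H)Q_{X,Y}Q_Y^{-1}Y+Z$ reproduces exactly the optimal reproduction of Theorem~\ref{thm_rw}, since $(I-H)Q_{X,Y}Q_Y^{-1}Y+HX+W$ coincides term-by-term with \eqref{eq:realization_sp}. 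Hence the distortion constraint ${\bf E}\{\|X-\widehat{X}\|^2\}=\trace(\Sigma_\Delta)\le\Delta_X$ is inherited directly from Theorem~\ref{thm:proof2}/Theorem~\ref{thm_rw}, and $\widehat{X}=f(Y,Z)$ with $f$ linear (hence measurable), so all three defining conditions of ${\cal M}(\Delta_X)$ are met.

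The core step is the conditional-independence/Markov structure \eqref{str_p_1}. Here I would argue as follows. Since $(X,Y,Z,\widehat{X})$ are jointly Gaussian and $\widehat{X}$ is a deterministic affine function of $(Y,Z)$, it suffices to show ${\bf E}\{X\mid\widehat{X},Y,Z\}=\widehat{X}$ and that the conditional covariance of $X$ given $(\widehat{X},Y,Z)$ equals that given $\widehat{X}$ alone. But given $Y$, the pair $(\widehat{X},Z)$ carries the same information as $Z$ (since $\widehat{X}$ is $Y$-measurably equivalent to $Z$), so conditioning on $(\widehat{X},Y,Z)$ is the same as conditioning on $(Y,Z)$, which via $\widehat{X}=f(Y,Z)$ is the same as conditioning on $(\widehat{X},Y)$. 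Then I invoke Theorem~\ref{thm:proof2} (equivalently the structural properties \eqref{pr_1}--\eqref{pr_1_a} of Theorem~\ref{thm_rw}), which already established ${\bf P}_{X|\widehat{X},Y}={\bf P}_{X|\widehat{X}}$ and ${\bf E}\{X\mid\widehat{X},Y\}=\widehat{X}$ for precisely this $\widehat{X}$. Chaining these gives \eqref{str_p_1} and \eqref{str_p_2}, and \eqref{str_p_2} yields ${\bf E}\{X\mid Y\}={\bf E}\{\widehat{X}\mid Y\}$ by taking conditional expectations.

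With \eqref{str_p_1} in hand, the identity $I(X;Z|\widehat{X},Y)=0$ follows because ${\bf P}_{X|Z,\widehat{X},Y}={\bf P}_{X|\widehat{X},Y}$, and then Remark~\ref{rem_lb}(B) gives $I(X;Z|Y)=I(X;\widehat{X}|Y)=R_{X|Y}(\Delta_X)$, so that $\overline{R}(\Delta_X)\le R_{X|Y}(\Delta_X)$. Combined with part (a) of Theorem~\ref{thm:dec}, i.e. the reverse inequality \eqref{lower_b_d_1} (which is itself Remark~\ref{rem_lb}(A), the data-processing argument $I(X;Z|Y)\ge I(X;\widehat{X}|Y)\ge R_{X|Y}(\Delta_X)$), one concludes $\overline{R}(\Delta_X)=R_{X|Y}(\Delta_X)$, and the infimum is attained by the stated realization. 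Finally, structural property (2) — that $\{\Sigma_\Delta,Q_{X|Y},H,Q_W\}$ share a spectral decomposition — transfers verbatim from Theorem~\ref{thm_rw}.(a)(2)/Corollary~\ref{cor:sp_rep}, since the matrices are literally the same.

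I expect the main obstacle to be the careful justification that $I(X;Z|\widehat{X},Y)=0$, i.e. that adjoining $Z$ to $(\widehat{X},Y)$ adds nothing about $X$: one must be precise that, because $\widehat{X}=(I-H)Q_{X,Y}Q_Y^{-1}Y+Z$ is an invertible affine map in $Z$ for each fixed $Y$, the $\sigma$-algebras $\sigma(\widehat{X},Y)$ and $\sigma(Z,Y)$ coincide, so conditioning on $(X;Z|\widehat{X},Y)$ collapses, and then the already-proven Markov property $Y\leftrightarrow\widehat{X}\leftrightarrow X$ of Theorem~\ref{thm:proof2} does the rest. The remaining verifications (membership in ${\cal M}(\Delta_X)$, the distortion bound, and the spectral structure) are immediate consequences of Theorem~\ref{thm_rw} and require no new computation.
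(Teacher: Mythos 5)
Your proposal is correct and follows essentially the same route as the paper: the lower bound $\overline{R}(\Delta_X)\ge R_{X|Y}(\Delta_X)$ via the chain rule $I(X;Z|Y)=I(X;Z|\widehat{X},Y)+I(X;\widehat{X}|Y)$, and achievability by observing that the realization $\widehat{X}=(I-H)Q_{X,Y}Q_Y^{-1}Y+Z$ coincides with the optimal reproduction of Theorem~\ref{thm_rw} and that $(\widehat{X},Y)$ uniquely determines $Z$, forcing $I(X;Z|\widehat{X},Y)=0$. Your explicit $\sigma$-algebra argument for $\sigma(\widehat{X},Y)=\sigma(Z,Y)$ and the check that ${\bf P}_{Z|X,Y}={\bf P}_{Z|X}$ merely spell out details the paper leaves implicit.
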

\begin{proof} From  the above realization of $\widehat{X}=f(Y,Z)$, we have the following. (a) By Wyner, see  Remark~\ref{rem_lb}, then the inequalities (\ref{in_11}) and (\ref{in_1}) hold, and equalities holds if $I(X;Z| \widehat{X}, Y)=0$. That is, for any $\widehat{X}= f(Y,Z)$, by properties of conditional mutual information then 
\begin{align}
I(X;Z|Y) &\overset{(\alpha)}= I(X;Z,\widehat{X}|Y)\\
 &\overset{(\beta)}= I(X;Z|\widehat{X},Y) + I(X;\widehat{X}|Y) \\
 &\overset{(\gamma)}\geq I(X;\widehat{X}|Y ) \label{lb_p}
\end{align}
where $(\alpha)$ is due to $\widehat{X}= f(Y,Z)$, $(\beta)$ is due to the chain rule of mutual information, and $(\gamma)$ is due to   $I(X;Z|\widehat{X},Y)\geq 0$. Hence,  (\ref{lower_b_d_1}) is obtained (as as in Wyner \cite{wyner1978} for a tuple of scalar jointly Gaussian RVs). 
(b) Equality holds in (\ref{lb_p}) if there exists an $\widehat{X}= f(Y,Z)$  such that   $I(X;Z|\widehat{X},Y)=0$, and  the average  distortion is satisfied. Taking    $\widehat{X}= f(Y,Z) = (I_{n_x}-H)Q_{X,Y} Q_Y^{-1} Y + Z$, where  $Z= g(X,W)$ is   specified by (\ref{eq:realization-d})-(\ref{eq:realization_n}),  then $I(X;Z|\widehat{X},Y)=0$ and the average distortion is satisfied.  Since the realization (\ref{eq:realization-d})-(\ref{eq:realization_n})
is identical to the realization (\ref{real_d1})-(\ref{real_d2}), then part (b) is also  shown. (c) This follows directly from the optimal realization. 
\end{proof}


\ \

\begin{figure}
     \centering
     \begin{subfigure}[b]{0.4\textwidth}
         \centering
         \includegraphics[width=\textwidth]{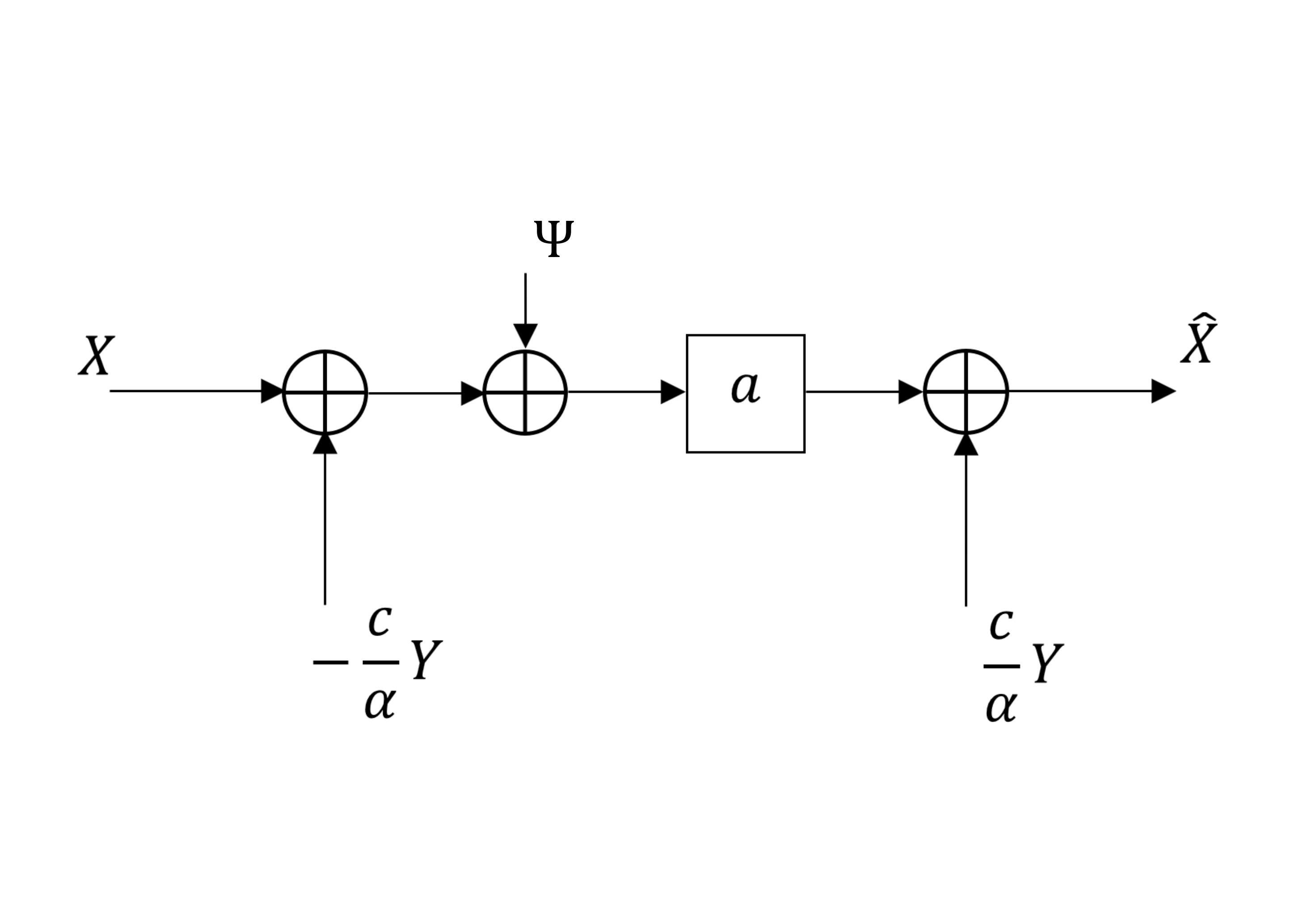}
         \caption{RDF $R_{X|Y}(\Delta_X)$: Wyner's \cite{wyner1978} optimal realization of $\widehat{X}$ for RDF $R_{X|Y}(\Delta_X)$ of (\ref{sc_1})-(\ref{sc_4}).}
          \label{real_ed}
     \end{subfigure}
     \hfill
     \begin{subfigure}[b]{0.4\textwidth}
         \centering
         \includegraphics[width=\textwidth]{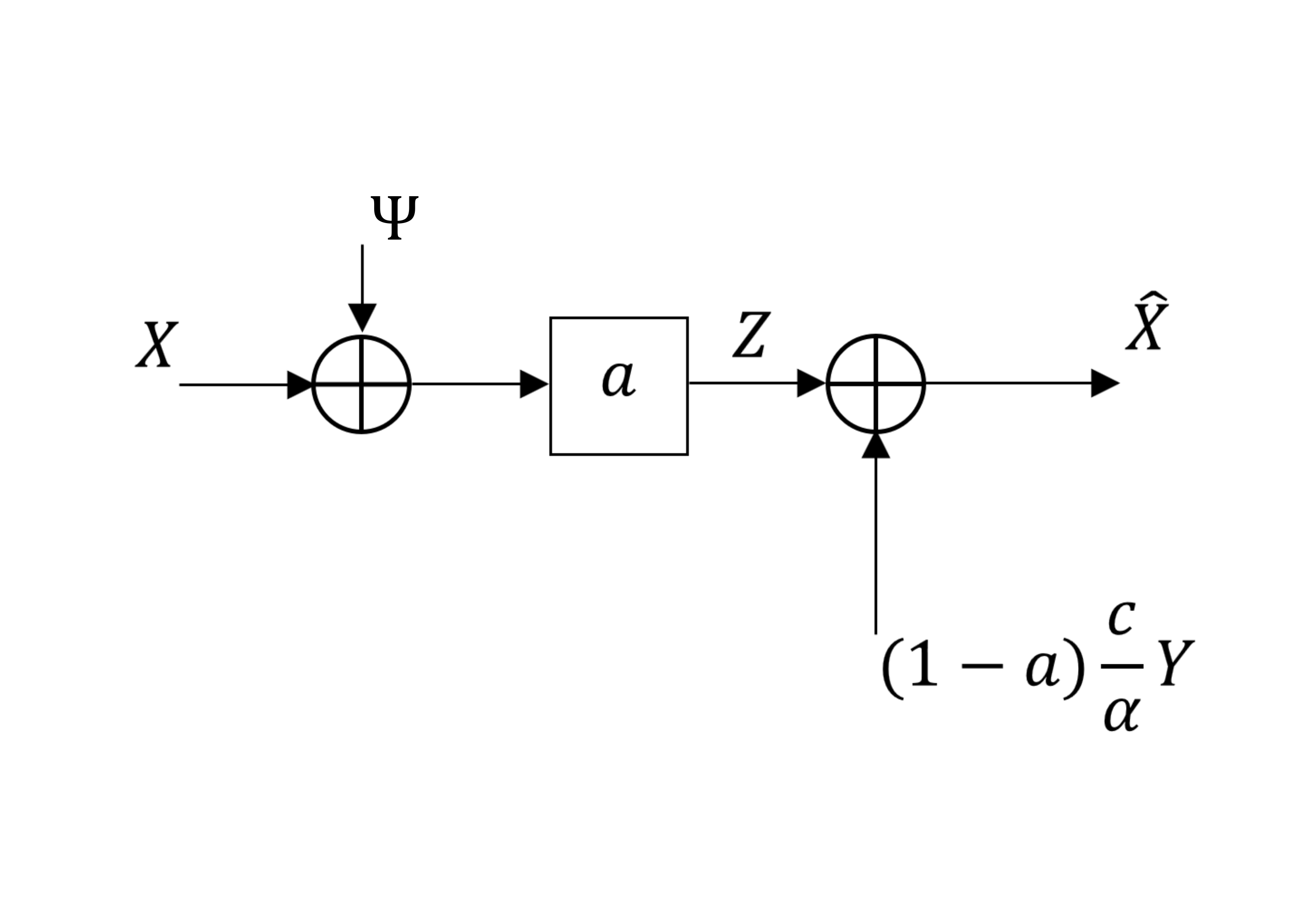}
         \caption{RDF $\overline{R}(\Delta_X)$: Wyner's \cite{wyner1978} optimal  realization   $\widehat{X}=f(X, Z)$ for RDF  $\overline{R}(\Delta_X)$ of (\ref{sc_1})-(\ref{sc_4}).}
         \label{real_d}
     \end{subfigure}
        \caption{Wyner's realizations of optimal reproductions for RDFs $R_{X|Y}(\Delta_X)$ and $\overline{R}(\Delta_X)$}
        \label{fig:three graphs}
\end{figure}

\begin{remark} Relation to Wyner's \cite{wyner1978} optimal test channel realizations\\
\label{rem-wyner}
Now, we verify that our optimal realizations  of  $\widehat{X}$ and closed form expressions for $R_{X|Y}(\Delta_X)$ and $\overline{R}(\Delta_X)$ are identical to  Wyner's \cite{wyner1978} realizations  and RDFs (see  Fig.~\ref{fig:three graphs}),  for  the tuple of scalar-valued, jointly Gaussian RVs $(X, Y)$,  with square error distortion function,  
\begin{align}
&X: \Omega \rar {\cal X} \tri {\mathbb R}, \hso  Y: \Omega \rar {\cal Y} \tri {\mathbb R},\hso  \widehat{X}: \Omega \rar \widehat{\cal X} \tri {\mathbb R},  \label{sc_1}\\
&d_X(x,\widehat{x})=\big(x-\widehat{x}\big)^2,  \label{sc_2} \\
&X \in N(0, \sigma_X^2), \; \sigma_X^2>0, \hso Y=\alpha \Big(X+U\Big), \label{sc_3} \\
& U \in N(0,\sigma_U^2), \hso \sigma_U^2>0, \hso \alpha >0.  \label{sc_4}
\end{align}
(a) RDF $R_{X|Y}(\Delta_X)$. First, we show our realization of  optimal $\widehat{X}$ of   Fig.~\ref{fg:realization} that achieves the RDF of  Theorem~\ref{thm_rw}, degenerates to    Wyner's \cite{wyner1978} optimal realization that achieves the  RDF $R_{X|Y}(\Delta_X)$, for the tuple of scalar-valued, jointly Gaussian RVs $(X, Y)$,  with square error distortion function, given by  (\ref{sc_1})-(\ref{sc_4}). This is verified below. \\ 
By Theorem~\ref{thm_rw}.(a) applied to (\ref{sc_1})-(\ref{sc_4}), then  
\begin{align}
&Q_X=\sigma_X^2, \hso Q_{X,Y}=\alpha \sigma_X^2,\hso Q_{Y}=\sigma_Y^2= \alpha^2 \sigma_X^2 + \alpha^2 \sigma_U^2, \hso Q_{X|Y}= c \sigma_U^2, \hso c\tri \frac{\sigma_X^2}{\sigma_X^2 +\sigma_U^2}, \label{deg_W1} \\
&H=1-\Delta_X Q_{X|Y}^{-1}=\frac{c \sigma_U^2 - d}{c\sigma_U^2} \equiv a, \hso Q_{X,Y}Q_Y^{-1}=\frac{c}{\alpha}, \hso HQ_{X,Y}Q_Y^{-1}=\frac{a c}{\alpha},\label{deg_W2} \\
& W= H\Psi = a \Psi, \hst Q_\Psi= H^{-1} \Delta_X= \frac{\Delta_X}{a}= \frac{c\sigma_U^2\Delta_X}{c\sigma_U^2 -\Delta_X}, \hso  c\sigma_U^2 -\Delta_X>0  . \label{deg_W3}
\end{align}
Moreover, by Theorem~\ref{thm_rw}.(b) the optimal reproduction $\widehat{X}\in {\cal M}_0(d)$ and  $R_{X|Y}(d)$ are, 
\begin{align}
&\widehat{X} = a(X - \frac{c}{\alpha}Y) + \frac{c}{\alpha}Y+ a\Psi,\hso  c\sigma_U^2 -\Delta_X>0 \label{realization_ED_1}\\
\label{eq:rateWynerZivscalar}
&R_{X|Y}(\Delta_X) = \left\{ \begin{array}{ll} \frac{1}{2}\log \frac{c \sigma_U^2}{\Delta_X}, & 0 < \Delta_X <c\sigma_U^2\\
0, &  \Delta_X \geq c\sigma_U^2 . \end{array} \right.
\end{align}
This shows our realization of  Fig.~\ref{fg:realization} degenerates to Wyner's \cite{wyner1978}  realization of Fig.~\ref{real_ed}.

\noi (b) RDF $\overline{R}(\Delta_X)$. Now, we show that our realization of optimal $\widehat{X}=f(Y,Z)$  that achieves the RDF $\overline{R}(\Delta_X)$ of   Theorem \ref{thm:dec}, degenerates  to Wyner's \cite{wyner1978} realization that achieves the  RDF $\overline{R}(\Delta_X)$,  of the tuple of scalar-valued, jointly Gaussian RVs $(X, Y)$,  with square error distortion function given by  (\ref{sc_1})-(\ref{sc_4}).  This is verified below. \\
By Theorem \ref{thm:dec}.(b) applied to  (\ref{sc_1})-(\ref{sc_4}), and using the calculations (\ref{deg_W1})-(\ref{realization_ED_1}),  then 
\begin{align}
&\widehat{X} =f(Y,Z)= \frac{c}{\alpha}(1-a)Y +Z \hst \mbox{by (\ref{realization_ED_1}), (\ref{realization_D_2})}, \label{realization_D_1} \\
 &Z= a \Big(X +\Psi\Big), \hso (a, \Psi) \hst \mbox{defined in (\ref{deg_W2}), (\ref{deg_W3})}     \label{realization_D_2}\\
& \overline{R}(\Delta_X)= R_{X|Y}(\Delta_X)  =\mbox{   (\ref{eq:rateWynerZivscalar}) }    \hst \mbox{by evaluating $I(X;Z)-I(Y;Z)$, i.e.,  using (\ref{rdf_d1_a}) and  (\ref{realization_D_2}).} \label{realization_D_2_n}
\end{align}
This shows our value of $\overline{R}(\Delta_X)$ and optimal realization  $\widehat{X}=f(Y,Z)$,  reproduce Wyner's optimal realization  and value of $\overline{R}(\Delta_X)$ given in  \cite{wyner1978} (i.e.,   Fig.~\ref{real_d}).

\end{remark}

\begin{remark} On   the optimal test channel realization of distributed  source coding problem \cite{tian-chen2009} and \cite{Zahedi-Ostegraard-2014}
 \\
\label{comment}
We  show that contrary to the claim in   \cite[Abstract]{tian-chen2009} and \cite[Theorem~3A]{Zahedi-Ostegraard-2014},  the optimal test channels used in the derivations of the RDF for  the distributed remote source coding problem are incorrect,  and do not produce   Wyner's value of the RDF  $\overline{R}(\Delta_X)$, and  the optimal test channel that achieves it (i.e. the solution  presented in Remark~\ref{rem-wyner}).  
%

(a) Tian and Chen \cite{tian-chen2009} considered the following  formulation\footnote{In the notation of \cite{tian-chen2009} the RVs $(S, X, Y, Z, \widehat{S})$  are represented by $(X, Y, Z,W, \widehat{X})$.}  of (\ref{rdf_po_1}), (\ref{rdf_po_2}):
\begin{align}
\overline{R}^{PO,1}(\Delta_S) \tri\inf_{Z: \: {\bf P}_{Z|X, Y, S}= {\bf P}_{Z|X}, \; \widehat{S}={\bf E}\big\{S\big|Z, Y\big\}, \: {\bf E}\big\{||S-\widehat{S})||^2\big\}\leq \Delta_S} I(X; Z|Y). \label{tian-chen}
\end{align} 
For  multivariate correlated jointly Gaussian RVs $(S,X, Y, Z, \widehat{S})$, with square-error distortion function $d_S(s,\widehat{s})=||s-\widehat{s}||^2$, the RDF  $\overline{R}^{PO,1}(\Delta_S)$ is given  in   \cite[Theorem~4]{tian-chen2009}.\\ Clearly, \\
(i) if  $S=X-$a.s (almost surely) then the  RDF $\overline{R}^{PO,1}(\Delta_S)$ degenerates to Wyner's  RDF $\overline{R}(\Delta_X)$, and  \\
(ii) if   $S=X-$a.s and the RV $X$ is  independent of the RV $Y$ or $Y$ generates a trivial information,   then the  RDF $\overline{R}^{PO,1}(\Delta_S)$ degenerates to the classical RDF of the source $X$, i.e., $R_{X}(\Delta_X)$, as verified  from  (\ref{real_d1})-(\ref{real_d2}), i.e., $Q_{X,Y}=0$ which implies $\widehat{X}=Z$. \\
We examine (i), i.e., under the restriction $S=X-$a.s.,  by recalling  the optimal realization of RVs $(Z, \widehat{X})$ used in the derivation of  \cite[Theorem~4]{tian-chen2009}. \\
The derivation of \cite[Theorem~4]{tian-chen2009}, uses  the following RVs (see \cite[eqn(4)]{tian-chen2009} adopted to our   notation):
\begin{align}
X &= K_{xy}Y + N_1, \\ 
S &= K_{sx}X + K_{sy}Y +N_2 \\
S &= K_{sx}\Big(K_{xy}Y +  N_1\Big) + K_{sy} Y +  N_2,  \\
&= \Big(K_{sx}K_{xy}+K_{sy}\Big) Y + K_{sx}N_1+ N_2
\end{align}
where $N_1$ and $N_2$  are independent Gaussian RVs with zero mean, $N_1$ is independent $Y$ and $N_2$  is independent of $(X, Y)$. \\
The condition  $ X =S-$a.s. implies,
\begin{align}
&K_{sx}K_{xy}+K_{sy} =K_{xy} , \hst K_{sx}=I, \hso N_2=0-a.s.  \label{tian_chen_1}\\
&\Longrightarrow \hso 
K_{xy}+K_{sy} =K_{xy} \hso \Longrightarrow \hso 
K_{sy}=0.
\end{align}
The optimal realization of the auxiliary random variable $Z$ used to achieve the RDF in the derivation of \cite[Theorem~4]{tian-chen2009} (see \cite[3 lines above eqn(32)]{tian-chen2009} using our notation) is 
\begin{align}
 Z&=  UK_{sx}X + N_3\\
&= UX + N_3, \hso \mbox{by (\ref{tian_chen_1}) } \label{tian_chen_2} 
\end{align}
where $U$ is a unitary matrix and $N_3\in N(0, Q_{N_3})$, i.e., Gaussian, such that $Q_{{N_3}}$  is a  diagonal covariance matrix, with  diagonal elements given by,
\begin{align}
\sigma_{3,i}^2 = \frac{\min(\lambda_i,\delta_i)}{\lambda_i-\min(\lambda_i,\delta_i)} \label{tian_chen_3} 
\end{align}
For scalar-valued RVs (\ref{tian_chen_2}), (\ref{tian_chen_3}), reduce to
\begin{align}
Z&= X + N_3, \hst  N_3  \in N \Big(0,\frac{\Delta_X}{\sigma_{X|Y}^2-\Delta_X}\Big)\label{eq:wynerTC}
\end{align}
 It is easy to verify, by letting $(X,Y)$ as in (\ref{sc_3}), (\ref{sc_4}),  that the  realization of the auxiliary RV $Z$ given by (\ref{eq:wynerTC}) is different  from  Wyner's auxiliary RV $Z$ given by (\ref{realization_D_2}), and gives a  value of $I(X;Z)-I(Y;Z)$,  which also different from Wyner's value of the RDF $\overline{R}(\Delta_X)$, i.e.,  (\ref{eq:rateWynerZivscalar}).  In particular, if $\sigma_{X|Y}^2 =\Delta_X$ then it should be $Z=0-$almost surely (as verified from the realization of $Z$ given by  (\ref{realization_D_2}), which reduces to $Z=0-$almost surely, if   $Q_{X|Y}=\Delta_X$, i.e.,  the value of parameter $H$ is  $H=a=0$), but instead the variance of $Z$ takes the value $+\infty$.  \\
We also examine  (ii) (above), i.e., setting   $S=X-$a.s, and taking  $X$ to be independent of $Y$ or $Y$ generates a trivial information. Clearly,    RDFs $\overline{R}^{PO,1}(\Delta_S)$  degenerates to the classical RDF of the source $X$, i.e., $R_{X}(\Delta_X)$, as it is verified  from  (\ref{real_d1})-(\ref{real_d2}), i.e., $Q_{X,Y}=0$ which implies $\widehat{X}=Z$. For scalar-valued RVs the optimal reproduction  $\widehat{X}=Z$ degenerates to (\ref{mem_scal_1}), (\ref{mem_scal_2}). On the other hand,   
(\ref{eq:wynerTC}) does not reduce  to (\ref{mem_scal_1}), (\ref{mem_scal_2}), and moreover the variance of $Z$ defined by (\ref{eq:wynerTC}) is $\sigma_Z^2=\sigma_X^2 + \frac{\Delta_X}{\sigma_{X}^2-\Delta_X}$, and this is fundamentally different from the variance $Q_{\widehat{X}}=\sigma_{\widehat{X}}^2=\sigma_Z^2=\sigma_X^2 -\Delta_X$ of (\ref{mem_scal_2}).  

(b) Similarly to part (a) above,   if we repeat the above steps,  under the condition $S=X-$a.s., the RDF of the remote sensor problem analyzed in  \cite{Zahedi-Ostegraard-2014} reduces to Wyner's RDF $\overline{R}(\Delta_X)$. Moreover,  optimal realization of the auxiliary RV $Z$, which is  used to achieve the RDF in the derivation of \cite[Theorem~3A]{Zahedi-Ostegraard-2014} (see \cite[eqn(26)]{Zahedi-Ostegraard-2014} using our notation)  reduces to  
\begin{align}
Z = U X + \nu   \label{og_1}
\end{align}
where $U$ is a unitary matrix and $\nu\in N(0,Q_{\nu})$ is a zero mean Gaussian vector with  independent components, with  variances across the diagonal of $Q_{\nu}$   given by,
\begin{align}
\sigma_{\nu_i}^2 = \frac{\lambda_i \min{(\lambda_i ,\delta_i)} }{\lambda_i - \min{(\lambda_i ,\delta_i)}}.  \label{og_2}
\end{align}
For scalar-valued RVs (\ref{og_1}), (\ref{og_2}), reduce to 
\begin{align}
Z  = X + \nu, \hso  \nu \in N \bigg(0, \frac{\sigma_{X|Y}^2}{\sigma_{X|Y}^2-\Delta_X}\Delta_X\bigg)\label{og_3}
\end{align}
Clearly, by letting letting $(X,Y)$ as in (\ref{sc_3}), (\ref{sc_4}), the  auxiliary RV $Z$ given by (\ref{og_3}) is different from  Wyner's auxiliary RV $Z$ given by (\ref{realization_D_2}), and does not produce Wyner's value $\overline{R}(\Delta_X)=I(X;Z)-I(Y;Z)$ given by (\ref{realization_D_2_n}).  In particular, if $\sigma_{X|Y}^2 =\Delta_X$ then it should be $Z=0-$almost surely (as verified from the realization of $Z$ given by  (\ref{realization_D_2}), which reduces to $Z=0-$almost surely, if   $Q_{X|Y}=\Delta_X$, i.e.,  the value of parameter $H$ is  $H=a=0$), but instead the variance of $Z$ takes the value $+\infty$. 

It is also noted that the variance of the auxiliary RV $Z$ of \cite{tian-chen2009} given by  (\ref{eq:wynerTC}) is different from the variance of the auxiliary RV of \cite{Zahedi-Ostegraard-2014} given by (\ref{og_3}), although both are designed to achieve the same value of $I(X;Z)-I(Y;Z)$.

\end{remark}

\begin{remark} On the realization of test channels 
\label{rk:1_n}
\par (a)  It should be mentioned that unless a realization of $\widehat{X}$ is identified that achieves the RDFs $R_{X|Y}(\Delta_X)$ and $\overline{R}(\Delta_X)$, such that the joint distribution ${\bf P}_{X,Y, \widehat{X}}$ has marginal the fixed source distribution ${\bf P}_{X,Y}$, then the characterization of the RDFs is incomplete. 

(b) Corollary~\ref{cor:c-dec} follows from the two main theorems, and its  complete solution is generated similarly to the two main theorems.
\end{remark}

\section{Conclusion}
We derived structural properties of  optimal  test channels realizations that achieve the characterizations of RDFs for a tuple of multivariate jointly independent and identically distributed Gaussian random variables  with mean-square error fidelity, when  side information is available to the decoder and not to the encoder, and when side information is available to both. We derived achievable lower bounds on conditional mutual information, and applied   properties of mean-square error estimation to identify  structural properties of optimal test channels that   achieve these bounds. We also applied the structural properties of optimal test channels to  construct realizations of optimal reproductions.  


\section{Appendix}

\subsection{Proof of Lemma~\ref{lem:proof1}}
\label{app_A}
(a) By the chain rule of mutual information then
\begin{align}
I(X;\widehat{X},Y)=&I(X;Y|\widehat{X}) + I(X;\widehat{X})\\
 =& I(X;\widehat{X}|Y) + I(X;Y)
\end{align}
Since $I(X;Y|\widehat{X})\geq 0$ then from above it follows 
 \begin{align}
I(X;\widehat{X}) \leq & I(X;\widehat{X}|Y) + I(X;Y)\\
I(X;\widehat{X}|Y) \geq & I(X;\widehat{X}) - I(X;Y) \label{ineq_1_new}
\end{align}
The above shows (\ref{ineq_1}). 
To show equality, we note the following,
\begin{align*}
I(X;\widehat{X}|Y) &= {\bf E}\Big[\log \frac{{\bf P}_{X|\widehat{X},Y}}{{\bf P}_{X|Y}}\Big]\\
&= {\bf E}\Big[\log \frac{{\bf P}_{X|\widehat{X},Y}}{{\bf P}_{X|Y}}\frac{{\bf P}_{X}}{{\bf P}_{X}}\Big]\\
&= {\bf E}\Big[\log \frac{{\bf P}_{X|\widehat{X},Y}}{{\bf P}_{X}} - \log\frac{{\bf P}_{X|Y}}{{\bf P}_{X}}\Big]\\
&= {\bf E}\Big[\log \frac{{\bf P}_{X|\widehat{X}}}{{\bf P}_{X}} - \log\frac{{\bf P}_{X|Y}}{{\bf P}_{X}}\Big], \hso \mbox{if ${\bf P}_{X|\widehat{X}, Y}={\bf P}_{X|\widehat{X}}$}.
\end{align*}
This completes the statement of equality of  (\ref{ineq_1}), i.e., it establishes equality  (\ref{ineq_1_neq}). (b) Consider a test channel ${\bf P}_{X|\widehat{X},Y}$ such that   ${\bf E}\{||X-\widehat{X}||_{{\mathbb R}^{n_x}}^2\leq \Delta_X$, i.e., $\widehat{X} \in {\cal M}_0(\Delta_X)$,    and such that ${\bf P}_{X|\widehat{X}, Y}={\bf P}_{X|\widehat{X}}$, for $\Delta_X \leq  {\cal D}_C(X|Y)\subseteq [0,\infty)$. By  (\ref{ineq_1_neq})  taking the infimum of both sides over $\widehat{X} \in {\cal M}_0(\Delta_X)$ such that ${\bf P}_{X|\widehat{X}, Y}={\bf P}_{X|\widehat{X}}$  then (\ref{ineq_1_neq_G}) is obtained, for a  nontrivial surface  $\Delta_X \leq  {\cal D}_C(X|Y)$, which exists due to continuity and convexity of $R_{X}(\Delta_X)$ for $\Delta_X \in (0,\infty)$. This completes the proof.


\subsection{Proof of Theorem~\ref{thm:proof2}}
\label{app_B}
We identify the triple $(H,G, Q_W)$  that  satisfied Conditions 1 and 2 of Theorem~\ref{them:lb_g}, which then implies $\widehat{X}=\overline{X}^{mse}$, from which  the claimed statements follow.  
Consider the realization given by  $(\ref{eq:real})$. \\
{\it Condition 1, i.e., $( \ref{eq:condA})$.} The   left hand side part of  $( \ref{eq:condA})$ is  given by (this follows from mean-square estimation theory, or an application of (\ref{eq:mean22}) with  ${\cal G}=\{\Omega, \emptyset\}$)  
\begin{align}
{\bf E}\Big(X\Big|Y\Big)=& {\bf E}\Big(X\Big) + \Cov(X,Y) \Cov(Y,Y)^{-1}\Big(Y - {\bf E}\Big(Y\Big)\Big) \\ \nonumber
 =&\Cov(X,Y) \Cov(Y,Y)^{-1}Y\\
=&Q_{X,Y}Q_Y^{-1} Y\\
 =&Q_XC\T Q_Y^{-1}Y  \hst \mbox{by model (\ref{eq:sideInfo})-(\ref{prob_9})}.        \label{eq:condAL}
\end{align}
Similarly, the right hand side of $( \ref{eq:condA})$ is given by 
\begin{align}
{\bf E}\Big(\widehat{X}\Big|Y\Big) =& {\bf E}\Big(\widehat{X}\Big) + \Cov(\widehat{X},Y) \Cov(Y,Y)^{-1}\Big(Y - {\bf E}\Big(Y\Big)\Big) \\ \nonumber
 =&\Cov(\widehat{X},Y) \Cov(Y,Y)^{-1}Y\\ 
 =&  \Big(HQ_{X,Y} +GQ_Y\Big) Q_Y^{-1}Y  \label{condx_1}  \\     
 =& \Big(HQ_XC\T +GQ_Y\Big) Q_Y^{-1}Y     \hst \mbox{by (\ref{eq:sideInfo})-(\ref{prob_9})} \label{eq:condAR}
\end{align}
Equating  (\ref{eq:condAL}) and (\ref{condx_1}) or  (\ref{eq:condAR}) then 
\begin{align}
&{\bf E}\Big({X}\Big|Y\Big) ={\bf E}\Big(\widehat{X}\Big|Y\Big)\\
&\Longrightarrow \hso Q_{X,Y}Q_Y^{-1}Y=  \Big(HQ_{X,Y} +GQ_Y\Big) Q_Y^{-1}Y   \hst \mbox{by (\ref{condx_1})}  \\
&\Longrightarrow \hso  Q_XC\T Q_Y^{-1}Y=\Big(HQ_XC\T +GQ_Y\Big) Q_Y^{-1}Y  \hst \mbox{by (\ref{eq:sideInfo})-(\ref{prob_9}), (\ref{eq:condAR})}\\
 &\Longrightarrow \hso G=\Big(I-H\Big)Q_XC\T Q_Y^{-1}\\
 &\Longrightarrow \hso G=\Big(I-H\Big)Q_{X,Y} Q_Y^{-1}
\end{align}
Hence, $G$ is obtained, and the reproduction is represented by 
\begin{align}
&\widehat{X}=H X +\Big(I-H\Big) Q_{X,Y} Q_Y^{-1}Y +W, \label{step_1}\\
&\Cov(\widehat{X},Y)=Q_{X,Y}, \hso {\bf E}\Big(\widehat{X}\Big|Y\Big)=Q_{X,Y}Q_Y^{-1}Y={\bf E}\Big(X\Big|Y\Big), \label{step_2}\\
&\widehat{X}- {\bf E}\Big(\widehat{X}\Big|Y\Big)=HX -HQ_{X,Y}Q_Y^{-1} Y +W.\label{step_3}
\end{align}
{\it Condition 2, i.e., $( \ref{eq:condB})$.} 
To apply  $( \ref{eq:condB})$ the following calculations are needed. 
\begin{align}
Q_{X|Y}  &\tri\Cov(X,X|Y)\\
& = {\bf E}  \Big\{ \Big(X - {\bf E}\Big(X\Big|Y\Big)\Big)\Big(X - {\bf E}\Big(X\Big|Y\Big)\Big)\T \Big\} \nonumber \\
&= Q_X - Q_{X,Y} Q_Y^{-1}Q_{X,Y}\T\\ 
&= Q_X - Q_XC\T Q_Y^{-1}CQ_X     \hst \mbox{by (\ref{eq:sideInfo})-(\ref{prob_9})}
\end{align}
\begin{align}
\Cov(X,\widehat{X}|Y) & \tri{\bf E} \Big\{ \Big(X - {\bf E}\Big(X\Big|Y\Big)\Big)\Big(\widehat{X} - {\bf E}\Big(\widehat{X}\Big|Y\Big)\Big)\T \Big\} \nonumber \\
&={\bf E} \Big\{ \Big(X - {\bf E}\Big(X\Big|Y\Big)\Big)\Big(\widehat{X} - {\bf E}\Big(X\Big|Y\Big)\Big)\T \Big\}   \hst \mbox{by (\ref{step_2}) }\\
&={\bf E} \Big\{ \Big(X - {\bf E}\Big(X\Big|Y\Big)\Big)\Big(\widehat{X} \Big)\T \Big\}   \hst \mbox{by orthogonality }\\
&= Q_XH\T - Q_{X,Y}Q_Y^{-1}Q_{Y,X}H\T  \hst \mbox{by (\ref{step_1}), (\ref{step_2}) }\\
&= Q_XH\T - Q_XC\T Q_Y^{-1}CQ_XH\T   \hst \mbox{by  (\ref{eq:sideInfo})-(\ref{prob_9})}    \\ \nonumber
&= \Big(Q_X - Q_XC\T Q_Y^{-1}CQ_X\Big)H\T \\
&=  Q_{X|Y}H\T . \label{eq:condBL}
\end{align}
\begin{align}
\Cov(\widehat{X},\widehat{X}|Y) & \tri{\bf E}  \Big\{ \Big(\widehat{X} - {\bf E}\Big(\widehat{X}\Big|Y\Big)\Big)\Big(\widehat{X} -{\bf E}\Big(\widehat{X}\Big|Y\Big)\Big)\T \Big\}\\ \nonumber
&= HQ_X H\T +Q_W -H Q_{X,Y} Q_Y^{-1}Q_{Y,X}H\T  \hst \mbox{by (\ref{step_3})}\\
&= HQ_XH\T +Q_W - HQ_XC\T Q_Y^{-1}CQ_XH\T   \hst \mbox{by (\ref{eq:sideInfo})-(\ref{prob_9})}  \\ \nonumber
&= H\Big(Q_X-Q_XC\T Q_Y^{-1}CQ_X\Big)H\T + Q_W  \\ 
&=  \label{eq:condBR}H Q_{X|Y}H\T +Q_W . 
\end{align}
By Condition 2 and    (\ref{eq:condBL}) and (\ref{eq:condBR}) then 
\begin{align}
 &\Cov(X,\widehat{X}|Y) \Cov(\widehat{X},\widehat{X}|Y)^{-1} = I_{n_x}\\ \nonumber
&\Longrightarrow \hso Q_{X|Y}H\T\Big(H Q_{X|Y}H\T +Q_W \Big)^{-1} = I_{n_x}\\ 
&\Longrightarrow \hso  Q_W = Q_{X|Y}H\T - H\Sigma_{X|Y}H\T\\ 
&\Longrightarrow\hso  Q_W = \Big(I_{n_x} - H\Big)Q_{X|Y} H\T . \label{eq:KW}
\end{align}
Now, we determine $H$ as follows.
\begin{align}
\Sigma_{\Delta}  \triangleq& \Cov(X,X|Y\widehat{X})\\
 = & \Cov(X,X|Y)-\Cov(X,\widehat{X}|Y)\Cov(\widehat{X},\widehat{X}|Y)^{-1} \Cov(X,\widehat{X}|Y)\T, \ \ \ \ \mbox{by prop.~\ref{prop_cg}, (\ref{eq:mean22})}  \\
 =& \Cov(X,X|Y)- \Cov(X,\widehat{X}|Y)\T, \hso \mbox{by  ( \ref{eq:condB})}\\
 =&Q_{X|Y} - HQ_{X|Y}, \hso \mbox{by (\ref{eq:condBL})}. \label{eq:sigmas} \\
&\Longrightarrow \hso  H Q_{X|Y} = Q_{X|Y} -  \Sigma_{\Delta} \\
& \Longrightarrow \hso H = I-\Sigma_{\Delta} Q_{X|Y}^{-1}  \label{H} 
\end{align}
 Hence, $H$ is obtained. Moreover, $Q_W$ is obtained by  substituting (\ref{H}) into (\ref{eq:KW}). From the above specification of parameters $(H,G,Q_W)$ then the realization (\ref{eq:realization})-(\ref{eq:realization_nn}) follows. From the realization (\ref{eq:realization})-(\ref{eq:realization_nn}) it then follows the property ${\bf P}_{X|\widehat{X},Y}={\bf P}_{X|\widehat{X}}-as$. Moreover, (\ref{eq:optiProbl})-(\ref{eq:optiProbl_n}) are obtained from the realization.

\subsection{Proof of Corollary~\ref{cor:sp_rep}}
\label{app_C}
(a) This part is a special case of a related statement in  \cite{charalambous-charalambous-kourtellaris-vanschuppen-2020}. However, we include it for completeness. By  linear algebra \cite{Horn:2013},  given two matrices $A \in {\cal S}_{+}^{k \times k} , B \in {\cal S}_{+}^{k \times k}$, then the following statements are equivalent: (1) $AB$ is normal, (2) $AB\succeq 0$, where $AB$ normal means $(AB) (AB)\T= (AB)\T (AB)$. Note that $AB$ is normal if and only if  $AB=BA$, i.e., commute. Let $A= U_A D_A U_A\T, B=U_B D_B U_B\T, U_A U_A\T=I_k, U_B U_B\T=I_k$, i.e., there exists a spectral representation of $A, B$  in  terms of unitary matrices $U_A, U_B$ and diagonal matrices $D_A, D_B$. Then, $AB \succeq 0$ if and only if the matrices $A$ and $B$ commute, i.e., $AB =BA$, and $A$ and $B$ commute if and only if  $U_A=U_B$.  \\
Suppose (\ref{suff_c}) holds.  Letting $A=Q_{X|Y}, B=\Sigma_\Delta$, then $A = U_{A} D_{A} U_{A}\T,  
B = U_{B} D_{B} U_{B}\T, U_{A}U_{A}\T=I_{n_x}, U_{B}U_{B}\T=I_{n_x}, U_{A}=U_{B}$. Since  $Q_{X|Y}^{-1}=A^{-1}=U_{A} D_{A}^{-1} U_{A}\T,$  then  $\Sigma_{\Delta} Q_{X|Y}^{-1}= Q_{X|Y}^{-1}\Sigma_{\Delta}$, i.e., they commute. Hence,
\begin{align}
H_t\T=&(I_{n_x}- (\Sigma_{\Delta} Q_{X|Y}^{-1})\T= I_{n_x}- (Q_{X|Y}^{-1})\T \Sigma_{\Delta}\T= I_{n_x}- Q_{X|Y}^{-1} \Sigma_{\Delta} \nonumber \\
=& I_{n_x}-  \Sigma_{\Delta}Q_{X|Y}^{-1}=H   \ \ \ \ \mbox{since $Q_{X|Y}$ and $\Sigma_{\Delta}$ commute.}
\label{com_1}
\end{align}
By the definition of $Q_W$ given by (\ref{eq:realization_nn}) we  have 
\begin{align}
Q_{W} =\Sigma_{\Delta}H\T  =Q_{W}\T= H \Sigma_{\Delta}.
\label{inv_com}
\end{align}
Substituting (\ref{com_1}) into (\ref{inv_com}), then 
\begin{align}
 Q_{W} =\Sigma_{\Delta} H.
\end{align}
Hence, $\{\Sigma_{\Delta}, \Sigma_{X|Y}, H, Q_{W}\}$ are all elements of ${\cal S}_{+}^{p \times p}$ having a spectral decomposition wrt the same unitary matrix $UU\T =I_{n_x}$.
%
%
%
%
%

\section{Acknowledgements}
\par This work was supported in parts by the European Regional Development Fund and the Republic of Cyprus through the Research Promotion Foundation (Project: EXCELLENCE/1216/0365).
\label{Bibliography}
\bibliographystyle{IEEEtran}
\bibliography{references}

\begin{thebibliography}{10}
\providecommand{\url}[1]{#1}
\csname url@samestyle\endcsname
\providecommand{\newblock}{\relax}
\providecommand{\bibinfo}[2]{#2}
\providecommand{\BIBentrySTDinterwordspacing}{\spaceskip=0pt\relax}
\providecommand{\BIBentryALTinterwordstretchfactor}{4}
\providecommand{\BIBentryALTinterwordspacing}{\spaceskip=\fontdimen2\font plus
\BIBentryALTinterwordstretchfactor\fontdimen3\font minus
  \fontdimen4\font\relax}
\providecommand{\BIBforeignlanguage}[2]{{%
\expandafter\ifx\csname l@#1\endcsname\relax
\typeout{** WARNING: IEEEtran.bst: No hyphenation pattern has been}%
\typeout{** loaded for the language `#1'. Using the pattern for}%
\typeout{** the default language instead.}%
\else
\language=\csname l@#1\endcsname
\fi
#2}}
\providecommand{\BIBdecl}{\relax}
\BIBdecl

\bibitem{wyner1978}
A.~Wyner, ``The rate-distortion function for source coding with side
  information at the decoder-ii: General sources,'' \emph{Information and
  Control}, vol.~38, no.~1, pp. 60 -- 80, 1978.

\bibitem{tian-chen2009}
C.~{Tian} and J.~{Chen}, ``Remote vector gaussian source coding with decoder
  side information under mutual information and distortion constraints,''
  \emph{IEEE Transactions on Information Theory}, vol.~55, no.~10, pp.
  4676--4680, 2009.

\bibitem{Zahedi-Ostegraard-2014}
A.~{Zahedi}, J.~{Ostergaard}, S.~H. {Jensen}, P.~{Naylor}, and S.~{Bech},
  ``Distributed remote vector gaussian source coding with covariance distortion
  constraints,'' in \emph{2014 IEEE International Symposium on Information
  Theory}, 2014, pp. 586--590.

\bibitem{wyner-ziv1976}
A.~{Wyner} and J.~{Ziv}, ``The rate-distortion function for source coding with
  side information at the decoder,'' \emph{IEEE Transactions on Information
  Theory}, vol.~22, no.~1, pp. 1--10, January 1976.

\bibitem{berger:1971}
T.~Berger, \emph{Rate Distortion Theory:~A Mathematical Basis for Data
  Compression}.\hskip 1em plus 0.5em minus 0.4em\relax Englewood Cliffs, NJ:
  Prentice-Hall, 1971.

\bibitem{draper-wornell2004}
S.~C. {Draper} and G.~W. {Wornell}, ``Side information aware coding strategies
  for sensor networks,'' \emph{IEEE Journal on Selected Areas in
  Communications}, vol.~22, no.~6, pp. 966--976, 2004.

\bibitem{gray1973}
R.~M. Gray, ``A new class of lower bounds to information rates of stationary
  sources via conditional rate-distortion functions,'' \emph{IEEE Transactions
  on Information Theory}, vol.~19, no.~4, pp. 480--489, July 1973.

\bibitem{Oohama1997}
Y.~{Oohama}, ``{G}aussian multiterminal source coding,'' \emph{IEEE
  Transactions on Information Theory}, vol.~43, no.~6, pp. 1912--1923, Nov
  1997.

\bibitem{Oohama2005}
------, ``Rate-distortion theory for {G}aussian multiterminal source coding
  systems with several side informations at the decoder,'' \emph{IEEE
  Transactions on Information Theory}, vol.~51, no.~7, pp. 2577--2593, July
  2005.

\bibitem{ViswanathanCEO1997}
H.~{Viswanathan} and T.~{Berger}, ``The quadratic {G}aussian ceo problem,''
  \emph{IEEE Transactions on Information Theory}, vol.~43, no.~5, pp.
  1549--1559, Sep. 1997.

\bibitem{SUlukus2012}
E.~{Ekrem} and S.~{Ulukus}, ``An outer bound for the vector {G}aussian ceo
  problem,'' in \emph{2012 IEEE International Symposium on Information Theory
  Proceedings}, July 2012, pp. 576--580.

\bibitem{JunChen2014}
J.~{Wang} and J.~{Chen}, ``Vector {G}aussian multiterminal source coding,''
  \emph{IEEE Transactions on Information Theory}, vol.~60, no.~9, pp.
  5533--5552, Sep. 2014.

\bibitem{Horn:2013}
R.~A. Horn and C.~R. Johnson, Eds., \emph{Matrix Analysis}, 2nd~ed.\hskip 1em
  plus 0.5em minus 0.4em\relax New York, NY, USA: Cambridge University Press,
  2013.

\bibitem{Gallager:1968}
R.~G. Gallager, \emph{Information Theory and Reliable Communication}.\hskip 1em
  plus 0.5em minus 0.4em\relax New York, NY, USA: John Wiley \& Sons, Inc.,
  1968.

\bibitem{pinsker:1964}
M.~S. Pinsker, \emph{The information stability of {G}aussian random variables
  and processes}, 1964, vol. 133, no.~1.

\bibitem{charalambous-charalambous-kourtellaris-vanschuppen-2020}
C.~Charalambous, T.~Charalambous, C.~Kourtellaris, and J.~van Schuppen,
  ``Structural properties of nonanticipatory epsilon entropy of multivariate
  {G}aussian sources,'' in \emph{2020 IEEE International Symposium on
  Information Theory}, 2020.

\bibitem{gorbunov-pinsker1974}
A.~K. Gorbunov and M.~S. Pinsker, ``Prognostic epsilon entropy of a {G}aussian
  message and a {G}aussian source,'' \emph{Problems of Information
  Transmission}, vol.~10, no.~2, pp. 93--109, 1974.

\bibitem{Ihara:1993}
S.~Ihara, \emph{{Information theory for continuous systems}}.\hskip 1em plus
  0.5em minus 0.4em\relax Singapore: World Scientific, 1993.

\end{thebibliography}
\end{document}